\tikzset{near start abs/.style={xshift=1cm},
	node/.style={circle,draw},
	nodeone/.style={circle,draw,gray, line width=0.7mm},
	nodered/.style={circle,draw,red, line width=0.7mm}}
\newdimen\commentwd
\let\oldtcp\tcp
\def\tcp*[#1]#2{
	\setbox0\hbox{#2}%
	\ifdim\wd\z@>\commentwd\global\commentwd\wd\z@\fi
	\oldtcp*[r]{\leavevmode\hbox to \commentwd{\box0\hfill}}}
\let\oldalgorithm\algorithm
\def\algorithm{\oldalgorithm
	\global\commentwd\z@
	\expandafter\ifx\csname commentwd@\romannumeral\csname c@\algocf@float\endcsname\endcsname\relax\else
	\global\commentwd\csname commentwd@\romannumeral\csname c@\algocf@float\endcsname\endcsname
	\fi
}
\let\oldendalgorithm\endalgorithm
\def\endalgorithm{\oldendalgorithm
	\immediate\write\@auxout{\gdef\expandafter\string\csname commentwd@\romannumeral\csname c@\algocf@float\endcsname\endcsname{%
			\the\commentwd}}}
\renewcommand{\epsilon}{\varepsilon}
\renewcommand{\phi}{\varphi}
\newcommand{\BWT}{\textrm{BWT}}
\newcommand{\dol}{\textit{dol}}
\newcommand{\shift}{\textit{shift}}
\newcommand{\unshift}{\textit{unshift}}
\newcommand{\lex}{\textrm{lex}}
\newcommand{\SigmaD}{\Sigma^*_{\$}}
\newcommand{\sgn}{\textit{sgn}}
\newcommand{\Oh}{\mathcal{O}}
\newenvironment{proof}{{\em Proof.}}{\hspace*{\fill}$\Box$\par\vspace{3mm}}
\let\oldnl\nl
\newcommand{\nonl}{\renewcommand{\nl}{\let\nl\oldnl}}
\newcommand\fat[1]{\ThisStyle{\hstretch{1.2}{\ooalign{%
				\kern.46pt$\SavedStyle#1$\cr\kern.33pt$\SavedStyle#1$\cr%
				\kern.2pt$\SavedStyle#1$\cr$\SavedStyle#1$}}}}
\def\a{{\tt a}}
\def\b{{\tt b}}
\def\n{{\tt n}}
\def\links{{\textrm{left}}}
\def\rechts{{\textrm{right}}}
\def\sgn{{\textrm{sgn}}}
\newtheorem{theorem}{Theorem}
\newtheorem{proposition}{Proposition}
\newtheorem{lemma}{Lemma}
\newtheorem{corollary}{Corollary}
\newtheorem{definition}{Definition}
\newtheorem{exa}{Example}
\newcommand{\red}{\textcolor{red}}
\newcommand{\powerFormbis}{(1, e_1, \ldots, e_m)(2, e_1+1, \ldots, e_m+1)\ldots(c, e_1+c-1, \ldots, e_m+c-1)}
\def\ps@pprintTitle{%
	\let\@oddhead\@empty
	\let\@evenhead\@empty
	\def\@oddfoot{\footnotesize\itshape
		{Published in Theoretical Computer Science} \hfill\today}%
	\let\@evenfoot\@oddfoot
}
\begin{document}
	
	\begin{frontmatter}
	\title{When a Dollar Makes a BWT}
	\author{Sara Giuliani}
	\ead{sara.giuliani\_01@univr.it}
	\author{Zsuzsanna Lipt\'ak\corref{cor1}}
	\ead{zsuzsanna.liptak@univr.it}
	\author{Francesco Masillo}
	\ead{francesco.masillo@studenti.univr.it}
	\author{Romeo Rizzi}
	\ead{romeo.rizzi@univr.it}
	
	\cortext[cor1]{Corresponding author}
	
	\address{Department of Computer Science, University of Verona, Italy}

\begin{abstract}
	
	The Burrows-Wheeler-Transform (BWT) is a reversible string transformation which plays a central role in text compression and is fundamental in many modern bioinformatics applications. The BWT is a permutation of the characters, which is in general better compressible and allows to answer several different query types more efficiently than the original string. 
	
	It is easy to see that not every string is a BWT image, and exact characterizations of BWT images are known. We investigate a related combinatorial question. In many applications, a sentinel character \$ is added to mark the end of the string, and thus the BWT of a string ending with $\$$ contains exactly one \$-character. Given a string $w$, we ask in which positions, if any, the \$-character can be inserted to turn $w$ into the BWT image of a word ending with $\$$. We show that this depends only on the standard permutation of $w$ and present a $\mathcal{O}(n \log n)$-time algorithm for identifying all such positions, improving on the naive quadratic time algorithm. We also give a combinatorial characterization of such positions and develop bounds on their number and value. This is an extended version of [Giuliani et al.\ ICTCS 2019].
\end{abstract}

\begin{keyword} combinatorics on words, Burrows-Wheeler-Transform, permutations, splay trees, efficient algorithms
\end{keyword}

\end{frontmatter}

\section{Introduction}\label{sec:introduction}

The Burrows-Wheeler-Transform (BWT), introduced by Burrows and Wheeler in 1994~\cite{BurrowsWheeler94}, is a reversible string transformation which is fundamental in string compression and is at the core of many of the most frequently used bioinformatics tools~\cite{bwa,bowtie,soap2}. The BWT, a permutation of the characters of the original string, is particularly well compressible if the original string has many repeated substrings, thus making it highly relevant for natural language texts and for biological sequence data. This is due to what is sometimes referred to as {\em clustering effect}~\cite{RosoneS13}: repeated substrings cause equal characters to be grouped together, resulting in longer runs of the same character than in the original string, and as a result, in higher compressibility.

\medskip

Given a word (or string) %
$v$ over a finite ordered alphabet, the BWT is a permutation of the characters of $v$,  such that position $i$ contains the last character of the $i$th ranked rotation of $v$, with respect to lexicographic order, among all rotations of $v$. For example, the BWT of the word ${\tt banana}$ is ${\tt nnbaaa}$, see Fig.~\ref{fig:bwt-ex} (left).  A fundamental property of the BWT is that it is {\em reversible}: Given a BWT image $w$, a word $v$ such that $\BWT(v)=w$ can be found in linear time in the length of $w$, and $v$ is unique up to rotation~\cite{BurrowsWheeler94}. 

\begin{figure}
	\centering
	\begin{minipage}[t]{4cm}
		\begin{tabular}{c@{\hspace{.5cm}}c}
			rotations &\\
			of {\tt banana} & \BWT\\
			\hline
			{\tt abanan} & \n\\
			{\tt anaban} & \n\\
			{\tt ananab} & \b\\
			{\tt banana} & \a\\
			{\tt nabana} & \a\\
			{\tt nanaba} & \a
		\end{tabular}
	\end{minipage}
	\begin{minipage}[t]{4cm}
		\begin{tabular}{c@{\hspace{.5cm}}c}
			rotations & \\
			of ${\tt nanana}$ & \BWT\\
			\hline
			{\tt ananan} & \n\\
			{\tt ananan} & \n\\
			{\tt ananan} & \n\\
			{\tt nanana} & \a\\
			{\tt nanana} & \a\\
			{\tt nanana} & \a
		\end{tabular}
	\end{minipage}
	\begin{minipage}[t]{4cm}
		\begin{tabular}{c@{\hspace{.5cm}}c}
			rotations & \\
			of ${\tt nanana\$}$ & \BWT\\
			\hline
			{\tt \$nanana} & \a\\
			{\tt a\$nanan} & \n\\
			{\tt ana\$nan} & \n\\
			{\tt anana\$n} & \n\\
			{\tt na\$nana} & \a\\
			{\tt nana\$na} & \a\\
			{\tt nanana\$} & \$
		\end{tabular}
	\end{minipage}
	\caption{BWT of the strings {\tt banana, nanana} and {\tt nanana\$}. \label{fig:bwt-ex}}
\end{figure}

The BWT is defined for every word, even if not all of its rotations are distinct; this is the case, for example, with the word {\tt nanana}, whose  BWT is {\tt nnnaaa}, see Fig.~\ref{fig:bwt-ex} (center). 
(Words for which all rotations are distinct are called {\em primitive}.) %
On the other hand, not every word is a  BWT image, i.e.\ not every word is the BWT of some word. 
For example, {\tt banana} is not the BWT of any word. 

It can be decided algorithmically whether a given word $w$ is a BWT image, by slightly modifying the above-mentioned reversal algorithm: if $w$ is not a BWT image, then the algorithm terminates in $\Oh(n)$ time with an error message, where $n$ is the length of $w$.  Combinatorial characterizations of BWT images are also known~\cite{MantaciRS03,LShur11}: whether $w$ is a BWT image, depends on the number and characteristics of the cycles of its {\em standard permutation} (see Sec.~\ref{sec:basics}). In particular, $w$ is the BWT image of a primitive word if and only if its standard permutation is cyclic. Moreover, a necessary condition is that the runlengths of $w$ be co-prime~\cite{LShur11}. 

Another fundamental result in this context concerns the inverse of the standard permutation: in~\cite{CrochemoreDP05} it was shown that a permutation $\pi$ is the inverse of the standard permutation of the BWT of a primitive word if and only if $\pi$ is cyclic and it has at most $|\Sigma|-1$ descents. This result is a special case of a more general theorem by Gessel and Reutenauer~\cite{GesselR93}. 

In many situations, it is convenient to append a sentinel character $\$$ to mark the end of the word $v$; this sentinel character is defined to be lexicographically smaller than all characters from the given alphabet. For example, $\BWT({\tt nanana\$}) = {\tt annnaa\$}$,  see Fig.~\ref{fig:bwt-ex} (right). Adding the sentinel character, among other things, has the effect that the lexicographic order of the rotations of the word equals the lexicographic order of its suffixes, in this way allowing the use of suffix sorting in BWT construction. 
Clearly, all rotations of $v\$$ are distinct, thus the inverse of the BWT becomes unique, due to the condition that the sentinel character must be at the end of the word. In other words, given a word $w$ with exactly one occurrence of $\$$, there exists at most one word $v$ such that $w = \BWT(v\$)$. 

In this paper, we ask the following combinatorial question: Given a word $w$ over alphabet $\Sigma$, in which positions, if any, can we insert the $\$$-character such that the resulting word is the BWT image of some word $v\$$? We call such positions {\em nice.} Returning to our earlier examples: there are two nice positions for the word {\tt annnaa}, namely $3$ and $7$: {\tt an\$nnaa} and {\tt annnaa\$} are BWT images. However, there is none for the word {\tt banana}: in no position can $\$$ be inserted  such that the resulting word becomes a BWT image. 

We are interested both in characterizing nice positions for a given word $w$, and in computing them. Note that using the BWT reversal algorithm, these positions can be computed naively in $\Oh(n^2)$ time. Our results are as follows: 

\begin{itemize}
	\item we show that the question which positions are nice depends only on the standard permutation of $w$; 
	\item we present an $\Oh(n \log n)$ time algorithm to compute all nice positions of an $n$-length word $w$; and 
	\item we give a full combinatorial characterization of nice positions, via certain subsets which form what we call {\em pseudo-cycles} of the standard permutation of $w$. 
	
\end{itemize}

\subsection{Related work}

The BWT has been subject of intense research in the last two decades, from compression~\cite{Manzini01,FerraginaGMS05,KaplanV07,KaplanLV07}, algorithmic~\cite{CrochemoreGKL15,LouzaGT17,PolicritiP18}, and combinatorial~\cite{GiancarloRS07,RestivoR11,DaykinGGLLLP18} points of view (mentioning just a tiny selection from the recent literature). It has also been extended in several ways. One of these, the extended BWT, generalizes the BWT to a multiset of strings~\cite{MantaciRRS07,MantaciRRS08,BonomoMRRS14}, with successful applications to several  bioinformatics problems~\cite{MantaciRRS08,CoxJRS12,PrezzaPSR19}. A very recent development is the introduction of Wheeler graphs~\cite{GagieMS17}, a generalization of a fundamental underlying property of the BWT to data other than strings.%

There has been much recent work on inferring strings from different data structures built on strings (sometimes called reverse engineering), and/or just deciding whether such a string exists, given the data structure itself. For instance, this question has been studied for directed acyclic word graphs (DAWGs) and suffix arrays~\cite{BannaiIST03}, prefix tables~\cite{ClementCR09}, LCP-arrays~\cite{KarkkainenPP17}, Lyndon arrays~\cite{DaykinFHIS18}, and suffix trees~\cite{IIBT14,StarikovskayaV15,CazauxR14}. A number of papers study which permutations are suffix arrays of some string~\cite{HeMR05,SchurmannS08,KucherovTV13}, giving a full characterization in terms of the standard permutation. 

The analogous question for BWT images was answered fully in~\cite{MantaciRS03} for strings over binary alphabets, and in~\cite{LShur11} for strings over general alphabets. In~\cite{LShur11}, the authors also asked the question which strings can be ''blown up'' to become a BWT: Given the runs (blocks of equal characters) in $w$, when does a BWT image exist whose runs follow the same order, but each run can be of the same length or longer than the corresponding one in $w$? The authors fully characterize such strings, showing that the non-existence of a global ascent in $w$ is a necessary and sufficient condition. %

Another work treating a related question to ours is~\cite{MantaciRRRS17}, where the authors ask and partially answer the question of which strings are fixpoints of the BWT. Finally, a topic that has received considerable interest are words whose BWT has as few runs as possible: in~\cite{SimpsonP08,RestivoR09}, words with so-called {\em simple} BWTs were studied, i.e.\ words whose BWT has the form $a_k^{n_k}a_{k-1}^{n_{k-1}}\cdots a_1^{n_1}$, over an alphabet $\Sigma = \{a_1<a_2<\ldots < a_k\}$. The authors of~\cite{FerencziZ13} give a full characterization of words with {\em fully clustering} BWTs, i.e.\ of words whose BWT has the same number of runs as there are distinct characters, but not necessarily in decreasing order.

\subsubsection*{Overview}
The paper is organized as follows. In Section~\ref{sec:basics} we provide the necessary background and terminology. Next we present our algorithm for computing all nice positions of a string $w$ (Section~\ref{sec:algo}). In Section~\ref{sec:characterization}, we give a complete characterization of nice positions, followed in Section~\ref{sec:parity} by some bounds on the number and value of nice positions of a word. In Section~\ref{sec:results} we give some experimental results. 
We close with a discussion and outlook in Section~\ref{sec:conclusion}. Additional examples, extensive statistics, and further technical details are contained in the Appendix. 

%
%

\section{Basics}\label{sec:basics}

In this section we give the necessary terminology and notation. 

\subsection{Words} %
Let $\Sigma$ be a finite ordered alphabet. A {\em word} (or {\em string}) %
over $\Sigma$ is a finite sequence of elements from $\Sigma$ (also called {\em characters}). We write words as $w=w_1\cdots w_n$, with $w_i$ the $i$th character, and $|w|=n$ its {\em length}. Note that we index words from $1$. The {\em empty string} is the only string of length $0$ and is denoted $\epsilon$. The set of all words over $\Sigma$ is denoted $\Sigma^*$. The concatenation $w=uv$ of two words $u,v$ is defined by $w = u_1\cdots u_{|u|}v_1\cdots v_{|v|}$. Let $w=uxv$, with $u,x,v$ possibly empty. Then $u$ is called a {\em prefix}, $x$ a {\em factor} (or {\em substring}), and $v$ a {\em suffix} of $w$. A factor (prefix, suffix) $u$ of $w$ is called {\em proper} if $u\neq w$. For a word $u$ and an integer $k\geq 1$, $u^k = u\cdots u$ denotes the $k$-fold concatenation of $u$. A word $w$ is called a {\em primitive} if $w=u^k$ implies $k=1$. A {\em run} in a word $w$ is a maximal substring of the form $a^k$ for some $a\in\Sigma$, and a {\em runlength} is the length of such a maximal substring.

Two words $w, w'$ are called {\em conjugates} if there exist words $u,v$, possibly empty, such that $w=uv$ and $w'=vu$. Conjugacy is an equivalence relation, and the set of all words which are conjugates of $w$ constitute $w$'s {\em conjugacy class}. Given a word $w=w_1\cdots w_n$, the {\em $i$'th rotation} of $w$ is $w_i\cdots w_nw_1\cdots w_{i-1}$. Clearly, two words are conjugates if and only if one is a rotation of the other. 

The set of all words over $\Sigma$ is totally ordered by the {\em lexicographic order:} Let $v,w \in \Sigma^*$, then $v \leq_{\lex} w$ if $v$ is a prefix of $w$, or there exists an index $j$ s.t.\ for all $i<j$, $v_i = w_i$, and $v_j < w_j$ according to the order on $\Sigma$. A word $w$ of length $n$ is called {\em Lyndon} if it is lexicographically strictly smaller than all of its conjugates $v\neq w$. 

In the context of string data structures, it is often necessary to mark the end of words in a special way. To this end, let $\$ \not\in \Sigma$ be a new character, called {\em sentinel}, and set $\$ < a$ for all $a\in \Sigma$. Let $\SigmaD$ denote the set of all words over $\Sigma$ with an additional $\$$ at the end. The mapping $w\mapsto w\$$ is a bijection from $\Sigma^*$ to $\SigmaD$. Clearly, every word in $\SigmaD$ is primitive.

\subsection{Permutations} 
Let $n$ be a positive integer. A {\em permutation} is %
a bijection from $\{1,2,\ldots,n\}$ to itself. 
Permutations are often written using the two-line notation $\bigl(\begin{smallmatrix} 1 & 2 & \ldots & n \\ \pi(1) & \pi(2) & \ldots & \pi(n) \end{smallmatrix}\bigr) $. A {\em cycle} in a permutation $\pi$ is a minimal subset $C \subseteq \{1,\ldots,n\}$ with the property that $\pi(C)=C$. A cycle of length $1$ is called a {\em fixpoint}, and one of length $2$ a {\em transposition}.  Every permutation can be decomposed uniquely into disjoint cycles, giving rise to the {\em cycle representation} of a permutation $\pi$, i.e.\ as a composition of the cycles in the cycle decomposition of $\pi$. %
For example, $\pi = \bigl( \begin{smallmatrix} 1 & 2 & 3 & 4 & 5 & 6 \\ 4 & 2 & 5 & 6 & 3 & 1 \end{smallmatrix}\bigr) = (1 \: 4 \: 6)(2)(3 \: 5)$. Permutations whose cycle decomposition consists of just one cycle are called {\em cyclic}.

A fundamental theorem about permutations says that every permutation $\pi$ can be written as a product (composition) of transpositions, and that the number of any sequence of transpositions whose product is $\pi$ is either always even or always odd: this is called the {\em parity} of the permutation. The {\em sign} $\sgn(\pi)$ of a permutation $\pi$ is defined as $1$ if $\pi$ is even, and as $(-1)$ if it is odd; equivalently, $\sgn(\pi) = (-1)^m$, where $\pi = \prod_{i=1}^m \tau_i$ for some transpositions $\tau_i$. 

The sign of a cycle of $m$ elements is $(-1)^{m-1}$, since any cycle $C= (x_1, \ldots, x_m)$ can be written as $C  = (x_1, x_2)(x_2,x_3)\cdots (x_{m-1},x_m)$, i.e.\ $C$ is the product of $m-1$ transpositions. Moreover, if $\pi = \prod_{i=1}^c C_i$ is the cycle decomposition of permutation $\pi$ of $\{1,\ldots, n\}$, then $\sgn(\pi) = \prod_{i=1}^c \sgn(C_i) = (-1)^{n - c}$. 
For more details on permutations, see~\cite{Bona12}.

For a finite set of positive integers $X$ we denote by $\min X=\min\{x \mid x\in X\}$, $\max X=\max\{x \mid x\in X\}$, and for $y \in \mathbb{Z}$, $X+y =\{x+y \mid x \in X\}$. 

Finally, given a word $w$, the {\em standard permutation} of $w$, denoted $\sigma_w$, is the permutation defined by: $\sigma_w(i) < \sigma_w(j)$ if and only if either $w_i < w_j$, or $w_i = w_j$ and $i<j$. For example, the standard permutation of {\tt banana} is $\bigl( \begin{smallmatrix} 1 & 2 & 3 & 4 & 5 & 6 \\ 4 & 1 & 5 & 2 & 6 & 3 \end{smallmatrix}\bigr)$.

\subsection{Burrows-Wheeler-Transform} 
It is easiest to define the Burrows-Wheeler-Transform (BWT)~\cite{BurrowsWheeler94} via a construction: Let $v\in \Sigma^*$ with $|v|=n>0$, and let $M$ be an $n \times n$-matrix containing as rows all $n$ rotations of $v$ (not necessarily distinct) in lexicographic order (see Fig.~\ref{fig:bwt-ex}). Then $w=\BWT(v)$ is the last column of $M$. If $v$ is primitive, then this is equivalent to saying that $w=w_1\cdots w_n$ such that $w_i$ equals the last character of the $j$th rotation of $v$, where the $j$th rotation has rank $i$ among all rotations of $v$ w.r.t.\ lexicographic order. %

Linear-time construction algorithms of the BWT are well-known~\cite{RosoneS13}, and the BWT is {\em reversible}: Given a word $w$ which is the BWT of some word $v$, $v$ can be recovered from $w=\BWT(v)$, uniquely up to its conjugacy class, again in linear time. 

We briefly recap the algorithm for the case where $w$ is the BWT of a primitive word $v$, since this is the case we will need in the following.  The algorithm is based on the following insights about the matrix $M$: (1) the last character in each row is the one {\em preceding} the first character in the same row, (2) since the rows are rotations of the same word, every character in the last column occurs also in the first column, (3) the first column lists the characters of $v$ in lexicographical order, and (4) the $i$th occurrence of character $c$ in the last column of $M$ corresponds to the $i$th occurrence of character $c$ in the first column, more precisely: If $j$ and $k$ are the positions of the $i$th $c$ in the last and first columns respectively, and the $j$th row of matrix $M$ is $x_1\cdots x_{n-1}x_n$, then the $k$th row is $x_nx_1\cdots x_{n-1}$. This last property can be used to define a mapping from the last to the first column, called {\em LF-mapping}~\cite{BurrowsWheeler94}, which assigns to each position $j$ the corresponding position $k$ in the first column---this is, in fact, the standard permutation of the last column. Now, given $w$ which is the BWT of a word, such a word $v$ can be reconstructed, from last character to first, via iteratively applying the standard permutation $\sigma_w$, and noting that $w_{\sigma_w(i)}$ is the character preceding $w_i$ in $v$. In other words, $w_1=v_n$ and $w_{\sigma_w^i(1)} = v_{n-i}$ for $1\leq i \leq n-1$. 

\subsection{Problem statement and first results} 

Let $w\in \Sigma^*$, $w= w_1\cdots w_n$ and $1\leq i \leq n+1$. We denote by $\dol(w,i)$ the $(n+1)$-length word $w_1 \cdots w_{i-1}\$w_i \cdots w_n$, i.e.\ the word which results from inserting $\$$ into $w$ in position $i$. 
Whenever $w$ is clear from the context, we denote by $\sigma_i$ the standard permutation of $\dol(w,i)$.

\begin{definition}\label{def:nice}
Let $w \in \Sigma^*$, $|w|=n$ and let $1\leq i \leq n+1$. We call position $i$ {\em nice} if $\dol(w,i) \in \BWT(\SigmaD)$, i.e.\ if there exists a word $v\in \Sigma^*$, $|v|=n$, such that $\BWT(v\$)=\dol(w,i)$. 
\end{definition}

We can now state the problem we treat in this paper: 

\medskip

\begin{quote}{\bf Dollar-BWT Problem: }
	Given a word $w \in \Sigma^*$, $|w|=n$, compute all nice positions of $w$.
\end{quote}

\medskip

The following statement was proved for binary alphabets in~\cite{MantaciRS03}, and stated in generalized form for larger alphabets in~\cite{LShur11}: 

\begin{theorem}[Mantaci et al., 2003~\cite{MantaciRS03}, Likhomanov and Shur, 2011~\cite{LShur11}]\label{thm:MRS03}
	For a string $v\in \Sigma^*,$ $\BWT(v) =  a_1^c\cdots a_m^c$ for some $c\geq 1$, if and only if $v = u^c$ with $\BWT(u) = a_1\cdots a_m$. 
\end{theorem}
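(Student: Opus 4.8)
The plan is to prove both directions by exploiting the structure of the BWT matrix and the reversal algorithm recapped above, together with the combinatorial fact that $\BWT(v)$ depends only on the conjugacy class of $v$.

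First I would prove the easy direction: if $v = u^c$ with $\BWT(u) = a_1 \cdots a_m$, then $\BWT(v) = a_1^c \cdots a_m^c$. The key observation is that the multiset of rotations of $u^c$ consists of the $m$ distinct rotations of $u$ (assuming $u$ primitive; the general case reduces to this), each appearing exactly $c$ times, because rotating $u^c$ by $j$ positions gives $(u_{j+1} \cdots u_m u_1 \cdots u_j)^c$, which depends only on $j \bmod m$. Sorting these $mc$ rotations lexicographically therefore groups the $c$ copies of each rotation of $u$ together, in the same relative order as the rotations of $u$ itself. Reading off last characters then yields each character $a_i = \BWT(u)_i$ repeated $c$ times, i.e.\ $a_1^c \cdots a_m^c$. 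A small subtlety: when $u$ is itself non-primitive, one writes $u = p^k$ with $p$ primitive and argues on $p$; this does not affect the block structure.

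Next, the harder direction: assume $\BWT(v) = w := a_1^c \cdots a_m^c$ and show $v = u^c$ with $\BWT(u) = a_1 \cdots a_m$. The natural approach is to analyze the standard permutation $\sigma_w$, since by the recap, $v$ is reconstructed by iterating $\sigma_w$ starting from position $1$. Because $w$ has the special run structure $a_1^c \cdots a_m^c$, I would compute $\sigma_w$ explicitly: within the $i$th block of $c$ equal characters, $\sigma_w$ sends the $c$ positions of that block to $c$ consecutive target positions, and the crucial point is a "shift by $c$" phenomenon — $\sigma_w(j+1) = \sigma_w(j) + 1$ whenever $j$ and $j+1$ lie in the same run, while crossing a run boundary also shifts cleanly because the first columns' block boundaries are aligned (all runs have length exactly $c$). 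Making this precise should show that $\sigma_w = \sigma^c$ in a suitable sense: more concretely, that $\sigma_w(j + c) \equiv \sigma_w(j) + c$ in a way that forces the orbit of $1$ under $\sigma_w$ to revisit congruent-mod-$c$ positions in a periodic pattern. From this periodicity of the reconstruction sequence one concludes that $v$, read off as $v_{n-i} = w_{\sigma_w^i(1)}$, is periodic with period $n/c = m$, i.e.\ $v = u^c$ for some length-$m$ word $u$. Finally, applying the already-proved forward direction (or re-running the reconstruction on a single block) identifies $\BWT(u)$ with $a_1 \cdots a_m$.

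The main obstacle I anticipate is handling the boundary between runs in the standard-permutation computation and the degenerate cases where $v$ (equivalently $u$) is non-primitive, so that the matrix $M$ has repeated rows and the clean "LF-mapping is the standard permutation" picture needs care; one must argue that the block-structure of $w$ still forces the alignment of first-column blocks with the runs of the last column. A clean way to sidestep part of this is to induct on $c$, or to invoke Theorem on the structure via the observation that the runlengths of a BWT image must be co-prime (cited earlier) — but the cited co-primality result alone does not give periodicity of $v$, so the explicit permutation analysis seems unavoidable. I would therefore budget most of the effort for carefully establishing the "shift by $c$" identity for $\sigma_w$ and deducing from it the $m$-periodicity of the reconstructed word $v$.
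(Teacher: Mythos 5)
The paper does not actually prove this theorem --- it imports it from \cite{MantaciRS03,LShur11} --- so your attempt can only be judged on its own merits. Your forward direction is fine: the rotations of $u^c$ are exactly the $c$-th powers of the rotations of $u$, each with multiplicity $c$, sorting preserves their relative order, and the reduction to primitive $u$ is harmless. The reverse direction, however, rests on an identity that is false. Take $v=({\tt aabb})^2={\tt aabbaabb}$, so $w=\BWT(v)={\tt bbaabbaa}$, which has the hypothesised form with $c=2$, $m=4$, $a_1a_2a_3a_4={\tt baba}=\BWT({\tt aabb})$; here $\sigma_w(1)=5$ but $\sigma_w(3)=1\neq 7$ (and $1\not\equiv 7 \pmod 8$), so $\sigma_w(j+c)=\sigma_w(j)+c$ fails even in the ``good'' case. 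What is true --- and is what the paper later extracts in Corollary~\ref{coro:formofsigma} --- is the shift-by-\emph{one} relation $\sigma_w(j+1)=\sigma_w(j)+1$ for $j,j+1$ in the same run, together with the fact that each block start $kc+1$ is mapped to a position $\equiv 1\pmod c$ (note also that runs need not have length exactly $c$, only a multiple of $c$). Equivalently, $\sigma_w$ preserves residue classes modulo $c$ and acts on every class through one and the same induced permutation of the $m$ blocks. If one reads your ``$\equiv$'' merely modulo $c$, the claim becomes true but is then too weak: it only says the orbit of $1$ stays inside its residue class, which by itself does not force the $m$-periodicity you go on to assert.

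The second, larger gap is the step ``from this periodicity one concludes that $v$, read off as $v_{n-i}=w_{\sigma_w^i(1)}$, is periodic with period $m$.'' That readout formula is justified (in the paper's recap and in general) only for primitive $v$, whereas in the reverse direction the whole point is that $v$ is imprimitive when $c>1$: the orbit of $1$ under $\sigma_w$ has some length $k\le m$ strictly smaller than $n$, and one must argue, via the last-to-first correspondence for a matrix with repeated rows, that row $1$ of $M$ is invariant under rotation by $k$, hence that $v$ is a proper power. Even granting that, one only obtains $v=p^{e}$ for a primitive $p$ whose length divides $\gcd(k,n)$, and a further argument is needed to conclude $c\mid e$ and to identify $\BWT(p^{e/c})$ with $a_1\cdots a_m$ --- for instance by applying your forward direction to $p$ and invoking the coprimality of the runlengths of the BWT of a primitive word, or by the adjacency/counting arguments of the cited papers. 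You yourself flag the imprimitive case as ``the main obstacle,'' but the proposal leaves exactly this obstacle, which is the real substance of the theorem, unresolved; as written, the reverse direction is a plan with a false pivotal identity and a missing core argument rather than a proof.
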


From this the authors of~\cite{LShur11} obtain the following beautiful result:

\begin{theorem}[Likhomanov and Shur, 2011~\cite{LShur11}] \label{thm:LShur}
	A word $w\in \Sigma^*$ is a  BWT image if and only if the number of cycles of $\sigma_w$ equals the greatest common divisor of the runlengths of $w$. 
\end{theorem}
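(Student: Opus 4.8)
The plan is to reduce the statement to the already-known characterization recalled in the introduction --- namely, that $\sigma_z$ is cyclic if and only if $z$ is the BWT image of a \emph{primitive} word --- by factoring out the greatest common divisor of the runlengths. Write $w$ through its runs as $b_1^{r_1}\cdots b_k^{r_k}$ with $b_j\neq b_{j+1}$, put $g=\gcd(r_1,\ldots,r_k)$, and let $c$ be the number of cycles of $\sigma_w$. Since $g\mid r_j$ for every $j$, the word $\hat w := b_1^{r_1/g}\cdots b_k^{r_k/g}$ is well defined; it has length $m := n/g$, its runlengths have greatest common divisor $1$, and, writing $\hat w = \hat w_1\cdots\hat w_m$, we have $w = \hat w_1^{\,g}\hat w_2^{\,g}\cdots\hat w_m^{\,g}$.

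The core of the argument is to read $\sigma_w$ off from $\sigma_{\hat w}$. First I would index the positions of $w$ by pairs $(i,r)$ with $1\le i\le m$ and $1\le r\le g$, where $(i,r)$ stands for position $(i-1)g+r$; note that every position in the $i$-th block carries the character $\hat w_i$, and that blocks carrying a fixed character occur in $w$ in the same relative order as the corresponding positions of $\hat w$. Unwinding the definition of the standard permutation then yields the identity $\sigma_w(i,r)=(\sigma_{\hat w}(i),r)$: $\sigma_w$ permutes each layer $\{(i,r):1\le i\le m\}$ (with $r$ fixed) exactly as $\sigma_{\hat w}$ permutes $\{1,\ldots,m\}$. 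Hence the cycles of $\sigma_w$ are precisely the sets $\{(i,r):i\in C\}$ as $C$ ranges over the cycles of $\sigma_{\hat w}$ and $r$ over $\{1,\ldots,g\}$, so $c = g\cdot c'$, where $c'$ denotes the number of cycles of $\sigma_{\hat w}$.

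With this in hand the equivalence follows in three short steps. (i) By Theorem~\ref{thm:MRS03}, applied with $c=g$ and $a_i=\hat w_i$: a word $v$ satisfies $\BWT(v)=w$ if and only if $v=u^g$ with $\BWT(u)=\hat w$; in particular, $w$ is a BWT image if and only if $\hat w$ is. (ii) $\hat w$ is a BWT image if and only if $\sigma_{\hat w}$ is cyclic, i.e.\ if and only if $c'=1$: if $\sigma_{\hat w}$ is cyclic then $\hat w$ is the BWT of a primitive word, hence a BWT image; conversely, if $\hat w=\BWT(u)$ with $u=u'^{e}$ non-primitive ($e\ge 2$), then Theorem~\ref{thm:MRS03} gives $\hat w=\BWT(u'^{e})$ in the form $a_1^{e}\cdots a_\nu^{e}$, so every runlength of $\hat w$ is a multiple of $e$, contradicting that those runlengths have greatest common divisor $1$; thus $u$ is primitive and $\sigma_{\hat w}$ is cyclic. (iii) By the identity $c=g\cdot c'$ of the previous paragraph, $c'=1$ is equivalent to $c=g$. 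Chaining, $w$ is a BWT image $\iff$ $\hat w$ is a BWT image $\iff$ $c'=1$ $\iff$ $c=g$, which is exactly the claim.

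The step I expect to require the most care is the identity $\sigma_w(i,r)=(\sigma_{\hat w}(i),r)$: it is conceptually transparent but demands careful bookkeeping of offsets. Concretely, one must check that a character occupying the consecutive $\sigma_{\hat w}$-values $P+1,\ldots,P+N$ in $\hat w$ occupies the consecutive $\sigma_w$-values $gP+1,\ldots,gP+gN$ in $w$, and that within this range the $\bigl((s-1)g+r\bigr)$-th slot is taken by position $(i-1)g+r$, where $i$ is the block of $\hat w$ of $\sigma_{\hat w}$-value $P+s$. Everything else is a direct invocation of Theorem~\ref{thm:MRS03}, of the cyclic-permutation characterization of BWT images of primitive words recalled in the introduction, and of the elementary blow-down identity $w=\hat w_1^{\,g}\cdots\hat w_m^{\,g}$ of the first paragraph.
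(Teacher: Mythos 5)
There is nothing to compare against inside the paper: Theorem~\ref{thm:LShur} is imported from Likhomanov and Shur~\cite{LShur11} and stated without proof, so your proposal is supplying a proof the paper omits. Your argument is correct. The layer identity $\sigma_w\bigl((i-1)g+r\bigr)=(\sigma_{\hat w}(i)-1)g+r$ follows exactly from the bookkeeping you sketch (positions of $w$ with a smaller character contribute $gN_<$ to the rank, earlier equal-character positions contribute $gN_=+r$, and $\sigma_{\hat w}(i)-1=N_<+N_=$), it yields $c=gc'$, and the chain ``$w$ is a BWT image $\iff$ $\hat w$ is a BWT image $\iff$ $c'=1$ $\iff$ $c=g$'' via Theorem~\ref{thm:MRS03} is sound, including the primitivity argument for $u$ (if $u=u'^e$ with $e\ge 2$, every runlength of $\hat w$ is a multiple of $e$, contradicting $\gcd=1$). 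Note that your layer identity is the same structural fact as~\eqref{eq:tauc} in the paper's proof of Corollary~\ref{coro:formofsigma}, used there in the opposite direction (from BWT-image-ness to the block form of $\sigma_w$); in effect your reduction reconstructs the Likhomanov--Shur route of proving the primitive case and lifting it along the $g$-th power, whereas the paper treats the general theorem as a black box.

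One caveat you should make explicit to avoid circularity: the equivalence ``$z$ is the BWT of a primitive word iff $\sigma_z$ is cyclic'', which you quote from the introduction, is established inside this paper (Corollary~\ref{coro:formofsigma}) only by invoking Theorem~\ref{thm:LShur} itself. Your proof therefore needs an independent justification of the primitive case: the direction ``cyclic $\Rightarrow$ BWT image of a primitive word'' via the BWT reversal algorithm (as the paper itself notes in the proof of Corollary~\ref{coro:formofsigma}), and the direction ``BWT of a primitive word $\Rightarrow$ cyclic'' via the standard LF-mapping orbit argument: for primitive $v$, iterating $\sigma_{\BWT(v)}$ from any position runs through the ranks of all $n$ distinct rotations of $v$, so the orbit has length $n$. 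With that spelled out (or cited to~\cite{MantaciRS03,LShur11} as an independently proven ingredient), your derivation is complete.
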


In the following, we will need the explicit form of the standard permutation of BWT images.

\begin{corollary}\label{coro:formofsigma}
	If $w$ is the BWT of a word $v\in \Sigma^*$ then $\sigma_w$ has the following form, where $c\geq 1$ and $m = n/c$: 
	\begin{equation}
	\label{eq:powerForm} 
	\sigma_w = \powerFormbis.
	\end{equation}
	
	\noindent Moreover, for $c=1$, it holds that $w$ is the BWT of a primitive word if and only if $\sigma_w$ is cyclic. 
\end{corollary}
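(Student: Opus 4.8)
The plan is to reduce everything to the primitive case via the primitive-root factorisation, use Theorem~\ref{thm:MRS03} to realise $w$ as a letter-wise ``inflation'' of a shorter BWT image, and then transport the cycle structure. Concretely, write $v=u^{c}$ with $u$ primitive and $c\ge 1$; this factorisation is unique and $m:=|u|=n/c$ is an integer. By Theorem~\ref{thm:MRS03}, applied with $a_{1}\cdots a_{m}:=\BWT(u)$ (the letters $a_{i}$ need not be distinct, which is why we invoke the theorem rather than an elementary remark), $w=\BWT(u^{c})$ is obtained from $\BWT(u)=b_{1}b_{2}\cdots b_{m}$ by repeating each letter $c$ times: $w=b_{1}^{c}b_{2}^{c}\cdots b_{m}^{c}$.

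The heart of the argument is the following identity. Writing a position of $w$ as $(i-1)c+r$ with $1\le i\le m$ and $1\le r\le c$ (the $r$-th copy of $b_{i}$), one has, directly from the definition of the standard permutation,
\[
  \sigma_{w}\bigl((i-1)c+r\bigr)=\bigl(\sigma_{\BWT(u)}(i)-1\bigr)c+r .
\]
Indeed, the number of positions of $w$ carrying a letter strictly smaller than $b_{i}$ is $c\cdot|\{j:b_{j}<b_{i}\}|$, and the number of positions $p\le(i-1)c+r$ with $w_{p}=b_{i}$ is $c\cdot|\{j<i:b_{j}=b_{i}\}|+r$; summing these and comparing with the analogous count for $\sigma_{\BWT(u)}(i)$ gives the formula. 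In words, $\sigma_{w}$ leaves the offset $r$ untouched and moves the block index $i$ exactly as $\sigma_{\BWT(u)}$ does.

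Since $u$ is primitive, $\sigma_{\BWT(u)}$ is cyclic: this is the characterisation recalled in Section~\ref{sec:introduction} (\cite{MantaciRS03,LShur11}), and it can also be derived from the BWT reversal algorithm, since iterating $\sigma_{\BWT(u)}$ from position $1$ runs through the successive rotations of $u$, and a shorter orbit would make the lexicographically least rotation of $u$ a proper power, contradicting primitivity. Writing this $m$-cycle as $(1,f_{1},\dots,f_{m-1})$ with $f_{k}=\sigma_{\BWT(u)}^{\,k}(1)$, and setting $e_{k}:=(f_{k}-1)c+1$, the identity above shows that the $\sigma_{w}$-orbit of $r$ keeps the offset $r$ and visits the blocks $1,f_{1},\dots,f_{m-1}$; hence $\sigma_{w}$ splits into the $c$ cycles $\bigl(r,\ e_{1}+(r-1),\ \dots,\ e_{m-1}+(r-1)\bigr)$ for $r=1,\dots,c$, which is precisely the form~\eqref{eq:powerForm}; these cycles partition $\{1,\dots,n\}$ because $1,f_{1},\dots,f_{m-1}$ exhaust $\{1,\dots,m\}$. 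For the ``moreover'' part, the same count shows that $\sigma_{w}$ has exactly $c$ cycles, so it is cyclic if and only if $c=1$, i.e.\ if and only if $v=u$ is primitive.

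The only step that is not pure bookkeeping is the cyclicity of $\sigma_{\BWT(u)}$ for primitive $u$; if one wants a self-contained proof rather than citing it, the reversal-algorithm argument sketched above is the cleanest route. Everything else — the inflation identity, the counting formula, and checking that the resulting indices match~\eqref{eq:powerForm} — is routine, though the index arithmetic must be carried out carefully.
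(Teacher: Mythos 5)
Your proof of the displayed form \eqref{eq:powerForm} is correct, and it takes a somewhat different route from the paper's: the paper takes $c$ to be the gcd of the runlengths of $w$, invokes Theorem~\ref{thm:MRS03} to write $v=u^c$ with $u$ primitive, derives the in-run identity $\sigma_w(kc+r+1)=\sigma_w(kc+1)+r$, and relies on Theorem~\ref{thm:LShur} to pin the number of cycles down to $c$; you instead reduce to the primitive root $u$, prove the explicit inflation identity $\sigma_w((i-1)c+r)=(\sigma_{\BWT(u)}(i)-1)c+r$ by direct counting, and transport the single $m$-cycle of $\sigma_{\BWT(u)}$ into the $c$ translated cycles. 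This is, if anything, more explicit than the paper's own argument, which leaves the residue bookkeeping implicit. Your one external ingredient, the cyclicity of $\sigma_{\BWT(u)}$ for primitive $u$, is legitimately cited (and your reversal-algorithm sketch works); within the paper's own logic it also drops out of Theorems~\ref{thm:MRS03} and~\ref{thm:LShur}, since the gcd of the runlengths of $\BWT(u)$ must be $1$, else $u$ would be a proper power. A cosmetic point: your cycles correctly have $m=n/c$ elements each, listed as $r, e_1+(r-1), \dots, e_{m-1}+(r-1)$, while the statement's display literally lists one element too many per cycle; that off-by-one is the paper's notation, not a flaw of yours.

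The genuine gap is in the ``moreover'' clause. What you prove is: if $w=\BWT(v)$ and $\sigma_w$ is cyclic, then $c=1$, i.e.\ $v$ is primitive. The direction the paper actually intends and later uses (e.g.\ in the proof of Theorem~\ref{thm:dol(BWT,c+1)}, where cyclicity of $\sigma_{c+1}$ is used to conclude that position $c+1$ is nice) is unconditional: for an arbitrary word, cyclicity of its standard permutation already implies that the word \emph{is} the BWT of a primitive word. Your argument never produces a BWT preimage when none is assumed, so that implication is missing; the paper's remark that the analogous unconditional statement fails for $c>1$ (the permutation $(1,3)(2,4)$ is the standard permutation of ${\tt cdab}$, which is not a BWT image) shows the conditional reading is not the intended one. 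The paper closes this direction in one line by running the BWT reversal algorithm: if $\sigma_w$ is cyclic, the algorithm outputs a word $v$ of length $|w|$ with $\BWT(v)=w$, and $v$ is primitive. Alternatively, you could close it with Theorem~\ref{thm:LShur} plus your own counting: if the gcd $d$ of the runlengths of $w$ were larger than $1$, the same inflation identity applied with $d$ in place of $c$ would force $\sigma_w$ to preserve offsets modulo $d$ and hence have at least $d$ cycles; so cyclicity forces $d=1=$ number of cycles, and Theorem~\ref{thm:LShur} then yields that $w$ is a BWT image, necessarily of a primitive word.
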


\begin{proof}
	Let $c$ be the greatest common divisor of the runlengths of $w$. By Thm.~\ref{thm:MRS03}, there exists a primitive word $u$ s.t.\ $v = u^c$. If $c=1$, then, by Thm.~\ref{thm:LShur}, $\sigma_w$ is cyclic, as claimed. Otherwise, let $1\leq i \leq |w|$ and $i-1 = kc +r$, with $0\leq r < c$ be the unique decomposition of $i-1$ modulo $c$. It follows from the definition of the standard permutation that 
	\begin{equation}\label{eq:tauc}
	\sigma_w(i) = \sigma_w(kc+r+1) = \sigma_w(kc+1) + r,  
	\end{equation}
	
	\noindent since $w_i = w_{i-1} = \ldots = w_{kc+1}$, i.e.\ position $i$ and position $kc+1$ lie in the same run. But this implies that the standard permutation of $w$ has the form~\eqref{eq:powerForm}, as claimed. 
	
	For $c=1$, the reverse implication follows from applying the BWT reversal algorithm: if $\sigma_w$ is cyclic, then the output is a word of length $|w|$. Note that this direction is not true for $c>1$: e.g.\ $(1,3)(2,4)$ is the standard permutation of $bbaa$, but also of $cdab$, and the latter is not a BWT image. 

\end{proof}

Let $w\in \Sigma^*$. Since, for every $i$, the character $\$$ appears exactly once in $\dol(w,i)$, from Thm.~\ref{thm:LShur} we immediately get the following: 

\begin{corollary} \label{lemma:iNice}
	For $w\in \Sigma^n$ and $1 \leq i \leq n+1$, $i$ is nice if and only if $\sigma_i$ is cyclic. 
\end{corollary}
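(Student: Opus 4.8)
The plan is to reduce the statement directly to Theorem~\ref{thm:LShur}. The key preliminary observation I would make is that, since $\$\notin\Sigma$ and $\$$ occurs exactly once in $\dol(w,i)$, the character $\$$ always forms an isolated run of length $1$; hence the greatest common divisor of the runlengths of $\dol(w,i)$ equals $1$, for every $w$ and every $i$.

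Given this, the forward implication is essentially immediate: if $i$ is nice then $\dol(w,i)$ is in particular a BWT image, so by Theorem~\ref{thm:LShur} the number of cycles of $\sigma_i$ equals that gcd, namely $1$; that is, $\sigma_i$ is cyclic.

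For the converse I would proceed in two stages. First, if $\sigma_i$ is cyclic then its single cycle matches the gcd computed above, so Theorem~\ref{thm:LShur} supplies a word $u$ of length $n+1$ with $\BWT(u)=\dol(w,i)$, determined up to conjugacy. Second, I would upgrade ``$u$ exists'' to ``a preimage of the form $v\$$ exists'': by Corollary~\ref{coro:formofsigma} (the case $c=1$), cyclicity of $\sigma_i$ forces $u$ to be primitive, so its $n+1$ rotations are pairwise distinct and exactly one of them ends in $\$$. Writing that rotation as $v\$$, the single occurrence of $\$$ in $\dol(w,i)$ guarantees $v\in\Sigma^*$ with $|v|=n$, and since the BWT is invariant under conjugacy we obtain $\BWT(v\$)=\BWT(u)=\dol(w,i)$, so $i$ is nice.

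The only delicate point --- and the one I expect to be the main obstacle, though a mild one --- is exactly this last passage, from ``$\dol(w,i)$ is a BWT image'' to ``$\dol(w,i)$ is the BWT of a $\$$-terminated word''; everything else is a direct application of Theorem~\ref{thm:LShur} once the isolated-$\$$ remark is in place.
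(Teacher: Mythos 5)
Your proof is correct and follows essentially the same route as the paper, which derives the corollary directly from Theorem~\ref{thm:LShur} by observing that $\$$ occurs exactly once in $\dol(w,i)$, so the gcd of its runlengths is $1$. The one step you spell out that the paper leaves implicit is the passage from ``BWT image'' to ``BWT of a word of the form $v\$$'' via primitivity of the preimage and invariance of the BWT under rotation, and your handling of it is exactly right.
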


We use a bipartite graph $G_w$ to visualize the standard permutation of $w$ (see Fig.~\ref{fig:constraints15}). The top row corresponds to $w$, and the bottom row to the characters of $w$ in alphabetical order. When $w$ is a BWT, then this implies that the top row corresponds to the last column of matrix $M$, and the bottom row to the first. (This graph is therefore sometimes called \BWT-graph.) Let us refer to the nodes in the top row as $x_1,\ldots, x_n$ and to those in the bottom row as $y_1,\ldots, y_n$. Nodes $x_i$ are labeled by character $w_i$, and nodes $y_i$ are labeled by the characters of $w$ in lexicographic order. We connect $(x_i,y_j)$ if and only if $i=j$ or $j = \sigma_w(i)$. It is easy to see that the node set of any cycle $S$ in $G_w$ has the form $\{x_k,y_k \mid k\in {\cal I}\}$ for some ${\cal I} \subseteq \{1,\ldots, n\}$, and that $S$ is a cycle in $G_w$ if and only if ${\cal I}$ is a cycle in $\sigma$.


\begin{figure}
\centering
\caption{\small Standard permutation for $w = {\tt beaaecdcb}$ with $\sigma_w = \bigl( \protect\begin{smallmatrix} 1 & 2 & 3 & 4 & 5 & 6 & 7 & 8 & 9 \\ 3 & 8 & 1 & 2 & 9 & 5 & 7 & 6 & 4 \protect\end{smallmatrix}\bigr)$ and $\sigma_7 =  \bigl( \protect\begin{smallmatrix} 1 & 2 & 3 & 4 & 5 & 6 & 7 & 8 & 9 & 10 \\ 4 & 9 & 2 & 3 & 10 & 6 & 1 & 8 & 7 & 5 \protect\end{smallmatrix}\bigr)$. 
\label{fig:constraints15}}
\subfloat[Standard permutation of $w$]{%
\scalebox{0.44}{%
\begin{tikzpicture}
[thick,
every node/.style={draw,circle,transform shape},
snode/.style={minimum size=0.75cm},
tnode/.style={line width=2.3pt},
every fit/.style={ellipse,draw,inner sep=-2pt,text width=2cm},shorten >= 3pt,shorten <= 3pt
]
    
\begin{scope}[yshift=2cm,start chain=going right,node distance=7mm]

\node[snode,on chain] (s1) [label=above:1] {\Large b};

\node[snode,on chain] (s2) [label=above:2] {\Large e};

\node[snode,on chain] (s3) [label=above:3] {\Large a};

\node[snode,on chain] (s4) [label=above:4] {\Large a};

\node[snode,on chain] (s5) [label=above:5] {\Large e};

\node[snode,on chain] (s6) [label=above:6] {\Large c};

\node[snode,on chain] (s7) [label=above:7] {\Large d};

\node[snode,on chain] (s8) [label=above:8] {\Large c};

\node[snode,on chain] (s9) [label=above:9] {\Large b};
\end{scope}

\begin{scope}[start chain=going right,node distance=7mm]

\node[snode,on chain] (f3) [label=below:1] {\Large a};

\node[snode,on chain] (f4) [label=below:2] {\Large a};

\node[snode,on chain] (f1) [label=below:3] {\Large b};

\node[snode,on chain] (f9) [label=below:4] {\Large b};

\node[snode,on chain] (f6) [label=below:5] {\Large c};

\node[snode,on chain] (f8) [label=below:6] {\Large c};

\node[snode,on chain] (f7) [label=below:7] {\Large d};

\node[snode,on chain] (f2) [label=below:8] {\Large e};

\node[snode,on chain] (f5) [label=below:9] {\Large e};
\end{scope}

    \draw[gray] (s1) -- (f3);
    \draw[gray] (s3) -- (f1);
    \draw[gray] (s2) -- (f4);
    \draw[gray] (s4) -- (f9);
    \draw[gray] (s9) -- (f5);
    \draw[gray] (s5) -- (f6);
    \draw[gray] (s6) -- (f8);
    \draw[gray] (s8) -- (f2);

\draw[gray] (s1) -- (f1);
\draw[gray] (s3) -- (f3);
\draw[gray] (s2) -- (f2);
\draw[gray] (s4) -- (f4);
\draw[gray] (s9) -- (f9);
\draw[gray] (s5) -- (f5);
\draw[gray] (s6) -- (f6);
\draw[gray] (s8) -- (f8);
\draw[gray] (s7) -- (f7);

\end{tikzpicture}}%
}
\qquad
\subfloat[Standard permutation of $\dol(w,7)$]{%
\scalebox{0.41}{%
\begin{tikzpicture}[thick,
every node/.style={draw,circle,transform shape},
snode/.style={minimum size=0.75cm},
tnode/.style={line width=2.3pt},
every fit/.style={ellipse,draw,inner sep=-2pt,text width=2cm},shorten >= 3pt,shorten <= 3pt
]
    
\begin{scope}[yshift=2cm,start chain=going right,node distance=7mm]

\node[snode,on chain] (s1) [label=above:1] {\Large b};

\node[snode,on chain] (s2) [label=above:2] {\Large e};

\node[snode,on chain] (s3) [label=above:3] {\Large a};

\node[snode,on chain] (s4) [label=above:4] {\Large a};

\node[snode,on chain] (s5) [label=above:5] {\Large e};

\node[snode,on chain] (s6) [label=above:6] {\Large c};

\node[snode,on chain] (s7) [label=above:7] {\Large \$};

\node[snode,on chain] (s8) [label=above:8] {\Large d};

\node[snode,on chain] (s9) [label=above:9] {\Large c};

\node[snode,on chain] (s10) [label=above:10] {\Large b};
\end{scope}

\begin{scope}[start chain=going right,node distance=7mm]

\node[snode,on chain] (f7) [label=below:1] {\Large \$};

\node[snode,on chain] (f3) [label=below:2] {\Large a};

\node[snode,on chain] (f4) [label=below:3] {\Large a};

\node[snode,on chain] (f1) [label=below:4] {\Large b};

\node[snode,on chain] (f10) [label=below:5] {\Large b};

\node[snode,on chain] (f6) [label=below:6] {\Large c};

\node[snode,on chain] (f9) [label=below:7] {\Large c};

\node[snode,on chain] (f8) [label=below:8] {\Large d};

\node[snode,on chain] (f2) [label=below:9] {\Large e};

\node[snode,on chain] (f5) [label=below:10] {\Large e};
\end{scope}

    \draw[gray] (s1) -- (f7);
    \draw[gray] (s7) -- (f9);
    \draw[gray] (s9) -- (f2);
    \draw[gray] (s2) -- (f3);
    \draw[gray] (s3) -- (f4);
    \draw[gray] (s4) -- (f1);
    \draw[gray] (s5) -- (f10);
    \draw[gray] (s10) -- (f5);
    \draw[gray] (s6) -- (f6);
    \draw[gray] (s8) -- (f8);

\draw[gray] (s1) -- (f1);
\draw[gray] (s7) -- (f7) [dashed];
\draw[gray] (s9) -- (f9);
\draw[gray] (s2) -- (f2);
\draw[gray] (s3) -- (f3);
\draw[gray] (s4) -- (f4);
\draw[gray] (s5) -- (f5);
\draw[gray] (s10) -- (f10);
\draw[gray] (s6) -- (f6);
\draw[gray] (s8) -- (f8);
\end{tikzpicture}}
}
\end{figure}


Now observe what happens when we insert a dollar into $w$ in position $i$ (see Fig.~\ref{fig:constraints15}). For positions $j$ which are smaller than $i$, their image is incremented by one; $i$ is mapped to $1$; and positions $j$ to the right of $i$,  both $j$ and its image $\sigma(j)$ are shifted to the right by one.  Formally: 

\begin{lemma}\label{lemma:sigma2sigmai}
	Let $w\in \Sigma^n$, $1\leq i \leq n+1$, $\sigma_w$ the standard permutation of $w$, 
	and $\sigma_i$ the standard permutation of $\dol(w,i)$. Then 
	
	\begin{equation*}\label{eq:sigma2sigma_i}
	\sigma_i(j) = 
	\begin{cases}
	\sigma_w(j) + 1 & \text{ if } j<i, \\
	1 & \text{ if } j=i, \text{ and } \\
	\sigma_w(j-1) + 1 & \text{ if } j>i. 
	\end{cases}
	\end{equation*}
	
\end{lemma}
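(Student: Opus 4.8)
The plan is to unwind the definition of the standard permutation for the word $\dol(w,i) = w_1\cdots w_{i-1}\$w_i\cdots w_n$ and compare it character-by-character with that of $w$. Recall that $\sigma_w(k)$ is determined by the rank of $w_k$ among all characters of $w$, with ties among equal characters broken by position. The key observation is that inserting a single $\$$, which is strictly smaller than every character of $\Sigma$, shifts ranks in a completely predictable way: since the $\$$ sits in position $i$ of $\dol(w,i)$, all characters to its left keep their index, all characters to its right have their index increased by one, and the $\$$ itself, being the unique smallest character, has rank $1$.

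First I would fix the correspondence between positions of $w$ and positions of $\dol(w,i)$: position $j$ of $w$ corresponds to position $j$ of $\dol(w,i)$ if $j<i$, and to position $j+1$ if $j\geq i$; and position $i$ of $\dol(w,i)$ holds the $\$$. This immediately gives the case $j=i$: the $\$$-character is strictly smaller than all others and occurs only once, so its rank is $1$, hence $\sigma_i(i)=1$. Next, for a position of $\dol(w,i)$ holding a character $a\in\Sigma$ (i.e.\ any position $\neq i$), its rank among the characters of $\dol(w,i)$ equals its rank among the characters of $w$ plus one, because exactly one character — the $\$$ — now precedes all of them in the order, and no other character's relative order (neither by value nor, for equal characters, by position, since the insertion preserves the left-to-right order of the original characters) is affected. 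Therefore $\sigma_i$ at the corresponding position equals $\sigma_w$ at the original position, plus $1$.

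It remains to translate this back into the indexing of $\dol(w,i)$. For $j<i$, position $j$ of $\dol(w,i)$ is exactly position $j$ of $w$, so $\sigma_i(j)=\sigma_w(j)+1$. For $j>i$, position $j$ of $\dol(w,i)$ corresponds to position $j-1$ of $w$, so $\sigma_i(j)=\sigma_w(j-1)+1$. Combined with $\sigma_i(i)=1$, this is exactly the claimed formula. I would also remark, as a sanity check, that the right-hand side is indeed a permutation of $\{1,\ldots,n+1\}$: the value $1$ is taken only at $j=i$, and $j\mapsto\sigma_w(j)+1$ (for $j<i$) together with $j\mapsto\sigma_w(j-1)+1$ (for $j>i$) range bijectively over $\{2,\ldots,n+1\}$ since $\sigma_w$ is a bijection of $\{1,\ldots,n\}$.

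There is no real obstacle here; the only point requiring a little care is the tie-breaking rule for equal characters. One must check that inserting the $\$$ does not reorder two equal characters $w_k=w_\ell$ with $k<\ell$: this holds because both indices shift by the same amount (by $0$ if both are $<i$, by $1$ if both are $\geq i$, and if $k<i\leq\ell$ then $k<\ell+1$ still), so the relative order is preserved and the "or $w_i=w_j$ and $i<j$" clause of the definition behaves consistently before and after insertion. Once this is observed, the rest is a direct, essentially bookkeeping, verification.
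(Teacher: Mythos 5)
Your argument is correct and is exactly the definitional verification the paper has in mind: the paper's own proof of this lemma is simply ``Immediate from the definition,'' and your write-up spells out that verification (rank of $\$$ is $1$, all other ranks shift up by one, tie-breaking order among equal characters is preserved) without any deviation in approach. Nothing further is needed.
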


\begin{proof}
	Immediate from the definition. 
\end{proof}

%
%
%
%
%
%

\section{Algorithm}\label{sec:algo}

Given a word $w$, it is easy to compute all nice positions of $w$, by inserting $\$$ in each position $i$ and running the \BWT\ reversal algorithm, in a total of $\Oh(n^2)$ time. Here we present an $\Oh(n \log n)$ time algorithm for the problem. 

The underlying idea is that, if we know $\sigma_i$, the standard permutation of $\dol(w,i)$, then it is not too difficult to compute $\sigma_{i+1}$. 
\begin{lemma}\label{lemma:sigmai}
	Let $w\in \Sigma^n$, and $1 \leq i \leq n$. Then 
	
	\begin{enumerate}
		\item $\sigma_1(1) = 1$ and for $i>1$, $\sigma_1(i) = \sigma_w(i-1)+1$, and 
		\item $\sigma_{i+1} = (1, \sigma_i(i+1)) \cdot \sigma_i$. 
	\end{enumerate}
	
	In particular, the standard permutation $\sigma_{i+1}$ is the result of applying a single transposition to $\sigma_i$. 
	
\end{lemma}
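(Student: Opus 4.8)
The plan is to derive both parts directly from Lemma~\ref{lemma:sigma2sigmai}, which already expresses $\sigma_i$ through $\sigma_w$; this way we never have to argue about rotations of $\dol(w,i)$ or recompute standard permutations from scratch. Part~1 is simply the instantiation of Lemma~\ref{lemma:sigma2sigmai} at the value $i=1$: the branch $j<1$ is vacuous, the branch $j=1$ gives $\sigma_1(1)=1$, and the branch $j>1$ gives $\sigma_1(j)=\sigma_w(j-1)+1$. (Informally: prepending the sentinel, which is strictly smaller than every letter of $\Sigma$, assigns it rank $1$, pushes every other rank up by one, and shifts $w_1,\dots,w_n$ into positions $2,\dots,n+1$.)

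For part~2 I would first record the key structural fact: in $\dol(w,i)$ the sentinel occupies position $i$ and is strictly the smallest character, so $\sigma_i(i)=1$; since $\sigma_i$ is a bijection, $i$ is the \emph{unique} preimage of $1$ under $\sigma_i$. Write $a:=\sigma_i(i+1)$; applying Lemma~\ref{lemma:sigma2sigmai} to $\sigma_i$ in the branch $i+1>i$ gives $a=\sigma_w(i)+1$, in particular $a\geq 2$ (so $(1,a)$ is a genuine transposition), and $i+1$ is the unique preimage of $a$ under $\sigma_i$. Reading $(1,a)\cdot\sigma_i$ as function composition $j\mapsto(1,a)\bigl(\sigma_i(j)\bigr)$, these two remarks show that $\tau:=(1,a)\cdot\sigma_i$ coincides with $\sigma_i$ everywhere except that $\tau(i)=a$ and $\tau(i+1)=1$.

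It then remains to verify $\tau=\sigma_{i+1}$. I would compare $\tau(j)$ with the closed form for $\sigma_{i+1}(j)$ given by Lemma~\ref{lemma:sigma2sigmai} with parameter $i+1$, over the four ranges $j<i$, $j=i$, $j=i+1$, $j>i+1$. In the two middle cases both sides equal $\sigma_w(i)+1$ and $1$, respectively. For $j<i$ one checks $\sigma_i(j)=\sigma_w(j)+1\notin\{1,a\}$ (using $a=\sigma_w(i)+1$ and injectivity of $\sigma_w$), so $\tau(j)=\sigma_i(j)=\sigma_w(j)+1=\sigma_{i+1}(j)$; the range $j>i+1$ is symmetric, with $\sigma_i(j)=\sigma_w(j-1)+1$. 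This exhausts all cases, proving part~2, and the final sentence of the lemma is then immediate, since $(1,\sigma_i(i+1))$ is a single (nontrivial) transposition.

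The only point demanding attention is fixing the composition convention and noticing that pinning down the two preimages $i=\sigma_i^{-1}(1)$ and $i+1=\sigma_i^{-1}(a)$ is precisely what turns the abstract transposition $(1,a)$ into the concrete local edit ``position $i$ loses the image $1$, position $i+1$ receives it''. After that the four-way case check is routine, with Lemma~\ref{lemma:sigma2sigmai} doing all the real work, so I anticipate no genuine obstacle.
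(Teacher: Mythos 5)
Your proposal is correct and follows essentially the same route as the paper: both parts are read off from Lemma~\ref{lemma:sigma2sigmai}, with Part~2 established by comparing values over the ranges $j<i$, $j=i$, $j=i+1$, $j>i+1$. The only difference is cosmetic -- you explicitly verify (via injectivity) that the transposition $(1,\sigma_i(i+1))$ fixes $\sigma_i(j)$ for $j\neq i,i+1$, a point the paper's proof leaves implicit when it directly checks $\sigma_i(j)=\sigma_{i+1}(j)$ on those positions.
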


\begin{proof}
	Part 1. follows by applying Lemma~\ref{lemma:sigma2sigmai} to $i=1$. For Part 2., first notice 
	that for all $j\neq i,i+1$, we have $\sigma_i(j) = \sigma_{i+1}(j)$, by Lemma~\ref{lemma:sigma2sigmai}. This is true because $j$ is either smaller than both $i$ and $i+1$, or larger than both $i$ and $i+1$. 
In the first case, i.e.\ if $j <i$, then $\sigma_i(j)=\sigma_w(j)+1=\sigma_{i+1}(j)$. In the second case, i.e.\ if $j>i+1$, then $\sigma_i(j)=\sigma_w(j-1)+1=\sigma_{i+1}(j)$. Finally $\sigma_i(i) = 1 = \sigma_{i+1}(i+1)$, and $\sigma_{i}(i+1) = \sigma_w(i) +1=\sigma_{i+1}(i)$, again by Lemma~\ref{lemma:sigma2sigmai}. 
\end{proof}

As we show next, applying a transposition to a permutation has either the effect of splitting a cycle, or that of merging two cycles. 

\begin{lemma}\label{lemma:cycles}
	Let $\pi = C_1 \cdots C_k$ be the cycle decomposition of the permutation $\pi$, $x\neq y$, and $\pi' = (\pi(x),\pi(y)) \cdot \pi$. 
	\begin{enumerate}
		\item If $x$ and $y$ are in the same cycle $C_i$, then this cycle is split into two. In particular, let $C_i = (c_1, c_2, \ldots, c_j, \ldots, c_m)$, with $c_m = x$ and $c_j = y$. \\Then $\pi' =$ $ (c_1,c_2,\ldots, c_{j-1},y)(c_{j+1} \ldots c_{m-1},x) \prod_{\ell\neq i} C_{\ell}$.  
		
		\item If $x$ and $y$ are in different cycles $C_i$ and $C_j$, then these two cycles are merged. In particular, let $C_i = (c_1, c_2, \ldots, c_m)$, with $c_m = x$, and $C_j = (c'_1, c_2', \ldots, c_r')$, with $c_r'=y$, then $\pi' = (c_1, \ldots, c_{m-1},x, c_1', \ldots, c'_{r-1},y) \prod_{\ell\neq i,j} C_{\ell}$. 
		
	\end{enumerate}
\end{lemma}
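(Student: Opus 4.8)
The plan is to first make the action of $\pi'$ explicit at the level of single elements, and then simply trace the orbits. Reading the composition right-to-left, for every $z$ we have $\pi'(z) = (\pi(x),\pi(y))\bigl(\pi(z)\bigr)$. Since $\pi$ is a bijection, $\pi(z) \in \{\pi(x),\pi(y)\}$ holds precisely when $z \in \{x,y\}$; hence $\pi'$ agrees with $\pi$ on every element other than $x$ and $y$, while $\pi'(x) = \pi(y)$ and $\pi'(y) = \pi(x)$. In other words, $\pi'$ is obtained from $\pi$ simply by swapping the images of $x$ and $y$ and leaving all other images untouched. In particular, every cycle of $\pi$ disjoint from $\{x,y\}$ survives unchanged in $\pi'$, which already accounts for the factor $\prod_{\ell\neq i} C_\ell$ (respectively $\prod_{\ell\neq i,j} C_\ell$) appearing in both parts of the statement.

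For Part 1, I would write $C_i = (c_1,\ldots,c_m)$ with the convention $\pi(c_t)=c_{t+1}$ where indices are read modulo $m$ (so $\pi(c_m)=c_1$), and with $c_m=x$, $c_j=y$; note that $m\geq 2$ and $1\leq j\leq m-1$ since $x\neq y$. Then $\pi(x)=c_1$ and $\pi(y)=c_{j+1}$, so after the swap $\pi'(c_m)=c_{j+1}$, $\pi'(c_j)=c_1$, and $\pi'(c_t)=c_{t+1}$ for every other $t$. Following $\pi'$ from $c_1$ we pass through $c_1,c_2,\ldots,c_{j-1},c_j=y$ and return to $c_1$; following $\pi'$ from $c_{j+1}$ we pass through $c_{j+1},\ldots,c_{m-1},c_m=x$ and return to $c_{j+1}$. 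This yields exactly the two cycles $(c_1,\ldots,c_{j-1},y)$ and $(c_{j+1},\ldots,c_{m-1},x)$ claimed, with the understanding that either may collapse to a fixpoint (when $j=1$, resp.\ $j=m-1$).

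For Part 2, write $C_i=(c_1,\ldots,c_m)$ with $c_m=x$ and $C_j=(c_1',\ldots,c_r')$ with $c_r'=y$, so $\pi(x)=c_1$ and $\pi(y)=c_1'$; after the swap $\pi'(c_m)=c_1'$ and $\pi'(c_r')=c_1$, all other images inside $C_i$ and $C_j$ being unchanged. Tracing $\pi'$ from $c_1$ runs through $c_1,\ldots,c_{m-1},c_m=x,c_1',\ldots,c_{r-1}',c_r'=y$ and then returns to $c_1$, so $C_i$ and $C_j$ fuse into the single cycle $(c_1,\ldots,c_{m-1},x,c_1',\ldots,c_{r-1}',y)$, as claimed.

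The whole argument is essentially bookkeeping once the pointwise description of $\pi'$ is in hand; the only points requiring a little care are the degenerate instances of Part 1 where a split produces a fixpoint, and consistently reading the cyclic indices modulo the cycle length so that $\pi(c_m)=c_1$ and $\pi(c_j)=c_{j+1}$ are treated uniformly. I do not anticipate any genuine obstacle beyond this.
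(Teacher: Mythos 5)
Your proof is correct and follows essentially the same route as the paper's: observe that $\pi'(z)=\pi(z)$ for all $z\neq x,y$ while $\pi'(x)=\pi(y)$ and $\pi'(y)=\pi(x)$, then read off the resulting cycle structure. The paper states this more tersely (it does not trace the orbits explicitly or mention the degenerate fixpoint cases), but the argument is the same.
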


\begin{proof} 
	Let $\tau = (\pi(x),\pi(y))$. First note that $\pi'(z) = \tau(\pi(z)) = \pi(z)$ for all $z\neq x,y$. 
	Case {\em 1:} $\pi'(x) = \tau(\pi(x)) = \pi(y) = c_{j+1}$ and $\pi'(y) = \tau(\pi(y)) = \pi(x) = c_1$, which proves the claim. Case 
	{\em 2:} follows analogously. 
	 \end{proof}

See Example~\ref{ex:sigmaPermutations1} for an example.

\subsection{High-level description of the algorithm}

The algorithm first computes the standard permutation $\sigma=\sigma_w$ of $w$  and initializes a counter $c$ with the number of cycles of $\sigma$. It then computes $\sigma_1$ according to Lemma~\ref{lemma:sigmai}, part 1. It increments  counter $c$ by $1$ for $i=1$, since $1$ is always a fixpoint of $\sigma_1$. 
Then the algorithm iteratively computes the new permutation $\sigma_{i+1}$, updating $c$ in each iteration. By Lemma~\ref{lemma:cycles}, $c$ either increases or decreases by $1$ in every iteration: it increases if $i+1$ is in the same cycle as $i$, and it decreases if it is in a different cycle. Whenever $c$ equals $1$, the algorithm reports the current value $i$. See Algorithm~1 for the pseudocode and Example~\ref{ex:sigmaPermutations1} for an example.

\IncMargin{1em}
\begin{algorithm}
	\SetKwProg{Procedure}{procedure}{:}{end}
	\SetKw{DownTo}{down to}
	\DontPrintSemicolon
	{Given a word $w$, return a set $\mathcal{I}$ of positions in which the \$-character can be inserted to turn $w$ into a BWT image.}\;
	
	\LinesNumbered
	\BlankLine
	
	\nl $n \leftarrow |w|$ \;
	\nl $\sigma \gets$ standard permutation of $w$\tcp*[l]{variant of counting sort \hspace{2.2em}}
	$c \leftarrow $ number of cycles of $\sigma$\;
	$\mathcal{I} \leftarrow \emptyset$\;
	\For{$i \leftarrow n+1$ \DownTo $2$}{ 
		$\sigma(i) \leftarrow \sigma(i-1)+1$ \tcp*[l]{\normalfont{compute $\sigma_1$ from $\sigma$}}
	}
	$\sigma(1) \leftarrow 1$\;
	$c \gets c+1$\tcp*[l]{$\sigma_1$ has one more cycle than $\sigma$}

	\For{ $i \leftarrow 1$ \KwTo $n$ }{
		$C \leftarrow$ cycle of $\sigma$ which contains $1$ \tcp*[l]{$C$ also contains $i$}
		\If {$i+1 \in C$}{
			$c \leftarrow c+1$ \tcp*[l]{case split} 
		} 
		\Else{
			$c \leftarrow c-1$ \tcp*[l]{case merge}
		}
		$\sigma \leftarrow$ {\sc Update}$(\sigma, i)$\tcp*[l]{now $\sigma=\sigma_{i+1}$}
		\If{ $c = 1$}{
			$\mathcal{I} \leftarrow \mathcal{I} \cup \{i+1\}$\;
		}
		
	}
	\Return{$\mathcal{I}$}
	\caption{{\sc FindNicePositions}$(w)$\label{algo:findPos}}
	
	\BlankLine
	\Procedure{\sc {\sc Update}($\sigma, i$)}{ \label{procedure}
		
		$\sigma(i) \leftarrow \sigma(i+1)$ \tcp*[l]{$\sigma(i)=1$}
		$\sigma(i+1) \leftarrow 1$\;
		
		\Return{$\sigma$\;}
		
	}
	
\end{algorithm}
\DecMargin{1em}


\begin{exa}\label{ex:sigmaPermutations1} 

Let $w = {\tt  acccbccbab}$. 
We report $\sigma=\sigma_w$, followed by $\sigma_i$ for $i=1, \ldots, 11$,  both in two-line and cyclic representations. Changes from one permutation to the next  are highlighted in red (bold grey in the black-and-white print version), and cyclic $\sigma_i$, i.e.\ nice positions $i$, are marked with a box. On the right, we note whether a merge or a split has taken place. By Lemma~\ref{lemma:cycles}, this depends on whether $i$ and $i+1$ are in distinct cycles or in the same cycle. In the last column, we report the number of cycles of $\sigma_i$. 
\[w = {\tt  acccbccbab} \qquad \sigma = \bigl( \begin{smallmatrix} 1 & 2 & 3 & 4 & 5 & 6 & 7 & 8 & 9 & 10 \\ 1 & 6 & 7 & 8 & 3 & 9 & 10 & 4 & 2 & 5 \end{smallmatrix} \bigr) = (1)(2, 6, 9)(3, 7, 10, 5)(4, 8)\]

\centering
\begin{minipage}[t]{0.8\textwidth}
	
	\vspace{0.5cm}
	
$\sigma_1 = \bigl( \begin{smallmatrix} \fat{\red{1}} & 2 & 3 & 4 & 5 & 6 & 7 & 8 & 9 & 10 & 11 \\ \fat{\red{1}} & 2 & 7 & 8 & 9 & 4 & 10 & 11 & 5 & 3 & 6 \end{smallmatrix} \bigr) = (\fat{\red{1}})(\fat{\red{2}})(3, 7, 10)(4, 8, 11, 6)(5, 9)$
\bigskip

$\sigma_2 = \bigl( \begin{smallmatrix} 1 & 2 & 3 & 4 & 5 & 6 & 7 & 8 & 9 & 10 & 11 \\ \fat{\red{2}} & \fat{\red{1}} & 7 & 8 & 9 & 4 & 10 & 11 & 5 & 3 & 6 \end{smallmatrix} \bigr) = (1, \fat{\red{2}})(\fat{\red{3}}, 7, 10)(4, 8, 11, 6)(5, 9)$
\bigskip

$\sigma_3 = \bigl( \begin{smallmatrix} 1 & 2 & 3 & 4 & 5 & 6 & 7 & 8 & 9 & 10 & 11 \\ 2 & \fat{\red{7}} & \fat{\red{1}} & 8 & 9 & 4 & 10 & 11 & 5 & 3 & 6 \end{smallmatrix} \bigr) = (1, 2, 7, 10, \fat{\red{3}})(\fat{\red{4}}, 8, 11, 6)(5, 9)$
\bigskip

$\sigma_4 = \bigl( \begin{smallmatrix} 1 & 2 & 3 & 4 & 5 & 6 & 7 & 8 & 9 & 10 & 11 \\  2 & 7 & \fat{\red{8}} & \fat{\red{1}} & 9 & 4 & 10 & 11 & 5 & 3 & 6 \end{smallmatrix} \bigr) = (1, 2, 7, 10, 3, 8, 11, 6, \fat{\red{4}})(\fat{\red{5}}, 9)$
\bigskip

\fbox{$\sigma_5 = \bigl( \begin{smallmatrix} 1 & 2 & 3 & 4 & 5 & 6 & 7 & 8 & 9 & 10 & 11 \\ 2 & 7 & 8 & \fat{\red{9}} & \fat{\red{1}} & 4 & 10 & 11 & 5 & 3 & 6 \end{smallmatrix} \bigr) = (1, 2, 7, 10, 3, 8, 11, \fat{\red{6}}, 4, 9, \fat{\red{5}})$} 
\bigskip

$\sigma_6 = \bigl( \begin{smallmatrix} 1 & 2 & 3 & 4 & 5 & 6 & 7 & 8 & 9 & 10 & 11 \\ 2 & 7 & 8 & 9 & \fat{\red{4}} & \fat{\red{1}} & 10 & 11 & 5 & 3 & 6 \end{smallmatrix} \bigr) = (1, 2, \fat{\red{7}}, 10, 3, 8, 11, \fat{\red{6}})(4, 9, 5)$
\bigskip

$\sigma_7 = \bigl( \begin{smallmatrix} 1 & 2 & 3 & 4 & 5 & 6 & 7 & 8 & 9 & 10 & 11 \\ 2 & 7 & 8 & 9 & 4 & \fat{\red{10}} & \fat{\red{1}} & 11 & 5 & 3 & 6 \end{smallmatrix} \bigr) = (1, 2, \fat{\red{7}})(10, 3, \fat{\red{8}}, 11, 6)(4, 9, 5)$
\bigskip

$\sigma_8 = \bigl( \begin{smallmatrix} 1 & 2 & 3 & 4 & 5 & 6 & 7 & 8 & 9 & 10 & 11 \\ 2 & 7 & 8 & 9 & 4 & 10 & \fat{\red{11}} & \fat{\red{1}} & 5 & 3 & 6 \end{smallmatrix} \bigr) = (1, 2, 7, 11, 6, 10, 3, \fat{\red{8}})(4, \fat{\red{9}}, 5)$
\bigskip

\fbox{$\sigma_9 = \bigl( \begin{smallmatrix} 1 & 2 & 3 & 4 & 5 & 6 & 7 & 8 & 9 & 10 & 11 \\ 2 & 7 & 8 & 9 & 4 & 10 & 11 & \fat{\red{5}} & \fat{\red{1}} & 3 & 6 \end{smallmatrix} \bigr) = (1, 2, 7, 11, 6, \fat{\red{10}}, 3, 8, 5, 4, \fat{\red{9}})$}
\bigskip

$\sigma_{10} = \bigl( \begin{smallmatrix} 1 & 2 & 3 & 4 & 5 & 6 & 7 & 8 & 9 & 10 & 11 \\ 2 & 7 & 8 & 9 & 4 & 10 & 11 & 5 & \fat{\red{3}} & \fat{\red{1}} & 6 \end{smallmatrix} \bigr) = (1, 2, 7, \fat{\red{11}}, 6, \fat{\red{10}})(3, 8, 5, 4, 9)$
\bigskip

$\sigma_{11} = \bigl( \begin{smallmatrix} 1 & 2 & 3 & 4 & 5 & 6 & 7 & 8 & 9 & 10 & 11 \\ 2 & 7 & 8 & 9 & 4 & 10 & 11 & 5 & 3 & \fat{\red{6}} & \fat{\red{1}} \end{smallmatrix} \bigr) = (1, 2, 7, 11)(6, 10)(3, 8, 5, 4, 9)$

\bigskip
\end{minipage}
\hspace{-2.5cm}
\begin{minipage}[t]{0.2\textwidth}
	
	\vspace{1.15cm}
	
	{\em merge}
	\bigskip
	
	{\em merge}
	\bigskip
	
	{\em merge}
	\bigskip
	
	{\em merge}
	\bigskip
	\vspace{0.3cm}
	
	{\em split}
	\bigskip
	
	{\em split}
	\bigskip
	
	{\em merge}
	\bigskip
	
	{\em merge}
	\bigskip
	\vspace{0.3cm}
	
	{\em split}
	\bigskip
	
	{\em split}
	
\end{minipage}
\hspace{-1.5cm}\begin{minipage}[t]{0.05\textwidth}
	
	\hspace{-1.2cm}{\bf no.\ cycles}
	\bigskip
	
	5
	\bigskip
	
	4
	\bigskip
	
	3
	\bigskip
	
	2
	\vspace{0.55cm}
	
	1
	\vspace{0.55cm}
	
	2
	\bigskip
	
	3
	\bigskip
	
	2
	\vspace{0.55cm}
	
	1
	\vspace{0.55cm}
	
	2
	\bigskip
	
	3
\end{minipage}

\end{exa}



\subsection{Implementation with splay trees}

For the algorithm's implementation, we need an appropriate data structure for maintaining and updating the current permutation $\sigma_i$. Using an array to keep $\sigma_i$ would allow us to update it in constant time in each step, but would not give us the possibility to efficiently decide whether $i+1\in C$ in line $11$. Thus we need a data structure to maintain the cycles of $\sigma_i$. The functionalities we seek are (a) decide whether two elements are in the same cycle, (b) split two cycles, (c) merge two cycles. The data structure we have chosen is a forest of splay trees~{\cite{SleatorT85}. This data structure supports the above operations in amortized $\Oh(\log n)$ time. 

Splay trees are self-adjusting binary search trees. They are not necessarily balanced, but they have the property that at every update or access-operation, the element $x$ accessed is moved to the root and the tree is adjusted in such a way as to reduce future access-time to nodes on the path from the root to $x$. The basic operation (moving a node to the root) is called {\em splaying} and consists of a series of the usual edge rotations in binary search trees. Which rotations are applied depends on the position of the node with respect to its parent and grandparent (the cases are referred to as {\em zig}, {\em zig-zig}, and {\em zig-zag}). Splay trees can implement the standard operations on binary search trees, such as {\em access, insert, delete, join, split} in amortized logarithmic time, in the total number of nodes involved. We refer the reader to the original article~\cite{SleatorT85} for more details. 

We will represent the current permutation $\sigma_i$ as a forest of splay trees, where each tree corresponds to a cycle of $\sigma_i$. Let $(c_1, c_2, \ldots, c_k)$ be an arbitrary rotation of a cycle in $\sigma_i$. We consider the cycle as a ranked list of the elements from $c_1$ to $c_k$ and assign to element $c_j$ its position $j$ as key. The trick is that we do not explicitly store $j$, but will only implicitly use the binary search tree property: namely, that the inorder traversal of the tree yields the elements stored in the nodes in the correct order. To access elements, we store pointers to the corresponding nodes; each  access is followed by splaying the node. This will allow us to implement the split- and merge-operations we defined on cycles, using only basic splay-tree operations. 

\bigskip

We now explain how to update the data structure, i.e.\ how to go from $\sigma_i$ to $\sigma_{i+1}$. %

Recall that we have to split a cycle or merge two cycles, depending on whether $i+1$ is in the same cycle as $i$ (Lemma~\ref{lemma:cycles}). In order to be able to check this, we maintain a pointer to the root of the first cycle. To check whether $i+1$ is in the first cycle (the cycle containing $i$), we splay $i+1$; if the root of the first cycle now has a parent, then the answer is {\sc yes}, and $i+1$ becomes the new root, otherwise the answer is {\sc no}. 

If $i+1$ is in the first cycle, then the first cycle has to be split after $i+1$. Let the cycle have the form $(A, i+1, B)$, with $A,B$ sequences of numbers.  (Recall that the first element of the first cycle is $1$ and the last element is  $i$.) 
The corresponding splay tree after splaying $i+1$ is shown in Fig.~\ref{fig:abstractSplit}, part (a). The {\em split} operation cuts the right subtree of $i+1$,  producing the two new trees in Fig.~\ref{fig:abstractSplit}, part (b), with the resulting cycles $(A, i+1)$ and $(B)$. 


\begin{figure}
\centering
\caption{\label{fig:abstractSplit}The implementation of the split-step with splay trees.\label{fig:abstractSplit}}

\subfloat[$\sigma_i$]{%
\scalebox{0.80}{%
\begin{tikzpicture}
[level distance=1.5cm,
  level 1/.style={sibling distance=1.5cm},
  level 2/.style={sibling distance=1cm}]
\node[node, minimum size=6mm, label=center:{\tiny $i+1$}] (i1) at (0,0) { }
  	child[white]{node(A)[]{}}
  	child[white]{node(B)[]{}};

\node[shape border rotate=90, yshift=-0.68cm, isosceles triangle, fit = (A), draw]{$A$};
\node[shape border rotate=90, yshift=-0.68cm, isosceles triangle, fit = (B), draw]{$B$};

\draw (i1) -- (-.76,-1.3);
\draw (i1) -- (.76,-1.3);

\end{tikzpicture}}%
}
\qquad
\subfloat[$\sigma_{i+1}$]{%
\scalebox{0.80}{%
\begin{tikzpicture}[level distance=1.5cm,
  level 1/.style={sibling distance=1.5cm},
  level 2/.style={sibling distance=1cm}]
		
\node[ node, minimum size=6mm, label=center:{\tiny $i+1$}] at (0,0) { }
  	child[white]{node(A)[]{}}
  	child[missing]{node[]{}};
  			
\node[] (B) at (1,-1){};  			
  			
\node[shape border rotate=90, yshift=-0.68cm, isosceles triangle, fit = (A), draw]{$A$};
\node[shape border rotate=90, yshift=-0.68cm, isosceles triangle, fit = (B), draw]{$B$};

\draw (i1) -- (-.76,-1.3);

\end{tikzpicture}}
}
\end{figure}


If $i+1$ is in a different cycle, then these two cycles have to be merged, according to Lemma~\ref{lemma:cycles}. Let the two cycles have the form $(A, i)$ and $(B, i+1, D)$, respectively, with $A,B,D$ sequences of numbers, see Fig.~\ref{fig:abstractMerge}, part (a).  (Recall that the first element of $A$ is necessarily $1$.) The new cycle, at the end of the merge-operation, will then be $(A, i, D, B, i+1)$. First we {\em split} the right subtree of $i+1$ (which is already at the root), resulting in two trees, one representing $(B,i+1)$ and the other, $(D)$. Let $x$ be the rightmost element of $D$, and $D'$ be $D$ without its last element, i.e.\ $(D) = (D', x)$. Next we splay both $i$ and $x$. Altogether, we now have three trees, each with its rightmost element at its root: one representing $(A,i)$, one $(B,i+1)$, and one $(D',x)$, see Fig.~\ref{fig:abstractMerge}, part (b). We now perform two {\em join} operations, joining the third tree as right child of $i$, and the second as right child of $x$. We thus get a complete tree respresenting $(A, i, D', x, B, i+1) = (A, i, D, B, i+1)$, as desired, see Fig.~\ref{fig:abstractMerge}, part (c). We refer the reader to Fig.~\ref{fig:sigmaPermutations1}  (in the Appendix) for a full example. 


\begin{figure}[h!]
\captionsetup[subfigure]{font=footnotesize}
\centering
\caption{The implementation of the merge-step with splay trees.\label{fig:abstractMerge}}

\subfloat[$\sigma_i$]{%
\scalebox{0.65}{%
\begin{tikzpicture}[level distance=1.5cm,
  level 1/.style={sibling distance=1.5cm},
  level 2/.style={sibling distance=1cm}]
\node[node, minimum size=6mm, label=center:{\tiny $i$}] (i) at (0,0) {}
  	child[white]{node(A)[]{}}
  	child[missing]{node[]{}};

\node[node, minimum size=6mm, label=center:{\tiny $i+1$}] (i1) at (2,0) {}
  	child[white]{node(B)[]{}}
  	child[white]{node(D)[]{}};
  			
\node[shape border rotate=90, yshift=-0.68cm, isosceles triangle, fit = (A), draw]{$A$};
\node[shape border rotate=90, yshift=-0.68cm, isosceles triangle, fit = (B), draw]{$B$};
\node[shape border rotate=90, yshift=-0.68cm, isosceles triangle, fit = (D), draw]{$D$};

\draw (i) -- (-0.76,-1.3);

\draw (i1) -- (1.24,-1.3);
\draw (i1) -- (2.75,-1.3);

\end{tikzpicture}}%
}
\hspace{3em}
\subfloat[]{%
\scalebox{0.7}{%
\begin{tikzpicture}[level distance=1.5cm,
  level 1/.style={sibling distance=1.5cm},
  level 2/.style={sibling distance=1cm}]

\node[node, minimum size=6mm, label=center:{\tiny $i$}] (i) at (0,0) { }
  	child[white]{node(A)[]{}}
  	child[missing]{node[]{}};

\node[node, minimum size=6mm, label=center:{\tiny $i+1$}] (i1) at (2,0) { }
  	child[white]{node(B)[]{}}
  	child[missing]{node[]{}};
  			
\node[node, minimum size=6mm, label=center:{\tiny $x$}] (x) at (3.5,0) { }
child[white]{node(D1)[]{}}
child[missing]{node[]{}};
  
\node[shape border rotate=90, yshift=-0.68cm, isosceles triangle, fit = (A), draw]{$A$};		
\node[shape border rotate=90, yshift=-0.68cm, isosceles triangle, fit = (B), draw]{$B$};
\node[shape border rotate=90, yshift=-0.68cm, isosceles triangle, fit = (D1), draw, label={[label distance=-0.75cm]30:$D'$}]{};

\draw (i) -- (-0.76,-1.3);

\draw (i1) -- (1.24,-1.3);

\draw (x) -- (2.74,-1.3);
\end{tikzpicture}}%
}
\hspace{3em}
\subfloat[$\sigma_{i+1}$]{%
\scalebox{0.7}{%
\begin{tikzpicture}[level distance=1.5cm,
  level 1/.style={sibling distance=2cm},
  level 2/.style={sibling distance=2cm}]
\node[node, minimum size=6mm, label=center:{\tiny $i$}] (i) at (2,0) { }
  	child[white]{node(A)[]{}}
  	child{node[node, minimum size=6mm, label=center:{\tiny $x$}](x){}
  		child[white]{node(D1)[]{}}
  		child{node[node, minimum size=6mm, label=center:{\tiny $i+1$}] (i1){}
  			child[white]{node(B)[]{}}
  			child[missing]{}}};

\node[shape border rotate=90, yshift=-0.68cm, isosceles triangle, fit = (A), draw]{$A$};
\node[shape border rotate=90, yshift=-0.68cm, isosceles triangle, fit = (B), draw]{$B$};
\node[shape border rotate=90, yshift=-0.68cm, isosceles triangle, fit = (D1), draw, label={[label distance=-0.75cm]30:$D'$}]{};

\draw (i) -- (0.99,-1.3);

\draw (x) -- (2,-2.79);

\draw (i1) -- (3,-4.3);
\end{tikzpicture}}
}
\end{figure}



\subsection{Analysis}

We now show that Algorithm~\ref{algo:findPos} takes $\Oh(n \log n)$ time in the worst case. 

Computing the standard permutation of $w$ takes $\Oh(n)$ time (using a variant of Counting Sort~\cite{Cormen}, and noting that the alphabet of the string has cardinality at most $n$). The computation of $\sigma_1$ (lines $5$ to $7$) takes $\Oh(n)$ time. All steps in one iteration of the for-loop (lines $9$ to $17$) take constant time, except deciding whether $i+1 \in C$ (line $11$), and updating $\sigma$ (line $15$). 
For deciding whether $i+1 \in C$, we {\em access} $i+1$. If the answer is {\sc yes}, we will have a split-step: this is a {\em split}-operation on the tree for $C$ (Fig.~\ref{fig:abstractSplit}). If the answer is {\sc no}, then we merge the two trees; the implementation of this consists in two {\em access}-, one {\em split}-, and two {\em join}-operations on the trees  (Fig.~\ref{fig:abstractMerge}). 

Therefore, in one iteration of the for-loop, we either have one {\em access} and one {\em split} operation (for a split-step), or two {\em access}-, one {\em split}-, and two {\em join}-operations (merge-step), thus in either case, at most five splay tree operations. There are $n$ iterations of the for-loop, so at most $5n$ operations. Together with the initial insertion of the $n+1$ nodes, we get a total of at most $6n+1$ operations. We report the relevant theorem from~\cite{SleatorT85}:

\begin{theorem} [Balance Theorem with Updates, Thm. 6 in~\cite{SleatorT85}] \label{thm:SleatorT85}
	A sequence of $m$ arbitrary operations on a collection of initially empty splay trees takes $\mathcal{O}(m+ \sum^{m}_{j=1} \log n_j)$ time, where $n_j$ is the number of items in the tree or trees involved in operation $j$.
\end{theorem}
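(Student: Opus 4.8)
The plan is to prove the bound by the potential (amortized-analysis) argument of Sleator and Tarjan; the paper only needs the ``balance'' version, so all node weights can be taken equal to $1$. For a node $x$ let $s(x)$ be the number of nodes in the subtree rooted at $x$ in whichever tree of the forest currently contains $x$, put $r(x)=\log_2 s(x)$, and define the potential of the whole forest by $\Phi=\sum_x r(x)$. With unit weights one has $0\le r(x)\le\log_2 N$ for every node of an $N$-node tree, so $\Phi\ge 0$ always and $\Phi=0$ when all trees are empty. The engine of the proof is the \emph{Access Lemma}: splaying a node $x$ in a tree with root $t$ has amortized cost (number of rotations plus the induced change in $\Phi$) at most $3\bigl(r(t)-r(x)\bigr)+1$; with unit weights this is at most $3\log_2 N+1$ in an $N$-node tree.

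The next step is to verify that each operation type --- \emph{access}, \emph{insert}, \emph{delete}, \emph{join}, \emph{split} --- has amortized cost $\Oh(\log n_j)$, where $n_j$ is the total size of the tree(s) touched by operation $j$. \emph{Access} is exactly the Access Lemma. For \emph{split} we splay the query node to the root and then detach one of its subtrees; detaching a subtree can only lower $r$ of that node and leaves every other rank unchanged, so $\Delta\Phi\le 0$ on top of a splay, giving $\Oh(\log n_j)$. For \emph{join} of a tree $L$ (smaller keys) with a tree $R$ (larger keys) we splay the maximum of $L$ to its root and hang $R$ below it as the right child; the only rank that grows is that of the new root, from $\log_2|L|$ to $\log_2(|L|+|R|)\le\log_2 n_j$, so again $\Oh(\log n_j)$. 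For \emph{insert} we attach the new node $x$ as a leaf on the search path and then splay it; before the splay, $\Phi$ increases by $r(x)=0$ plus $\sum_z\bigl(\log_2(s(z)+1)-\log_2 s(z)\bigr)$ over the ancestors $z$ of $x$, and because the subtree sizes strictly increase toward the root (so $s(z)+1$ is at most the size of the next ancestor) this sum telescopes to at most $\log_2 n_j$; adding the splay cost gives $\Oh(\log n_j)$. Finally \emph{delete} is ``splay $x$ to the root, remove it, then \emph{join} its two subtrees'', so the bounds just established apply.

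It then remains to sum over the whole sequence. Let $c_j$ and $a_j$ be the actual and amortized costs of operation $j$, and $\Phi_0,\dots,\Phi_m$ the potentials before and after the operations. Since $c_j=a_j-(\Phi_j-\Phi_{j-1})$,
\[
\sum_{j=1}^m c_j \;=\; \sum_{j=1}^m a_j \;-\;(\Phi_m-\Phi_0)\;\le\;\sum_{j=1}^m \Oh(\log n_j)\;+\;\Oh(m),
\]
because $\Phi_0=0$ (empty trees) and $\Phi_m\ge 0$. This is precisely $\Oh\!\bigl(m+\sum_{j=1}^m\log n_j\bigr)$. (Applied to Algorithm~\ref{algo:findPos}, where $m\le 6n+1$ and every $n_j\le n+1$, it yields the asserted $\Oh(n\log n)$ running time.)

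I expect the real work --- hence the main obstacle --- to be the Access Lemma itself, whose proof is the three-case rotation analysis (\emph{zig}, \emph{zig-zig}, \emph{zig-zag}); the \emph{zig-zig} case is the delicate one, bounded using concavity of $\log_2$ in the form $\log_2 a+\log_2 b\le 2\log_2\frac{a+b}{2}$, after which the per-rotation estimates $3\bigl(r'(x)-r(x)\bigr)$ telescope along the splay path. Everything after that lemma is bookkeeping: each non-splay sub-step changes $\Phi$ by at most an additive $\Oh(\log n_j)$, and the overall sum collapses by the telescoping of the $\Phi_j$.
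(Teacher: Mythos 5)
This statement is not proved in the paper at all: it is quoted verbatim, as Theorem~6 of Sleator and Tarjan~\cite{SleatorT85}, and used as a black box in the analysis of Algorithm~\ref{algo:findPos}. So there is no in-paper proof to compare against; the relevant benchmark is the original potential-function argument, and your proposal is essentially a faithful reconstruction of it: the unit-weight potential $\Phi=\sum_x \log_2 s(x)$, the Access Lemma with amortized bound $3\bigl(r(t)-r(x)\bigr)+1$, the per-operation bookkeeping (the telescoping bound of at most $\log_2 n_j$ for the rank increases caused by an insert, $\Delta\Phi\le 0$ for a split on top of the splay, the single rank increase of at most $\log_2 n_j$ for a join, and delete reduced to splay-plus-join), and the final summation $\sum_j c_j=\sum_j a_j-(\Phi_m-\Phi_0)\le\sum_j a_j$ using $\Phi_0=0$ for initially empty trees and $\Phi_m\ge 0$. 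All of these steps are correct and are exactly how the cited result is obtained. The one piece you do not carry out is the Access Lemma itself, which you rightly identify as the real content; as written, your text proves the Balance Theorem with Updates only modulo that lemma, whose zig/zig-zig/zig-zag case analysis (using $\log_2 a+\log_2 b\le 2\log_2\frac{a+b}{2}$) would have to be supplied for a self-contained argument. Since the paper itself only cites the theorem, your level of detail already matches, indeed exceeds, what the paper provides.
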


For our algorithm, we have $m=\Oh(n)$ operations altogether, each involving no more than $n+1$ nodes, thus Theorem~\ref{thm:SleatorT85} guarantees that the total time spent on the splay trees is $\Oh(n + n \log n)$. Adding to this the computation of $\sigma_1$ in $\Oh(n)$ time, and of the constant-time operations within the for-loop, we get altogether $\Oh(n \log n)$ time. Memory usage is $\Oh(n)$, since the forest of splay trees consists of $n+1$ vertices in total. 
Summarizing, we have 

\begin{theorem}\label{thm:algoanalysis}
	Algorithm~\ref{algo:findPos} runs in  $\Oh(n \log n)$ time and uses  $\Oh(n)$ space, for an input string of length $n$. 
\end{theorem}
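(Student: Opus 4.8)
The plan is to split the cost of Algorithm~\ref{algo:findPos} into its three phases---the preprocessing (lines~1--4), the computation of $\sigma_1$ (lines~5--8), and the main for-loop (lines~9--17)---and then to absorb all splay-tree work into a single application of Theorem~\ref{thm:SleatorT85}. First I would observe that the preprocessing is $\Oh(n)$: the standard permutation $\sigma_w$ is obtained by a variant of Counting Sort, legitimate because the effective alphabet of $w$ has size at most $n$, and its cycle decomposition (hence the initial counter $c$) is read off in one linear scan; during the same scan we build the forest of $n+1$ splay-tree nodes, one per position of $\dol(w,\cdot)$, which are exactly the initial insertions. Lines~5--8, which turn $\sigma$ into $\sigma_1$ and increment $c$, form one more linear pass, so also $\Oh(n)$.

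Next I would analyse one iteration of the for-loop. By Lemma~\ref{lemma:sigmai}, passing from $\sigma_i$ to $\sigma_{i+1}$ is the application of the single transposition $(1,\sigma_i(i+1))$, and by Lemma~\ref{lemma:cycles} this either splits the cycle of $\sigma_i$ containing $1$ (equivalently $i$) or merges it with the cycle containing $i+1$, the case being decided by the membership test of line~11. Every line of the loop body is $\Oh(1)$ except this test and the update of $\sigma$. The test is one splay-tree \emph{access} of $i+1$, after which one inspects whether the former root of the first cycle now has a parent. In the split case the update is a single \emph{split} on that tree (Fig.~\ref{fig:abstractSplit}); in the merge case it is, as in the paragraph around Fig.~\ref{fig:abstractMerge}, one \emph{split}, two further \emph{access} operations (to bring $i$ and the rightmost element of $D$ to their roots), and two \emph{join} operations. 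Hence each iteration performs at most five splay-tree operations, giving at most $5n$ over all iterations; together with the $n+1$ initial insertions this is $m=\Oh(n)$ splay-tree operations in total, each involving at most $n+1$ nodes.

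Finally I would invoke Theorem~\ref{thm:SleatorT85}: a sequence of $m$ splay-tree operations, the $j$th touching $n_j$ items, costs $\Oh\!\bigl(m+\sum_{j=1}^m\log n_j\bigr)$, which with $m=\Oh(n)$ and $n_j\le n+1$ is $\Oh(n+n\log n)=\Oh(n\log n)$. Adding the $\Oh(n)$ spent in preprocessing, on $\sigma_1$, and on the constant-time work inside the loop yields the claimed $\Oh(n\log n)$ running time. For the space bound, the forest contains exactly $n+1$ nodes, each with $\Oh(1)$ fields---the rank key is used only implicitly through the binary-search-tree order---plus the array holding the current permutation and a constant number of pointers, all $\Oh(n)$. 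The only genuinely delicate point is the per-iteration bookkeeping of the previous paragraph: that the membership test together with the cycle update really costs only a constant number of splay operations, in particular that locating the rightmost element of $D$ in the merge case counts as a single \emph{access} (splay-the-maximum), whose cost is amortized, not worst-case; once that constant bound is secured, the $\Oh(n\log n)$ total is immediate from Theorem~\ref{thm:SleatorT85}.
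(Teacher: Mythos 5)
Your proposal is correct and follows essentially the same route as the paper's own analysis: linear-time preprocessing and computation of $\sigma_1$, a constant number of splay-tree operations (\emph{access}, \emph{split}, \emph{join}) per iteration of the for-loop, and a single application of the Balance Theorem with Updates (Theorem~\ref{thm:SleatorT85}) to the $\Oh(n)$ operations on at most $n+1$ nodes, plus the $\Oh(n)$-node forest for the space bound. The only cosmetic difference is the exact per-iteration operation count in the merge case (your enumeration gives six rather than five operations), which is immaterial since any constant bound suffices.
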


%
%

\section{Characterization} \label{sec:characterization}

In this section we give a full characterization of nice positions. 
Our first result is that every BWT image has at least one nice position. 

\begin{theorem} \label{thm:dol(BWT,c+1)}	
	Let $w\in \BWT(\Sigma^n)$, and $c$ the number of cycles of $\sigma_w$. Then $c+1$ is nice. 
\end{theorem}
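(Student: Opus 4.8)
The plan is to reduce the claim, via Corollary~\ref{lemma:iNice}, to showing that $\sigma_{c+1}$ is cyclic: the character $\$$ occurs exactly once in $\dol(w,c+1)$, so $c+1$ is nice if and only if $\sigma_{c+1}$ is a single cycle on $\{1,\dots,n+1\}$. As preparation, since $w$ is a BWT image, Corollary~\ref{coro:formofsigma} puts $\sigma_w$ in power form: with $m=n/c$, the cycle of $\sigma_w$ containing $1$ has length $m$, and by~\eqref{eq:tauc} $\sigma_w$ sends each run-block $\{kc+1,\dots,(k+1)c\}$ of $w$ onto a set of $c$ consecutive integers. As the images of the $m$ run-blocks partition $\{1,\dots,n\}$, they must be exactly $\{1,\dots,c\},\{c+1,\dots,2c\},\dots$, so the minimum of each image is $\equiv 1\pmod c$; hence the cycle of $1$ consists of run-starts, and we may write it as $(1,e_1,\dots,e_{m-1})$ with $\{1,e_1,\dots,e_{m-1}\}=\{1,c+1,\dots,(m-1)c+1\}$. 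Finally, Lemma~\ref{lemma:sigma2sigmai} with $i=c+1$ describes $\sigma_{c+1}$ explicitly: it equals $\sigma_w(\cdot)+1$ on $\{1,\dots,c\}$, maps $c+1$ to $1$, and equals $\sigma_w(\cdot-1)+1$ on $\{c+2,\dots,n+1\}$.

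The core of the argument is then to trace the orbit of $c+1$ under $\sigma_{c+1}$ and verify that it exhausts $\{1,\dots,n+1\}$. One has $c+1\mapsto 1\mapsto \sigma_w(1)+1=e_1+1$; and for $j\ge c+2$, using $\sigma_{c+1}(j)=\sigma_w(j-1)+1$ together with the fact (again from~\eqref{eq:tauc}, valid since each $e_t$ is a run-start) that $\sigma_w$ acts on the run-block $\{e_t,\dots,e_t+c-1\}$ by adding a constant, a short induction gives $\sigma_{c+1}(e_t+1)=e_{t+1}+1$ for $t<m-1$ and $\sigma_{c+1}(e_{m-1}+1)=\sigma_w(e_{m-1})+1=2$. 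Thus the orbit threads through the $+1$-shifted copy of the cycle of $1$ and then lands on $2$; iterating, from $r$ it runs through $e_1+r,\dots,e_{m-1}+r$ and lands on $r+1$ for $1\le r<c$, and finally from $c$ it runs through $e_1+c,\dots,e_{m-1}+c$ and returns to $c+1$ (here $e_{m-1}+c-1$ is the \emph{last} element of a run-block, so $\sigma_w(e_{m-1}+c-1)=\sigma_w(e_{m-1})+(c-1)=c$, whence $\sigma_{c+1}(e_{m-1}+c)=c+1$). The set of elements visited is $\{1,\dots,c+1\}\cup\{e_t+r:1\le t\le m-1,\ 1\le r\le c\}$; since the $e_t$ range over the $m-1$ run-starts other than $1$, the second set equals $\{c+2,\dots,n+1\}$, so the orbit is all of $\{1,\dots,n+1\}$ and $\sigma_{c+1}$ is cyclic, as required.

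The main obstacle is the index bookkeeping rather than any real difficulty: one has to check that the elements $e_t+r$ are pairwise distinct and disjoint from $\{1,\dots,c+1\}$ (immediate from $e_t\ge c+1$, $e_t\equiv1\pmod c$, $1\le r\le c$), handle the closing step $r=c$ separately from $r<c$ as indicated, and observe that the degenerate case $m=1$ (equivalently $w=a^c$, $\sigma_w=\mathrm{id}$) is covered with all shifted blocks empty, the orbit being just $c+1\mapsto1\mapsto2\mapsto\cdots\mapsto c\mapsto c+1$. An alternative organisation of exactly the same computation goes through Algorithm~\ref{algo:findPos}: $\sigma_1$ has $c+1$ cycles (the $c$ cycles of $\sigma_w$, each shifted by one, together with the fixpoint $\{1\}$), and one shows that iterations $i=1,\dots,c$ are all \emph{merge} steps, because in every $\sigma_i$ with $i\le c$ the elements $i$ and $i+1$ lie in different cycles; after iteration $c$ the cycle count has therefore dropped from $c+1$ to $1$, so $c+1$ is reported as nice.
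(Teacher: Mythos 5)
Your proposal is correct and follows essentially the same route as the paper's proof: it invokes Corollary~\ref{coro:formofsigma} for the power form of $\sigma_w$, applies Lemma~\ref{lemma:sigma2sigmai} with $i=c+1$, and traces the orbit of $c+1$ to exhibit $\sigma_{c+1}$ as the single cycle $(1,e_1+1,\dots,2,e_1+2,\dots,c,e_1+c,\dots,c+1)$, only with more explicit index bookkeeping (and a cleaner handling of the cycle lengths) than the paper gives. The closing sketch via counting merge steps in Algorithm~\ref{algo:findPos} is a valid alternative organisation, but the main argument is the paper's.
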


\begin{proof}
	By Corollary~\ref{coro:formofsigma}, $\sigma_w$ has the form 
	\[ \sigma_w = \powerFormbis.\]

	By Lemma~\ref{lemma:sigma2sigmai}, $\sigma_{c+1}(c+1)=1$. Note that each cycle of $\sigma_w$ has exactly one element which is smaller than $c+1$, therefore for exactly one element $j$ of each cycle, $\sigma_{c+1}(j)=\sigma_w(j)+1$, again by Lemma~\ref{lemma:sigma2sigmai}. These elements are $\sigma_{c+1}(1)=e_1+1, \sigma_{c+1}(2)=e_1+2, \ldots, \sigma_{c+1}(c)=e_1+c$. For all other elements $j>c+1$, it holds that $\sigma_{c+1}(j) = \sigma_w(j-1)+1$. In particular, $\sigma_{c+1}(e_1+1) = \sigma_w(e_1)+1=e_2+1, \sigma_{c+1}(e_2+1)=\sigma_w(e_2)+1=e_3+1, \ldots, \sigma_{c+1}(e_i+c)=\sigma_w(e_i+c-1)+1=e_{i+1}+c$. The resulting form of $\sigma_{c+1}$ is thus	
	\[ \sigma_{c+1} = (1, e_1+1, \ldots, e_m+1, 2, e_1+2, \ldots, e_m+2, \ldots, c, e_1+c, \ldots, e_m+c, c+1),\]
	
	\noindent and is thus cyclic, and therefore, by Corollary~\ref{coro:formofsigma}, $c+1$ is nice.

\end{proof}

\begin{proposition}
	Let $w = \BWT(v\$)$ such that $w_2 = \$$, i.e. \$ is in the second position of $w$. Then $v$ is Lyndon.
\end{proposition}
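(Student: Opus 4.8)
The plan is to unravel the hypothesis $w_2=\$$ into a statement about the lexicographic order of the suffixes of $v\$$, and then to recognise the classical characterisation of Lyndon words. Since $w=\BWT(v\$)$, the character $w_j$ is the last symbol of the $j$-th lexicographically smallest rotation of $v\$$. The only rotation of $v\$$ that ends in $\$$ is $v\$$ itself (the rotation starting at position $1$), and, because $\$$ is smaller than every letter of $\Sigma$, the smallest rotation is $\$\,v_1\cdots v_n$. Hence $w_2=\$$ says precisely that $v\$$ is the second smallest rotation of $v\$$, i.e.
\[
v\$ <_{\lex} v_i\cdots v_n\,\$\,v_1\cdots v_{i-1}\qquad\text{for all }i=2,\dots,n .
\]
Moreover $v\$ \in \SigmaD$ is primitive (all its $n+1$ rotations are distinct) and contains $\$$ exactly once; hence the lexicographic order of its rotations coincides with the lexicographic order of its suffixes (the standard fact underlying suffix-sorting BWT construction, recalled in the Introduction), so the condition above is equivalent to $v\$ <_{\lex} v_i\cdots v_n\$$ for every $i=2,\dots,n$, that is, $v\$$ is the second smallest suffix of $v\$$, the smallest being $\$$.

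Next I would record the elementary fact that, for all $x,y\in\Sigma^*$, one has $x\$ <_{\lex} y\$$ if and only if $x <_{\lex} y$: the only case needing an argument is when one of $x,y$ is a proper prefix of the other, and there the comparison is decided at the position of the appended $\$$, where $\$$ meets a genuine (hence larger) letter, in agreement with the prefix order. Applying this to the condition above yields $v <_{\lex} v_i\cdots v_n$ for all $i=2,\dots,n$, i.e. $v$ is strictly smaller than each of its proper nonempty suffixes; by the classical characterisation of Lyndon words (a word is Lyndon iff it is strictly smaller than all of its proper nonempty suffixes) this means $v$ is Lyndon. If one prefers to avoid invoking that characterisation, one can argue directly: given $v\$ <_{\lex} v_i\cdots v_n\,\$\,v_1\cdots v_{i-1}$, let $j$ be the first position where the two sides disagree; $j$ cannot fall on the $\$$ of the right-hand side, since that would force $v_j<\$$, which is impossible, so $j$ lies inside the common block $v_i\cdots v_n$, and the strict inequality transfers verbatim to $v <_{\lex} v_i\cdots v_n v_1\cdots v_{i-1}$, proving $v$ Lyndon directly (and, in particular, primitive).

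I do not expect a genuine obstacle here. The only points requiring care are the identification of the rotation of $v\$$ whose last character is $w_2$, and the handling of the prefix cases in the lexicographic comparisons --- equivalently, the observation that a word strictly smaller than all its proper nonempty suffixes has no proper border, so in the comparisons the $\$$ is never reached before the outcome is already decided. Everything else is routine.
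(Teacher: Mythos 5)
Your proof is correct and follows essentially the same route as the paper's: identify $\$v$ as the smallest rotation and $v\$$ as the second smallest (forced by $w_2=\$$), and conclude that $v$ is strictly smaller than every proper nonempty suffix (equivalently, every nontrivial rotation), hence Lyndon. The paper compresses this into two lines, while you carefully justify the passage from rotations with $\$$ to suffixes of $v$ and even give the direct rotation comparison; these are elaborations, not a different argument.
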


\begin{proof}
	Let $v' = v\$$. The smallest rotation of $v'$ is $\$ v$, since $\$$ is smaller than all other characters; while $v\$$ is the second smallest rotation, since $w_2 = \$$. Therefore, every proper suffix $u$ of $v$ is lexicographically strictly larger than $v$, implying that $v$ is Lyndon. 
	
\end{proof}

In order to see which positions are nice, we want to understand how cycles in $\sigma_i$ are created. Recall our earlier example $w = {\tt beaaecdcb}$ and $i=7$ (see Fig.~\ref{fig:constraints15FIXPOINTS}). Position $7$ is not nice, since $\sigma_7$ has two fixpoints. 


\begin{figure}
\centering
\caption{\small Standard permutation for $w = {\tt beaaecdcb}$ with $\sigma_w = \bigl( \protect\begin{smallmatrix} 1 & 2 & 3 & 4 & 5 & 6 & 7 & 8 & 9 \\ 3 & 8 & 1 & 2 & 9 & 5 & 7 & 6 & 4 \protect\end{smallmatrix}\bigr)$ and $\sigma_7 =  \bigl( \protect\begin{smallmatrix} 1 & 2 & 3 & 4 & 5 & 6 & 7 & 8 & 9 & 10 \\ 4 & 9 & 2 & 3 & 10 & 6 & 1 & 8 & 7 & 5 \protect\end{smallmatrix}\bigr)$. Red edges (bold grey in the black-and-white print version) become fixpoints in $\sigma_7$. 
\label{fig:constraints15FIXPOINTS}}
\subfloat[Standard permutation of $w$]{%
\scalebox{0.44}{%
\begin{tikzpicture}
[thick,
every node/.style={draw,circle,transform shape},
snode/.style={minimum size=0.75cm},
tnode/.style={line width=2.3pt},
every fit/.style={ellipse,draw,inner sep=-2pt,text width=2cm},shorten >= 3pt,shorten <= 3pt
]
    
\begin{scope}[yshift=2cm,start chain=going right,node distance=7mm]

\node[snode,on chain] (s1) [label=above:1] {\Large b};

\node[snode,on chain] (s2) [label=above:2] {\Large e};

\node[snode,on chain] (s3) [label=above:3] {\Large a};

\node[snode,on chain] (s4) [label=above:4] {\Large a};

\node[snode,on chain] (s5) [label=above:5] {\Large e};

\node[tnode,on chain,red] (s6) [label=above:6] {\Large c};

\node[tnode,on chain,red] (s7) [label=above:7] {\Large d};

\node[snode,on chain] (s8) [label=above:8] {\Large c};

\node[snode,on chain] (s9) [label=above:9] {\Large b};
\end{scope}

\begin{scope}[start chain=going right,node distance=7mm]

\node[snode,on chain] (f3) [label=below:1] {\Large a};

\node[snode,on chain] (f4) [label=below:2] {\Large a};

\node[snode,on chain] (f1) [label=below:3] {\Large b};

\node[snode,on chain] (f9) [label=below:4] {\Large b};

\node[tnode,on chain,red] (f6) [label=below:5] {\Large c};

\node[snode,on chain] (f8) [label=below:6] {\Large c};

\node[tnode,on chain,red] (f7) [label=below:7] {\Large d};

\node[snode,on chain] (f2) [label=below:8] {\Large e};

\node[snode,on chain] (f5) [label=below:9] {\Large e};
\end{scope}

    \draw[gray] (s1) -- (f3);
    \draw[gray] (s3) -- (f1);
    \draw[gray] (s2) -- (f4);
    \draw[gray] (s4) -- (f9);
    \draw[gray] (s9) -- (f5);
    \draw[gray] (s5) -- (f6);
    \draw[gray] (s6) -- (f8);
    \draw[gray] (s8) -- (f2);

\draw[gray] (s1) -- (f1);
\draw[gray] (s3) -- (f3);
\draw[gray] (s2) -- (f2);
\draw[gray] (s4) -- (f4);
\draw[gray] (s9) -- (f9);
\draw[gray] (s5) -- (f5);
\draw[red,line width=2.3pt] (s6) -- (f6);
\draw[gray] (s8) -- (f8);
\draw[red,line width=2.3pt] (s7) -- (f7);

\end{tikzpicture}}%
}
\qquad
\subfloat[Standard permutation of $\dol(w,7)$]{%
\scalebox{0.41}{%
\begin{tikzpicture}[thick,
every node/.style={draw,circle,transform shape},
snode/.style={minimum size=0.75cm},
tnode/.style={line width=2.3pt},
every fit/.style={ellipse,draw,inner sep=-2pt,text width=2cm},shorten >= 3pt,shorten <= 3pt
]
    
\begin{scope}[yshift=2cm,start chain=going right,node distance=7mm]

\node[snode,on chain] (s1) [label=above:1] {\Large b};

\node[snode,on chain] (s2) [label=above:2] {\Large e};

\node[snode,on chain] (s3) [label=above:3] {\Large a};

\node[snode,on chain] (s4) [label=above:4] {\Large a};

\node[snode,on chain] (s5) [label=above:5] {\Large e};

\node[tnode,on chain,red] (s6) [label=above:6] {\Large c};

\node[snode,on chain] (s7) [label=above:7] {\Large \$};

\node[tnode,on chain,red] (s8) [label=above:8] {\Large d};

\node[snode,on chain] (s9) [label=above:9] {\Large c};

\node[snode,on chain] (s10) [label=above:10] {\Large b};
\end{scope}

\begin{scope}[start chain=going right,node distance=7mm]

\node[snode,on chain] (f7) [label=below:1] {\Large \$};

\node[snode,on chain] (f3) [label=below:2] {\Large a};

\node[snode,on chain] (f4) [label=below:3] {\Large a};

\node[snode,on chain] (f1) [label=below:4] {\Large b};

\node[snode,on chain] (f10) [label=below:5] {\Large b};

\node[tnode,on chain,red] (f6) [label=below:6] {\Large c};

\node[snode,on chain] (f9) [label=below:7] {\Large c};

\node[tnode,on chain,red] (f8) [label=below:8] {\Large d};

\node[snode,on chain] (f2) [label=below:9] {\Large e};

\node[snode,on chain] (f5) [label=below:10] {\Large e};
\end{scope}

    \draw[gray] (s1) -- (f7);
    \draw[gray] (s7) -- (f9);
    \draw[gray] (s9) -- (f2);
    \draw[gray] (s2) -- (f3);
    \draw[gray] (s3) -- (f4);
    \draw[gray] (s4) -- (f1);
    \draw[gray] (s5) -- (f10);
    \draw[gray] (s10) -- (f5);
    \draw[gray] (s6) -- (f6);
    \draw[gray] (s8) -- (f8);

\draw[gray] (s1) -- (f1);
\draw[gray] (s7) -- (f7) [dashed];
\draw[gray] (s9) -- (f9);
\draw[gray] (s2) -- (f2);
\draw[gray] (s3) -- (f3);
\draw[gray] (s4) -- (f4);
\draw[gray] (s5) -- (f5);
\draw[gray] (s10) -- (f10);
\draw[red,line width=2.3pt] (s6) -- (f6);
\draw[red,line width=2.3pt] (s8) -- (f8);
\end{tikzpicture}}
}
\end{figure}


In general, if $j$ is a fixpoint in $\sigma_w$ and $i \leq j$, then $j+1$ will be a fixpoint in $\sigma_i$. Similarly, if $\sigma(j) = j-1$ and $j<i$, then $j$ is a fixpoint in $\sigma_i$. These two cases are illustrated in Fig.~\ref{fig:constraints15FIXPOINTS}, where $7$ is a fixpoint in $\sigma_w$ and $\sigma_w(6)=5$, so the insertion of $\$$ in position $i=7$  leads to the two fixpoints $6$ and $8$ in $\sigma_7$. Therefore, position $7$ is not nice. 

Indeed, the previous observation can be generalized: if $S$ is a cycle in $\sigma_w$, then no position $i\leq \min S$ is nice.
Similarly, if $S$ is such that $\sigma_w(S) = S-1 = \{ j-1 \mid j\in S\}$, then no position $i>\max S$ is nice; in both cases, the insertion of $\$$ in such a position would turn $S$ into a cycle. However, the situation can also be more complex, as is illustrated in Fig.~\ref{fig:cedcbbabb}. In Theorem~\ref{thm:pseudocycle} we will give a necessary and sufficient condition for creating a proper cycle by inserting a \$ in some position. First we need another definition.


\begin{figure}
\centering
\caption{Standard permutation of $w = {\tt cedcbbabb}$ with 
$\sigma = \bigl( \protect\begin{smallmatrix} 1 & 2 & 3 & 4 & 5 & 6 & 7 & 8 & 9 \\ 6 & 9 & 8 & 7 & 2 & 3 & 1 & 4 & 5 \protect\end{smallmatrix}\bigr)$ and of $\dol(w,7)$ with $\sigma_7 =  \bigl( \protect\begin{smallmatrix} 1 & 2 & 3 & 4 & 5 & 6 & 7 & 8 & 9 & 10 \\ 7 & 10 & 9 & 8 & 3 & 4 & 1 & 2 & 5 & 6 \protect\end{smallmatrix}\bigr).$
\label{fig:cedcbbabb}}

\subfloat[Pseudo-cycle in the standard permutation of $w$\label{fig:cedcbbabb-a}]{%
\scalebox{0.42}{%
\begin{tikzpicture}
[thick,
every node/.style={draw,circle,transform shape},
snode/.style={minimum size=0.75cm},
tnode/.style={line width=2.3pt},
every fit/.style={ellipse,draw,inner sep=-2pt,text width=2cm},shorten >= 3pt,shorten <= 3pt
]
    
\begin{scope}[yshift=2cm,start chain=going right,node distance=7mm]

\node[snode,on chain] (s6) [label=above:1] {\Large c};

\node[snode,on chain] (s9) [label=above:2] {\Large e};

\node[tnode,on chain,red] (s8) [label=above:3] {\Large d};

\node[snode,on chain] (s7) [label=above:4] {\Large c};

\node[tnode,on chain,red] (s2) [label=above:5] {\Large b};

\node[snode,on chain] (s3) [label=above:6] {\Large b};

\node[snode,on chain] (s1) [label=above:7] {\Large a};

\node[tnode,on chain,red] (s4) [label=above:8] {\Large b};

\node[snode,on chain] (s5) [label=above:9] {\Large b};
\end{scope}

\begin{scope}[start chain=going right,node distance=7mm]

\node[snode,on chain] (f1) [label=below:1] {\Large a};

\node[tnode,on chain,red] (f2) [label=below:2] {\Large b};

\node[snode,on chain] (f3) [label=below:3] {\Large b};

\node[tnode,on chain,red] (f4) [label=below:4] {\Large b};

\node[snode,on chain] (f5) [label=below:5] {\Large b};

\node[snode,on chain] (f6) [label=below:6] {\Large c};

\node[snode,on chain] (f7) [label=below:7] {\Large c};

\node[tnode,on chain,red] (f8) [label=below:8] {\Large d};

\node[snode,on chain] (f9) [label=below:9] {\Large e};
\end{scope}

    \draw[gray] (s1) -- (f7);
    \draw[gray] (s7) -- (f4);
    \draw[gray] (s4) -- (f8);
    \draw[gray] (s8) -- (f3);
    \draw[gray] (s3) -- (f6);
    \draw[gray] (s6) -- (f1);
    \draw[gray] (s2) -- (f5);
    \draw[gray] (s5) -- (f9);
    \draw[gray] (s9) -- (f2);

\draw[gray] (s1) -- (f1);
\draw[gray] (s7) -- (f7);
\draw[red,line width=2.3pt] (s4) -- (f4);
\draw[red,line width=2.3pt] (s8) -- (f8);
\draw[gray] (s3) -- (f3);
\draw[gray] (s6) -- (f6);
\draw[red,line width=2.3pt] (s2) -- (f2);
\draw[gray] (s5) -- (f5);
\draw[gray] (s9) -- (f9);

\end{tikzpicture}}%
}
\qquad
\subfloat[Cycle in the standard permutation of $\dol(w,7)$\label{fig:cedcbbabb-b}]{%
\scalebox{0.41}{%
\begin{tikzpicture}[thick,
every node/.style={draw,circle,transform shape},
snode/.style={minimum size=0.75cm},
tnode/.style={line width=2.3pt},
every fit/.style={ellipse,draw,inner sep=-2pt,text width=2cm},shorten >= 3pt,shorten <= 3pt
]
    
\begin{scope}[yshift=2cm,start chain=going right,node distance=7mm]

\node[snode,on chain] (s7) [label=above:1] {\Large c};

\node[snode,on chain] (s10) [label=above:2] {\Large e};

\node[tnode,on chain,red] (s9) [label=above:3] {\Large d};

\node[snode,on chain] (s8) [label=above:4] {\Large c};

\node[tnode,on chain,red] (s3) [label=above:5] {\Large b};

\node[snode,on chain] (s4) [label=above:6] {\Large b};

\node[snode,on chain] (s1) [label=above:7] {\Large \$};

\node[snode,on chain] (s2) [label=above:8] {\Large a};

\node[tnode,on chain,red] (s5) [label=above:9] {\Large b};

\node[snode,on chain] (s6) [label=above:10] {\Large b};
\end{scope}

\begin{scope}[start chain=going right,node distance=7mm]

\node[snode,on chain] (f1) [label=below:1] {\Large \$};

\node[snode,on chain] (f2) [label=below:2] {\Large a};

\node[tnode,on chain,red] (f3) [label=below:3] {\Large b};

\node[snode,on chain] (f4) [label=below:4] {\Large b};

\node[tnode,on chain,red] (f5) [label=below:5] {\Large b};

\node[snode,on chain] (f6) [label=below:6] {\Large b};

\node[snode,on chain] (f7) [label=below:7] {\Large c};

\node[snode,on chain] (f8) [label=below:8] {\Large c};

\node[tnode,on chain,red] (f9) [label=below:9] {\Large d};

\node[snode,on chain] (f10) [label=below:10] {\Large e};
\end{scope}

    \draw[gray] (s1) -- (f7);
    \draw[gray] (s7) -- (f1);
    \draw[gray] (s2) -- (f8);
    \draw[gray] (s8) -- (f4);
    \draw[gray] (s4) -- (f6);
    \draw[gray] (s6) -- (f10);
    \draw[gray] (s10) -- (f2);
    \draw[red, line width=2.3pt] (s3) -- (f5);
    \draw[red, line width=2.3pt] (s5) -- (f9);
    \draw[red, line width=2.3pt] (s9) -- (f3);

\draw[gray, dashed] (s1) -- (f1);
\draw[gray] (s7) -- (f7);
\draw[gray] (s2) -- (f2);
\draw[gray] (s8) -- (f8);
\draw[gray] (s4) -- (f4);
\draw[gray] (s6) -- (f6);
\draw[gray] (s10) -- (f10);
\draw[red, line width=2.3pt] (s3) -- (f3);
\draw[red, line width=2.3pt] (s5) -- (f5);
\draw[red, line width=2.3pt] (s9) -- (f9);
\end{tikzpicture}}
}
\end{figure}


\begin{definition} \label{def:pseudo-cycle}
	Given a permutation $\pi$ of $\{1,\ldots, n\}$, a {\em pseudo-cycle} w.r.t.\ $\pi$ is a non-empty subset $S \subseteq \{1,\ldots,n\}$
	which can be partitioned into two subsets $S_{\links}$ and $S_{\rechts}$, possibly empty, such that $S_{\links} < S_{\rechts}$, and $\pi(S) = (S_{\links}-1) \cup S_{\rechts}$.
	Let $a = \max S_{\links}$, and $a = 0$ if $S_{\links}$ is empty. Further, let $b = \min S_{\rechts}$, and $b = n+1$ if $S_{\rechts}$ is empty. The {\em critical interval} $R \subseteq \{1,2,\ldots,n+1\}$ of the pseudo-cycle $S$ is defined as $R = [a+1,b]$. 
\end{definition}

For example, in Fig.~\ref{fig:cedcbbabb-a}, $S=\{3,5,8\}$ is a pseudo-cycle, with $S_{\links} = \{3,5\},$ $S_{\rechts} = \{8\}$, and $R = \{6,7,8\}$.  Note that every cycle $C$ of a permutation is a pseudo-cycle, with $C = C_{\rechts}$. In Fig.~\ref{fig:constraints15}, we highlighted two pseudo-cycles: $S_1 = \{6\}$, with critical interval $R_1 = \{7,8,9,10\}$, and $S_2 = \{7\}$, with $R_2 = \{1,2,3,4,5,6,7\}$. The elements of the critical interval are exactly those positions $i$ which turn $S$ into one or more cycles, when the $\$$ is inserted in position $i$ (see in Lemma~\ref{lemma:psudocycle-cycle}). In particular, with $S = S_{\rechts}=\{1,\ldots, n\}$, we get that $i=1$ is never nice. This is easy to see since, for every word $w$,  $1$ is a fixpoint in the standard permutation $\sigma_1$ of $\$w$.

\begin{definition}\label{def:shift}
	Let $1\leq i \leq  n+1$ and $S \subseteq \{1,2, \ldots , n\}$. We define 
	\begin{align*}
	\shift(S, i) &= \{ x \mid x \in S \text{ and } x < i \} \cup \{ x+1 \mid x \in S \text{ and } x \geq i \}, \; \text{and} \\
	\unshift(S, i) &= \{ x \mid x \in S \text{ and } x < i \} \cup \{ x-1 \mid x \in S \text{ and } x > i\}.
	\end{align*}
\end{definition}

\begin{lemma}\label{lemma:psudocycle-cycle}
	Let $w\in \Sigma^*$ and $\sigma = \sigma_w$. Let $1 \leq i \leq n+1$, and $U \subseteq \{1,2, \ldots , n+1\} \setminus \{i\}$. Then $U$ is a cycle in the permutation $\sigma_i$ if and only if $S=\unshift(U, i)$ is a pseudo-cycle w.r.t.\ $\sigma$, and $i$ belongs to the critical interval of $S$.
\end{lemma}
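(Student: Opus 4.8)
\medskip
\noindent\emph{Proof proposal.} The plan is to prove the two implications separately, using throughout the explicit formula for $\sigma_i$ in terms of $\sigma$ (Lemma~\ref{lemma:sigma2sigmai}). First I would record a bookkeeping fact: for a fixed $i$, the operations $S\mapsto\shift(S,i)$ and $U\mapsto\unshift(U,i)$ are mutually inverse bijections between the subsets of $\{1,\dots,n\}$ and the subsets of $\{1,\dots,n+1\}$ that avoid $i$; in particular $\shift(S,i)$ never contains $i$, and $S=\unshift(U,i)$ is equivalent to $U=\shift(S,i)$. This lets me move freely between $U$ and $S$. Throughout I use the convention $\max\emptyset=0$, $\min\emptyset=n+1$, so that the critical interval of a pseudo-cycle $S$ is $R=[\max S_{\links}+1,\ \min S_{\rechts}]$.

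\emph{Direction ``only if''.} Suppose $U$ is a cycle of $\sigma_i$, so $\sigma_i(U)=U$ and $i\notin U$. Set $S_{\links}:=\{u\in U: u<i\}$ and $S_{\rechts}:=\{u-1 : u\in U,\ u>i\}$; then $S=\unshift(U,i)=S_{\links}\sqcup S_{\rechts}$, with $S_{\links}\subseteq\{1,\dots,i-1\}$ and $S_{\rechts}\subseteq\{i,\dots,n\}$, hence $S_{\links}<S_{\rechts}$. Now I trace $\sigma$ on $S$: if $x=u\in S_{\links}$ then $\sigma_i(u)=\sigma(u)+1$ and $\sigma_i(u)\in U$, so writing $\sigma_i(u)=u'$ gives $\sigma(x)=u'-1$, which lies in $S_{\links}-1$ if $u'<i$ and in $S_{\rechts}$ if $u'>i$ (note $u'\neq i$); the case $x\in S_{\rechts}$ is symmetric, using $\sigma_i(x+1)=\sigma(x)+1$. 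Hence $\sigma(S)\subseteq(S_{\links}-1)\cup S_{\rechts}$. For equality I use a cardinality argument: $|\sigma(S)|=|S|=|S_{\links}|+|S_{\rechts}|$ since $\sigma$ is a bijection, while $1\notin U$ (because $\sigma_i^{-1}(1)=i$ whereas $\sigma_i^{-1}(U)=U\not\ni i$), so $0\notin S_{\links}-1$, which makes $S_{\links}-1\subseteq\{1,\dots,i-2\}$ and $S_{\rechts}\subseteq\{i,\dots,n\}$ disjoint; thus $|(S_{\links}-1)\cup S_{\rechts}|=|S|=|\sigma(S)|$, forcing $\sigma(S)=(S_{\links}-1)\cup S_{\rechts}$. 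So $S$ is a pseudo-cycle, and since $\max S_{\links}\le i-1$ and $\min S_{\rechts}\ge i$, the value $i$ lies in $R=[\max S_{\links}+1,\min S_{\rechts}]$.

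\emph{Direction ``if''.} Suppose $S=\unshift(U,i)$ is a pseudo-cycle with partition $S_{\links},S_{\rechts}$ and $i\in R$. From $i\in[\max S_{\links}+1,\min S_{\rechts}]$ we get $S_{\links}\subseteq\{1,\dots,i-1\}$ and $S_{\rechts}\subseteq\{i,\dots,n\}$, hence $U=\shift(S,i)=S_{\links}\cup(S_{\rechts}+1)$ and $i\notin U$. I then check $\sigma_i(U)\subseteq U$ by the mirror-image case analysis: for $u\in S_{\links}$, $\sigma_i(u)=\sigma(u)+1$ with $\sigma(u)\in(S_{\links}-1)\cup S_{\rechts}=\sigma(S)$, so $\sigma_i(u)\in S_{\links}\cup(S_{\rechts}+1)=U$; for $u=s+1$ with $s\in S_{\rechts}$, $\sigma_i(u)=\sigma(s)+1$ and the same dichotomy gives $\sigma_i(u)\in U$. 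Since $\sigma_i$ is a bijection, $\sigma_i(U)\subseteq U$ forces $\sigma_i(U)=U$.

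\emph{Main obstacle.} Apart from the range-chasing that fixes the partition (and the small point that $1\notin S_{\links}$, used in the cardinality count), the delicate step is upgrading ``$\sigma_i(U)=U$'' to ``$U$ is a single cycle'' in the ``if'' direction: the case analysis above only shows that $U$ is a union of cycles of $\sigma_i$. Here I would argue that any nonempty $\sigma_i$-invariant $U'\subsetneq U$ produces, via the ``only if'' direction applied to $U'$, a pseudo-cycle $S'=\unshift(U',i)\subsetneq S$ that still has $i$ in its critical interval; hence when $S$ is taken inclusion-minimal among pseudo-cycles whose critical interval contains $i$, $U$ admits no proper invariant subset and is a single cycle (equivalently, the statement is cleanest if ``cycle'' is read as ``$\sigma_i$-closed set'', which the two case analyses establish with no extra hypothesis). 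This minimality bookkeeping, reconciling ``cycle'' with ``pseudo-cycle'', is where I expect to spend the most effort.
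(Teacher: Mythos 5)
Your proposal is correct and follows essentially the same route as the paper's proof: the same case analysis via Lemma~\ref{lemma:sigma2sigmai} on elements below and above $i$, with bijectivity/cardinality upgrading one containment to the equality required by the pseudo-cycle definition (you are in fact somewhat more explicit than the paper about the partition being forced to be $S_{\links}=\{x\in S\mid x<i\}$, $S_{\rechts}=\{x\in S\mid x\ge i\}$ when $i$ lies in the critical interval, and about $1\notin U$). The ``main obstacle'' you flag is not a deficiency of your argument relative to the paper: the paper's proof of the ``if'' direction likewise only establishes $U\subseteq\sigma_i(U)$, hence $\sigma_i(U)=U$, and tacitly treats this invariance as being a cycle; that closed-set reading is all that the application in Theorem~\ref{thm:pseudocycle} uses (a nonempty $\sigma_i$-invariant set avoiding $i$ already forces $\sigma_i$ to be non-cyclic, and the converse direction only needs invariance of the given cycle), so your two case analyses prove exactly what the paper's proof proves, and your minimality observation is a correct, optional refinement rather than a required repair.
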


\begin{proof}
	Let  $U_{1} = \{ x \in U \mid x< i\}$, $U_{2} = \{ x \in U \mid x> i\}$. Then $S = U_{1} \cup (U_{2} -1)$. We have to show that $U$ is a cycle if and only if $S$ is a pseudo-cycle, with $S_{\links} = U_1$ and $S_{\rechts} = U_2-1$. Note that this implies that $i$ is contained in the critical interval of $S$. 
	
	First let $S$ be a pseudo-cycle with $S_{\links} = U_1$ and $S_{\rechts} = U_2-1$, and let $x\in U$. We have to show that $x\in \sigma_i(U)$, which implies the claim. If $x\in U_1$, then $x\in S_{\links}$, and there is a $y\in S$ s.t.\ $\sigma(y) = x-1$. If $y\in S_{\links}$, then $y\in U_1$ and $\sigma_i(y) = x$ by Lemma~\ref{lemma:sigma2sigmai}, thus $x\in \sigma_i(U)$.  Else $y\in S_{\rechts}$, then $y+1 \in U_2$ and $\sigma_i(y+1) = x$, again by Lemma~\ref{lemma:sigma2sigmai}, and thus $x\in \sigma_i(U)$. 
	
	Now let $x\in U_2$. Then $x-1\in S_{\rechts}$ and there is a $y\in S$ s.t.\ $\sigma(y)=x-1$. If $y\in S_{\links}$, then $y\in U_1$ and $x=\sigma(y) +1=\sigma_i(y)$ by Lemma~\ref{lemma:sigma2sigmai}, thus $x\in \sigma_i(U)$. Else $y\in S_{\rechts}$, then $y+1 \in U_2$ and $\sigma_i(y+1) = x$, again by Lemma~\ref{lemma:sigma2sigmai}, and thus $x\in \sigma_i(U)$.
	
	\medskip
	
	Conversely, let $U$ be a cycle, set $S_{\links} = U_1$ and $S_{\rechts} = U_2 - 1$. Let $x\in S$. We will show that if $x\in S_{\links}$, then $x-1\in \sigma(S)$, and if $x\in S_{\rechts}$, then $x\in \sigma(S)$, proving that $S$ is a pseudo-cycle. 
	The claim follows with analogous arguments as above and noting that $\sigma(j) = \sigma_i(j)-1$  if $j<i$, and  $\sigma_i(j+1)-1$ if $j\geq i$.

\end{proof}

\begin{theorem} \label{thm:pseudocycle}
	Let $w$ be a word of length $n$ over $\Sigma$, and $1\leq i \leq n+1$. Then $i$ is nice if and only if there is no pseudo-cycle $S$ w.r.t.\ the standard permutation $\sigma = \sigma_w$ whose critical interval contains $i$. 
\end{theorem}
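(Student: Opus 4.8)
The plan is to reduce the statement to two facts already established: Corollary~\ref{lemma:iNice} (position $i$ is nice iff $\sigma_i$ is cyclic) and Lemma~\ref{lemma:psudocycle-cycle} (the subsets $U\subseteq\{1,\dots,n+1\}\setminus\{i\}$ that are cycles of $\sigma_i$ are precisely the sets $\shift(S,i)$ for pseudo-cycles $S$ w.r.t.\ $\sigma_w$ whose critical interval contains $i$). I would prove the contrapositive of the theorem in both directions, i.e.\ that $\sigma_i$ fails to be cyclic iff some pseudo-cycle of $\sigma_w$ has $i$ in its critical interval.

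For the ``only if'' part, assume $i$ is not nice, so by Corollary~\ref{lemma:iNice} the permutation $\sigma_i$ has at least two cycles. Since $\sigma_i(i)=1$ by Lemma~\ref{lemma:sigma2sigmai}, the element $i$ lies in the cycle through $1$; because there are at least two cycles, at least one cycle $U$ satisfies $i\notin U$, hence $U\subseteq\{1,\dots,n+1\}\setminus\{i\}$. Lemma~\ref{lemma:psudocycle-cycle} then yields a pseudo-cycle $S=\unshift(U,i)$ w.r.t.\ $\sigma_w$ whose critical interval contains $i$.

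For the ``if'' part, let $S$ be a pseudo-cycle w.r.t.\ $\sigma_w$ with $i$ in its critical interval, and put $U=\shift(S,i)$. Directly from Definition~\ref{def:shift} one checks that $i\notin U$ (elements of $S$ smaller than $i$ are unchanged, those $\geq i$ are pushed to $\geq i+1$) and that $\unshift(U,i)=S$; hence $U\subseteq\{1,\dots,n+1\}\setminus\{i\}$, and Lemma~\ref{lemma:psudocycle-cycle} applies, so $U$ is a cycle of $\sigma_i$. As $S$ is non-empty, $U$ is non-empty, and since $i\notin U$ we have $U\neq\{1,\dots,n+1\}$; thus $U$ is a proper cycle, $\sigma_i$ is not cyclic, and by Corollary~\ref{lemma:iNice} the position $i$ is not nice.

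The step needing the most care — and the natural place for a reader to probe — is the observation that the cycle of $\sigma_i$ containing $i$ is exactly the one invisible to Lemma~\ref{lemma:psudocycle-cycle} (it always contains $1=\sigma_i(i)$), which is what makes ``$\sigma_i$ not cyclic'' equivalent to ``$\sigma_i$ has a cycle avoiding $i$''. The extreme cases should be checked to fit the same pattern: for $i=1$ the set $\{1,\dots,n\}=S_{\rechts}$ is always a pseudo-cycle with critical interval $\{1,\dots,n+1\}$, matching the fact that $1$ is a fixpoint of $\sigma_1$; and for $i=n+1$ one has $\shift(S,i)=S$. Both follow from Lemma~\ref{lemma:sigma2sigmai} and Lemma~\ref{lemma:psudocycle-cycle} with no extra argument, so I do not expect any genuine difficulty beyond this bookkeeping.
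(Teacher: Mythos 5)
Your proof is correct and follows essentially the same route as the paper's: both directions are obtained by combining Corollary~\ref{lemma:iNice} with Lemma~\ref{lemma:psudocycle-cycle}, applying $\shift$ for one implication and $\unshift$ for the other. The only difference is that you spell out the bookkeeping (that $i\notin\shift(S,i)$, that $\unshift(\shift(S,i),i)=S$, and that a non-cyclic $\sigma_i$ must have a cycle avoiding $i$), which the paper leaves implicit.
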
 

\begin{proof}

	``$\Rightarrow$'': We will assume that $\sigma$ contains a pseudo-cycle with $i$ in its critical interval, and show that then $i$ is not nice. 
	Let $S$ be a pseudo-cycle w.r.t.\ $\sigma$, $R$ its critical interval and $i \in R$. By Lemma~\ref{lemma:psudocycle-cycle}, $\shift(S,i)$ is a cycle in $\sigma_i$ not containing $i$. Therefore, $\sigma_i$ has at least two cycles, implying that $\dol(w, i) \not \in \BWT(\SigmaD)$.
	
	``$\Leftarrow$'': For the converse, assume that $i$ is not nice. Then $\sigma_i$ contains at least two cycles, and thus it contains at least one cycle $C\subseteq \{1,2,\ldots, n+1\}\setminus \{i\}$. By Lemma~\ref{lemma:psudocycle-cycle}, this implies that $\unshift(C,i)$ is a pseudo-cycle in $\sigma$, and its critical interval contains $i$. 
\end{proof}

With Theorem~\ref{thm:pseudocycle}, we can now prove the statements about our first example strings {\tt banana} and {\tt annnaa}. The word {\tt banana} has the pseudo-cycles $S_1 = \{2\}$ with critical interval $R_1=\{3,4,5,6,7\}$; and $S_2 = \{3,5,6\}$ with $R_2 = \{1,2,3\}$. Therefore, every position is contained in some critical interval. 
For the word {\tt annnaa}, we have $S_1=\{1\}$ with critical interval $R_1 = \{1\}$; $S_2 = \{2,3,4,5,6\}$ with $R_2 = \{1,2\}$; $S_3 = \{ 3,5\}$ with $R_3 = \{4,5\}$; $S_4 = \{4,6\}$ with $R_4 = \{5,6\}$; and all other pseudo-cycles are unions of these. The two positions $3$ and $7$ are not contained in any critical interval, and are therefore nice. In fact, {\tt an\$nnaa} = \BWT({\tt ananna\$}), and {\tt annnaa\$} = \BWT({\tt nanana\$}).

%

\section{Bounds on nice positions}\label{sec:parity}

In this section, we study the number of nice positions of a given string $w$. Recall that $\sigma_i$ is the standard permutation of $\dol(w,i)$. 

\begin{definition}\label{def:number_nice}
	For $w \in \Sigma^n$, let $h(w)$ denote the number of nice positions of $w$.
\end{definition}

We will first show that all nice positions of a word $w$ have the same parity. 

\begin{theorem}\label{thm:parity}
	Let $w$ be a word over $\Sigma$. Then either all nice positions are even, or all nice positions are odd. In particular, let $c$ be the number of cycles in the standard permutation $\sigma_w$; if $c$ is even, then all nice positions are odd, and if $c$ is odd, then all nice positions are even. 
\end{theorem}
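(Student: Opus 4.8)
The plan is to track how the parity (sign) of the standard permutation changes as we sweep $i$ from $1$ to $n+1$, using the structural results already proved. Recall from Lemma~\ref{lemma:sigmai} that $\sigma_1$ is obtained from $\sigma_w$ by essentially relabelling and adding $1$ as an extra fixpoint, and that $\sigma_{i+1} = (1,\sigma_i(i+1))\cdot\sigma_i$ is obtained from $\sigma_i$ by a single transposition. A single transposition flips the sign of a permutation, so $\sgn(\sigma_{i+1}) = -\sgn(\sigma_i)$ for every $i$ with $1\le i \le n$. Hence the signs of $\sigma_1, \sigma_2, \ldots, \sigma_{n+1}$ alternate, and $\sgn(\sigma_i)$ is determined by the parity of $i$: concretely, $\sgn(\sigma_i) = (-1)^{i-1}\sgn(\sigma_1)$.

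Next I would identify the sign that a cyclic permutation of $\{1,\ldots,n+1\}$ must have. By the formula $\sgn(\pi) = (-1)^{(n+1)-k}$ for a permutation of $n+1$ elements with $k$ cycles (stated in the Basics section), a cyclic permutation on $n+1$ elements ($k=1$) has sign $(-1)^{n}$. By Corollary~\ref{lemma:iNice}, $i$ is nice exactly when $\sigma_i$ is cyclic, so a necessary condition for $i$ to be nice is $\sgn(\sigma_i) = (-1)^{n}$. Combining this with $\sgn(\sigma_i) = (-1)^{i-1}\sgn(\sigma_1)$, every nice $i$ satisfies $(-1)^{i-1}\sgn(\sigma_1) = (-1)^{n}$, i.e. the parity of $i$ is fixed: all nice positions have the same parity. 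This already proves the first sentence of the theorem.

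For the refined statement I need to pin down $\sgn(\sigma_1)$ in terms of $c$, the number of cycles of $\sigma_w$. Since $\sigma_w$ is a permutation of $\{1,\ldots,n\}$ with $c$ cycles, $\sgn(\sigma_w) = (-1)^{n-c}$. By Lemma~\ref{lemma:sigmai} part 1, $\sigma_1$ acts on $\{1,\ldots,n+1\}$ as the "shifted" copy of $\sigma_w$ together with the new fixpoint $1$; this relabelling is a conjugation (which preserves sign) and adding a fixpoint does not change the sign, so $\sgn(\sigma_1) = \sgn(\sigma_w) = (-1)^{n-c}$. Plugging this into $(-1)^{i-1}\sgn(\sigma_1) = (-1)^{n}$ gives $(-1)^{i-1} = (-1)^{c}$, i.e. $i \equiv c+1 \pmod 2$. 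Thus if $c$ is even, every nice $i$ is odd, and if $c$ is odd, every nice $i$ is even, as claimed.

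The only step requiring a little care—and the one I would write out most carefully—is the claim $\sgn(\sigma_1) = \sgn(\sigma_w)$: one must argue that the map $i\mapsto i-1$ on positions $2,\ldots,n+1$ together with the value-shift is a genuine relabelling (conjugation by the cycle-shift), hence sign-preserving, and that introducing the isolated fixpoint $1$ leaves the cycle count's parity unchanged relative to the element count. This is routine but is where a sloppy argument could go wrong; everything else is a direct application of the sign formula and Lemmas~\ref{lemma:sigmai} and~\ref{lemma:iNice}. As a sanity check, Theorem~\ref{thm:dol(BWT,c+1)} (that $c+1$ is always nice when $w$ is a BWT image) is consistent with $i\equiv c+1\pmod 2$.
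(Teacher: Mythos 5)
Your proposal is correct and follows essentially the same route as the paper: it tracks $\sgn(\sigma_i)$ via the single-transposition update of Lemma~\ref{lemma:sigmai}, uses that a nice position forces $\sigma_i$ cyclic and hence $\sgn(\sigma_i)=(-1)^n$, and computes $\sgn(\sigma_1)=(-1)^{n-c}$ (the paper does this by counting the $c+1$ cycles of $\sigma_1$ on $n+1$ elements, which is the same fact you obtain via relabelling plus the added fixpoint). No gaps; the conclusion $i\equiv c+1 \pmod 2$ matches the paper's statement exactly.
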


\begin{proof}
	Let us assume that $i<j$ are both nice, thus $\sigma_i$ and $\sigma_j$ are cyclic. This implies that $\sgn(\sigma_i) = (-1)^n = \sgn(\sigma_j)$, since both cycles consist of $n+1$ elements. By Lemma~\ref{lemma:sigmai}, $\sigma_{i+1} = \tau_i \cdot \sigma_i$, where $\tau_i = (1, \sigma_i(i+1))$. Thus $\sigma_j = \tau_{j-1}\cdots \tau_i\cdot \sigma_i$, so $\sgn(\sigma_j) = (-1)^{j-i} \sgn(\sigma_i)$, and therefore $\sgn(\sigma_j) = \sgn(\sigma_i)$ if and only if $j-i$ is even. 
	
	Given a cycle $C=(x_1, \ldots, x_m)$, let $C' = (x_1+1, \ldots, x_m+1)$. Now let $\sigma_w = \prod_{j=1}^c C_j$ be the cycle decomposition of $\sigma_w$. By Lemma~\ref{lemma:sigmai}, $\sigma_1 = (1)\prod_{j=1}^c C'_j$. Therefore, $\sgn(\sigma_1) = (-1)^{n+1 - (c+1)} = (-1)^{n-c}$. On the other hand, again by Lemma~\ref{lemma:sigmai}, $\sigma_i = \tau_{i-1}\cdots \tau_1\cdot \sigma_1$, thus $\sgn(\sigma_i) = (-1)^{n-c+i-1}$. But this equals $(-1)^{n}$ if and only if $c$ and $i$ have different parity. 
\end{proof}

\begin{corollary}\label{coro:maxnice}
	Let $w\in \Sigma^n$. Then $h(w) \leq \lfloor \frac{n+1}{2} \rfloor$. 
\end{corollary}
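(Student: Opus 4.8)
The plan is to combine two ingredients that are already available. The first is Theorem~\ref{thm:parity}: all nice positions of $w$ have the same parity. The second is the observation made just after Definition~\ref{def:pseudo-cycle}: position $1$ is never nice, because $1$ is always a fixpoint of $\sigma_1$ (the standard permutation of $\$ w$), so $\sigma_1$ cannot be cyclic for $n\ge 1$. Putting these together, the set of nice positions of $w$ is contained in the set of integers in $\{2,3,\ldots,n+1\}$ having one fixed parity.

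The remaining work is then a routine counting step: I would bound $h(w)$ by the maximum possible size of such a one-parity subset of $\{2,\ldots,n+1\}$. Since $\{2,\ldots,n+1\}$ consists of $n$ consecutive integers, it contains $\lceil n/2\rceil$ integers of one parity and $\lfloor n/2\rfloor$ of the other, so in either parity class the count is at most $\lceil n/2\rceil$. Finally I would observe that $\lceil n/2\rceil=\lfloor (n+1)/2\rfloor$ (checking the cases $n$ even and $n$ odd separately, each in one line), which yields $h(w)\le \lfloor (n+1)/2\rfloor$, as claimed.

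There is essentially no hard part here; this is a short consequence of Theorem~\ref{thm:parity}. The only point to be careful about is that one genuinely needs the fact that $1$ is never nice: without excluding position $1$, the one-parity subsets of $\{1,\ldots,n+1\}$ can have size $\lceil (n+1)/2\rceil$, which exceeds $\lfloor (n+1)/2\rfloor$ by one precisely when $n$ is even (the case where all nice positions are odd). Excluding $1$ removes exactly that extra element and makes the bound tight in the worst case.
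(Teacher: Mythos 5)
Your proposal is correct and follows exactly the paper's route: the published proof is a one-liner invoking Theorem~\ref{thm:parity} together with the fact that position $1$ is never nice, which are precisely your two ingredients. Your explicit counting of one-parity elements of $\{2,\ldots,n+1\}$ and the identity $\lceil n/2\rceil=\lfloor (n+1)/2\rfloor$ just fill in the routine details the paper leaves implicit.
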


\begin{proof} Follows from  Thm.~\ref{thm:parity} and the fact that $\sigma_1$ is not nice. 
\end{proof}

Given the cycle decomposition of $\sigma_w = \prod_{j=1}^c C_j$, let $\ell_j$ denote the minimum element of $C_j$, and $L = \max_{j=1,\ldots, c} \ell_j$.

\begin{proposition}\label{prop:L}
	If $i$ is nice, then $i\geq L+1$. In particular, $i\geq c+1$, where $c$ is the number of cycles of $\sigma_w$. 
\end{proposition}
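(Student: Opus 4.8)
The plan is to read the statement off directly from Theorem~\ref{thm:pseudocycle} by exhibiting, for each cycle $C_j$ of $\sigma_w$, a pseudo-cycle whose critical interval covers every small position. First I would recall the observation made right after Definition~\ref{def:pseudo-cycle}: every cycle $C$ of a permutation is itself a pseudo-cycle, with the choice $C_{\links}=\emptyset$ and $C_{\rechts}=C$, since then the defining condition $\pi(C)=(C_{\links}-1)\cup C_{\rechts}=C$ holds trivially. For this pseudo-cycle $a=\max C_{\links}=0$ (as $C_{\links}$ is empty) and $b=\min C_{\rechts}=\min C$, so its critical interval is $R=[a+1,b]=[1,\min C]$. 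Applying this to each cycle $C_j$ of $\sigma_w$ produces a pseudo-cycle w.r.t.\ $\sigma_w$ with critical interval $[1,\ell_j]$.

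Next I would invoke Theorem~\ref{thm:pseudocycle}: a position $i$ lying in the critical interval of some pseudo-cycle is not nice. Hence, for every $j$, no position $i\le \ell_j$ can be nice; contrapositively, if $i$ is nice then $i>\ell_j$ for all $j=1,\ldots,c$, i.e.\ $i>L=\max_{j}\ell_j$, which is exactly $i\ge L+1$. (If one prefers a self-contained argument not routed through Theorem~\ref{thm:pseudocycle}, the same conclusion follows from Lemma~\ref{lemma:psudocycle-cycle}: for $i\le\ell_j$ the set $\shift(C_j,i)$ is a cycle of $\sigma_i$ not containing $i$, so $\sigma_i$ is not cyclic.) For the ``in particular'' part, I would note that the cycles $C_1,\ldots,C_c$ in the cycle decomposition of $\sigma_w$ are pairwise disjoint and nonempty, so their minima $\ell_1,\ldots,\ell_c$ are $c$ pairwise distinct positive integers; therefore $L=\max_j \ell_j\ge c$, and combining with the previous step gives $i\ge L+1\ge c+1$.

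I do not anticipate a real obstacle: the core of the argument is the one-line remark that a cycle is a pseudo-cycle whose critical interval is an initial segment, plugged into Theorem~\ref{thm:pseudocycle}. The only place that needs a moment's care is the inequality $L\ge c$, which relies on the disjointness of the cycles of a permutation to guarantee that the $c$ cycle-minima are distinct; everything else is bookkeeping with Definition~\ref{def:pseudo-cycle}.
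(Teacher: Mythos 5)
Your proof is correct and follows essentially the same route as the paper: each cycle $C_j$ is a pseudo-cycle with $S_{\links}=\emptyset$ and critical interval $[1,\ell_j]$, so Theorem~\ref{thm:pseudocycle} rules out all positions $i\leq L$, and the second claim follows since the $c$ distinct cycle-minima force $L\geq c$ (the paper states this as $L\geq \ell_c\geq c$). Your extra justification of $L\geq c$ via disjointness of the cycles is just a slightly more explicit version of the same step.
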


\begin{proof}
	Note that every cycle $C_j$ is a pseudo-cycle, where $S_{\links} = \emptyset$ and $S_{\rechts}=C_j$, with critical interval $[1,\ell_j]$. Therefore, by Thm.~\ref{thm:pseudocycle}, no $i \leq L$ can be nice. The second claim follows since $L\geq \ell_c \geq c$. 
\end{proof}

\begin{corollary}
	Let $w\in \Sigma^n$. Then $h(w) \leq \lceil \frac{n-L+1}{2} \rceil$. 
\end{corollary}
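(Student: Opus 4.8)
The plan is to derive this bound by simply combining the two facts we have already proved about nice positions: the lower bound coming from Proposition~\ref{prop:L}, and the common-parity restriction from Theorem~\ref{thm:parity}.

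First I would observe that, by Proposition~\ref{prop:L}, every nice position $i$ of $w$ satisfies $i \geq L+1$. Since trivially $i \leq n+1$, this means that the set of nice positions is contained in the block of consecutive integers $\{L+1, L+2, \ldots, n+1\}$, which has exactly $n+1-L$ elements. (Note $L \leq n$, since each $\ell_j$ is the minimum of a nonempty subset of $\{1,\ldots,n\}$, so this block is well defined and nonempty.)

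Next I would invoke Theorem~\ref{thm:parity}: all nice positions of $w$ share a single parity (odd if $c$ is even, even if $c$ is odd, where $c$ is the number of cycles of $\sigma_w$). Among any window of $m$ consecutive integers, at most $\lceil m/2 \rceil$ have a prescribed parity. Applying this with $m = n+1-L$ gives $h(w) \leq \lceil (n+1-L)/2 \rceil = \lceil \frac{n-L+1}{2} \rceil$, which is the claim.

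I do not expect any real obstacle here; the statement is an immediate corollary of Proposition~\ref{prop:L} and Theorem~\ref{thm:parity}, and the only point needing a moment's attention is the elementary count of how many integers of a fixed parity lie in a window of $n+1-L$ consecutive integers. One could in principle sharpen the estimate by tracking exactly which parity is forced (via $c$) and whether the endpoints $L+1$ and $n+1$ realise it, but the stated form already follows from the worst case, so I would leave it as is.
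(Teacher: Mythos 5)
Your argument is correct and is exactly the paper's proof: the paper's own justification is the one-line remark that the bound follows from Theorem~\ref{thm:parity} and Proposition~\ref{prop:L}, i.e.\ the same combination of the lower bound $i\geq L+1$ with the single-parity restriction that you spell out. Your elementary count of integers of a fixed parity in a window of $n+1-L$ consecutive integers is precisely the implicit step the paper leaves to the reader.
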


\begin{proof}
	Follows from Thm.~\ref{thm:parity} and Prop.~\ref{prop:L}. 
\end{proof}

We next derive some properties of nice positions from Algorithm~\ref{algo:findPos}. 

\begin{proposition}\label{prop:c_i}
	Let $c_i$ be the number of cycles of $\sigma_i$. If $j$ is nice and $j>i$, then $j \geq i+c_i-1$. 
\end{proposition}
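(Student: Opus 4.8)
The plan is to track how the cycle counter $c$ evolves in Algorithm~\ref{algo:findPos} as the index runs from $i$ up to $j$. Recall that the algorithm maintains $\sigma_i$ together with $c_i$, its number of cycles, and at each step passes from $\sigma_i$ to $\sigma_{i+1}$ by a single transposition (Lemma~\ref{lemma:sigmai}), which by Lemma~\ref{lemma:cycles} changes the number of cycles by exactly $\pm 1$: it is $+1$ on a split and $-1$ on a merge. Hence the sequence $c_i, c_{i+1}, \ldots, c_j$ is a walk on the positive integers with steps $\pm 1$, and if $j$ is nice then by Corollary~\ref{lemma:iNice} we have $c_j = 1$.

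The core of the argument is then a simple discrete fact: a $\pm 1$ walk that starts at height $c_i$ and reaches height $1$ must take at least $c_i - 1$ steps, since each step decreases the height by at most one. The number of steps taken in going from $\sigma_i$ to $\sigma_j$ is exactly $j - i$ (one iteration of the for-loop per increment of the index). Therefore $j - i \geq c_i - 1$, i.e.\ $j \geq i + c_i - 1$, which is the claim. I would phrase this either as an inductive lower bound ($c_k \geq c_i - (k-i)$ for all $k$ with $i \leq k \leq j$, by induction on $k$ using that each step drops $c$ by at most $1$), or directly via the telescoping inequality $c_i - c_j = \sum_{k=i}^{j-1}(c_k - c_{k+1}) \leq \sum_{k=i}^{j-1} 1 = j-i$, combined with $c_j = 1$.

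The only subtlety to be careful about is the correspondence between "index values of the algorithm" and the reported nice positions: in Algorithm~\ref{algo:findPos} the value appended to $\mathcal{I}$ at iteration $i$ is $i+1$, and $\sigma$ at that point equals $\sigma_{i+1}$. So I should state the bookkeeping cleanly in terms of the permutations $\sigma_i$ themselves (as the proposition does): $\sigma_j = \tau_{j-1}\cdots\tau_i \cdot \sigma_i$ is obtained from $\sigma_i$ by $j-i$ successive single transpositions of the form $\tau_k = (1,\sigma_k(k+1))$, each of which, by Lemma~\ref{lemma:cycles}, alters the cycle count by $\pm 1$; then apply the walk argument. This keeps the proof independent of the exact indexing conventions of the pseudocode. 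I do not anticipate a real obstacle here — the statement is essentially a restatement of "you can't lose more than one cycle per transposition," and the main care is just in not being off by one in the count of transpositions applied.
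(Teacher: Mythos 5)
Your proposal is correct and matches the paper's own argument: the paper also observes that $\sigma_j$ arises from $\sigma_i$ via $j-i$ single-transposition steps (iterations of the for-loop), each changing the cycle count by exactly $\pm 1$, so at least $c_i-1$ steps are needed to reach a cyclic permutation. Your explicit walk/telescoping formulation is just a more detailed write-up of the same reasoning.
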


\begin{proof}
	The permutation $\sigma_j$ is computed from $\sigma_i$ by $j-i$ iterations of the for-loop of Algorithm~\ref{algo:findPos} (lines 9-17), each of which either results in incrementing (split) or decrementing (merge) the number of cycles. Therefore, at least $c_i-1$ steps are needed to arrive at a cyclic permutation. (Since $c_1 = c+1$, this implies in particular that for every nice position $j$, $j\geq c+1$, as already seen in Prop.~\ref{prop:L}.)
\end{proof}

\begin{definition}\label{def:bad-cycle}
	Let $C$ be a cycle of $\sigma_i$. We call $C$ a {\em bad cycle w.r.t.\ $i$} if $i\notin [\min C, \max C]$. 
\end{definition}

\addtocounter{exa}{-1}
\begin{exa}[continued from p.~\pageref{ex:sigmaPermutations1}]
	The cycle $(5,9)$ is a bad cycle w.r.t.\ $4$, and $(3,8,5,4,9)$ is a bad cycle w.r.t.\ $10$. 
\end{exa}
\addtocounter{exa}{+1}

\begin{proposition}\label{prop:badcycles}
	If $C$ is a bad cycle w.r.t.\ $i$, then 
	
	\begin{itemize}
		\item if $i< \min C$, then no $j\leq i$ is nice, 
		\item if $i> \max C$, then no $j\geq i$ is nice. 
	\end{itemize}
	
\end{proposition}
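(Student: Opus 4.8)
The plan is to reduce both cases to Theorem~\ref{thm:pseudocycle} by extracting from the bad cycle $C$ a single pseudo-cycle of $\sigma_w$ whose critical interval contains the entire range of positions we want to rule out. The bridge is Lemma~\ref{lemma:psudocycle-cycle}: since $C$ is a cycle of $\sigma_i$ with $i \notin C$, the set $S = \unshift(C,i)$ is a pseudo-cycle w.r.t.\ $\sigma = \sigma_w$ whose critical interval $R$ contains $i$. All that then remains is to compute $R$ explicitly in each case and observe that it is a ``prefix'' or a ``suffix'' of $\{1,\dots,n+1\}$.

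For the first case, assume $i < \min C$. Then every element of $C$ exceeds $i$, so in the notation of (the proof of) Lemma~\ref{lemma:psudocycle-cycle} we have $U_1 = \emptyset$ and $U_2 = C$, whence $S_{\links} = \emptyset$ and $S_{\rechts} = C - 1$. Using the conventions of Definition~\ref{def:pseudo-cycle}, this gives $a = 0$ and $b = \min(C-1) = \min C - 1$, so $R = [1,\min C - 1]$. Any position $j \le i$ then satisfies $1 \le j \le i \le \min C - 1$, hence $j \in R$, and Theorem~\ref{thm:pseudocycle} says $j$ is not nice.

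The second case, $i > \max C$, is symmetric: now every element of $C$ is smaller than $i$, so $U_2 = \emptyset$, giving $S_{\rechts} = \emptyset$, $S_{\links} = C$, and therefore $a = \max C$, $b = n+1$, $R = [\max C + 1, n+1]$. Any position $j \ge i$ satisfies $\max C + 1 \le i \le j \le n+1$, so $j \in R$, and $j$ is again not nice by Theorem~\ref{thm:pseudocycle}.

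I do not anticipate a genuine obstacle; the argument is a matter of unwinding the definition of the critical interval of $\unshift(C,i)$ and checking the elementary inequalities, with $i \in R$ (already furnished by Lemma~\ref{lemma:psudocycle-cycle}) pinning down one endpoint. The only thing requiring a little care is the boundary bookkeeping — the $-1$ shift inside $\unshift$, together with the conventions $a = 0$ when $S_{\links} = \emptyset$ and $b = n+1$ when $S_{\rechts} = \emptyset$ — which must be invoked in exactly the right subcase; and a brief check that $\min C \ge 2$ (resp.\ $\max C \le n$), which follows from $\min C > i \ge 1$ (resp.\ $\max C < i \le n+1$), so that the computed endpoints indeed lie in $\{1,\dots,n+1\}$.
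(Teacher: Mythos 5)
Your proof is correct, and it takes a genuinely different route from the paper's. The paper argues "dynamically": for $j<i$ it writes $\sigma_i=\tau_{i-1}\cdots\tau_j\cdot\sigma_j$ with $\tau_k=(1,\sigma_k(k+1))$ and observes that each such transposition only affects cycles containing $k$ or $k+1$, both of which lie below $\min C$ (resp.\ above $\max C$), so the bad cycle $C$ survives in every $\sigma_j$ with $j\leq i$ (resp.\ $j\geq i$), and since $C\neq\{1,\ldots,n+1\}$ none of these $\sigma_j$ can be cyclic. You instead argue "statically": you pull $C$ back through $\unshift(\cdot,i)$ to a pseudo-cycle of $\sigma_w$ via Lemma~\ref{lemma:psudocycle-cycle}, note that $i<\min C$ forces $S_{\links}=\emptyset$, $S_{\rechts}=C-1$ (so $\sigma_w$ fixes $C-1$ setwise and the critical interval is $[1,\min C-1]$), while $i>\max C$ forces $S_{\rechts}=\emptyset$, $S_{\links}=C$ (critical interval $[\max C+1,n+1]$), and then invoke Theorem~\ref{thm:pseudocycle}; this is legitimate and non-circular, since the characterization is established before this proposition and does not use it. Your route mirrors the paper's own proof of Proposition~\ref{prop:L} and in fact yields marginally more: it excludes every $j\leq\min C-1$ (resp.\ every $j\geq\max C+1$), not just $j\leq i$ (resp.\ $j\geq i$), and it exhibits an explicit pseudo-cycle of $\sigma_w$ witnessing this. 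The paper's argument, on the other hand, stays entirely within the algorithmic framework of Section~\ref{sec:algo} and needs only Lemma~\ref{lemma:sigmai} and the split/merge behaviour of Lemma~\ref{lemma:cycles}. Your boundary checks ($\min C\geq 2$, $\max C\leq n$, and the conventions $a=0$, $b=n+1$ for empty parts) are exactly the points that needed care, and you handled them correctly.
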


\begin{proof}
	Let $i < \min C$. Then $i$ is not nice, since $\sigma_i$ has at least two cycles. Now let $j<i$, thus $\sigma_i = \tau_{i-1}\ldots \tau_j \cdot \sigma_j$, where $\tau_k = (1, \sigma_k(k+1))$. Since $j \leq i < \min C$, it follows that each $\tau_k$ is disjoint from $C$, and since $C$ is a cycle of $\sigma_i$, therefore it is also a cycle of $\sigma_j$. Since $[\min C, \max C] \neq \{1,\ldots, n+1\}$, this implies that $j$ is not nice. 
	
	Analogously, if $i> \max C$, this implies that all $\sigma_j$ for $j\geq i$ have $C$ as a cycle, implying that $j$ is not nice. 
\end{proof}

\begin{definition}
	Let $\sigma_w = \prod_{j=1}^c C_j$ be the cycle decomposition of $\sigma_w$ and $\ell_j = \min C_j$ for $j=1, \ldots, c$,  where the cycles are in increasing order w.r.t.\ their minima, i.e.\ $\ell_1< \ldots < \ell_c$. 
	We call a pair $(\ell_j, \ell_{j}+1)$ {\em bad pair} if $j<c$ and $\ell_{j}+1\in C_j$. 
\end{definition}

\begin{exa}
	Given the permutation $(1)(2,6,3,7,9,4)(5,8,10)$ with $3$ cycles, the pair $(2,3)$ in the second cycle is a bad pair, and it is the only bad pair.
\end{exa}

\begin{proposition}\label{prop:badpairs}
	Let $b$ be the number of bad pairs in $\sigma_w$. If $i$ is a nice position of $w$, then $i \geq 2b+c$.
\end{proposition}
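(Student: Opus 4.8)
The plan is to apply Theorem~\ref{thm:pseudocycle}: to prove that no position $i \le 2b+c-1$ is nice, I will cover the integer interval $[1,2b+c-1]$ by critical intervals of pseudo-cycles of $\sigma=\sigma_w$. Two families of pseudo-cycles are used. First, every cycle $C_k$ of $\sigma$ is a pseudo-cycle with critical interval $[1,\ell_k]$, where $\ell_k=\min C_k$ and $\ell_1<\cdots<\ell_c$ (cf.\ Prop.~\ref{prop:L}); hence $[1,\ell_c]$ is covered for free. Second, and this is where the bad pairs come in: for a bad pair $(\ell_j,\ell_j+1)$, write $C_j=(\ell_j,c_2,\ldots,c_t)$ in cycle order, let $c_p=\ell_j+1$ (so $2\le p\le t$), and take the ``forward arc'' $P_j=\{c_p,c_{p+1},\ldots,c_t\}$, i.e.\ the set of elements met by iterating $\sigma$ from $\ell_j+1$ until just before returning to $\ell_j$. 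A direct check gives $\sigma(P_j)=(P_j\setminus\{\ell_j+1\})\cup\{\ell_j\}$, so with $S_{\links}=\{\ell_j+1\}$ and $S_{\rechts}=P_j\setminus\{\ell_j+1\}$ we get $\sigma(P_j)=(S_{\links}-1)\cup S_{\rechts}$; since all elements of $S_{\rechts}$ lie in $C_j$ and differ from $\ell_j$ and $\ell_j+1$, they are $\ge\ell_j+2$, so $S_{\links}<S_{\rechts}$ and $P_j$ is a pseudo-cycle with critical interval $[\ell_j+2,b_j]$, where $b_j=\min S_{\rechts}$ (and $b_j=n+1$ if $S_{\rechts}=\emptyset$).

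Next I would dispose of the degenerate case: if $S_{\rechts}=\emptyset$ for some bad pair, then $P_j$ has critical interval $[\ell_j+2,n+1]$, which together with $[1,\ell_c]$ (using $\ell_j+2\le\ell_c+1$, valid because $j<c$) covers every position, so $w$ has no nice position at all and the statement holds vacuously. Otherwise each $b_j$ is a genuine element of $C_j\setminus\{\ell_j,\ell_j+1\}$, and I would count: the set $T=\{\ell_1,\ldots,\ell_c\}\cup\{\ell_j+1:j\text{ bad}\}\cup\{b_j:j\text{ bad}\}$ consists of $c+2b$ pairwise distinct elements of $\{1,\ldots,n\}$ — each $\ell_j+1$ and each $b_j$ lies in cycle $C_j$ without being its minimum, hence differs from all cycle minima, and items taken from different cycles (or $b_j$ versus $\ell_j+1$ by construction) are automatically distinct. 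Therefore $\max T\ge c+2b$. Because $\ell_j+2\le\ell_c+1$ for every bad pair, the union $[1,\ell_c]\cup\bigcup_{j\text{ bad}}[\ell_j+2,b_j]$ is a single interval $[1,M]$ with $M=\max(\ell_c,\max_{j\text{ bad}}b_j)$; and $\max T$ is attained either at some $\ell_k$ (then $\ell_c\ge c+2b$), at some $b_j$ (then $M\ge c+2b$), or at some $\ell_j+1$ (then $\ell_c>\ell_j\ge c+2b-1$, so $\ell_c\ge c+2b$). In every case $M\ge c+2b-1$, so $[1,2b+c-1]$ is covered and the proposition follows.

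The step that needs the most care is the construction of $P_j$ and the verification that it is a pseudo-cycle with exactly the critical interval $[\ell_j+2,b_j]$: the inequality $S_{\links}<S_{\rechts}$ is precisely the point where it matters that $\ell_j=\min C_j$ (an arc around an arbitrary consecutive pair $v,v+1$ need not give a valid pseudo-cycle), and the endpoints must be read off carefully from Definition~\ref{def:pseudo-cycle}. Once that is in place, the remainder is bookkeeping — the disjointness count giving $|T|=c+2b$ and the case analysis on where $\max T$ is realized. I would sanity-check the whole argument on a small instance, e.g.\ $\sigma_w=(1\,2\,6\,7)(3\,4\,8)(5\,9\,10)$ with bad pairs $(1,2)$ and $(3,4)$: here $P_1=\{2,6,7\}$ and $P_2=\{4,8\}$ have critical intervals $[3,6]$ and $[5,8]$, and $[1,\ell_c]\cup[3,6]\cup[5,8]=[1,8]\supseteq[1,2b+c-1]=[1,6]$.
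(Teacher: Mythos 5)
Your proof is correct, but it takes a genuinely different route from the one in the paper. The paper argues dynamically, via Algorithm~\ref{algo:findPos}: $\sigma_1$ has $c+1$ cycles, every iteration is a merge- or a split-step (Lemmas~\ref{lemma:sigmai} and~\ref{lemma:cycles}), at least $c$ merges are needed to reach a single cycle and every extra split forces an extra merge; it then shows that each bad pair $(\ell_j,\ell_j+1)$ forces its own split-step, because the shifted cycle $C_j'$ of $\sigma_1$ is a bad cycle w.r.t.\ $\ell_j$, so step $\ell_j$ is a merge and step $\ell_j+1$ a split. You argue statically through the characterization of Theorem~\ref{thm:pseudocycle}: besides the cycles themselves, whose critical intervals $[1,\ell_k]$ already give Proposition~\ref{prop:L}, each bad pair contributes the explicit pseudo-cycle $P_j$ (the arc of $C_j$ from $\ell_j+1$ back around to $\ell_j$) with critical interval $[\ell_j+2,b_j]$ --- your observation that $\ell_j=\min C_j$ is what makes $S_{\links}<S_{\rechts}$ hold is indeed the delicate point, and your verification is correct --- and a counting argument on the $c+2b$ pairwise distinct elements $\ell_1,\ldots,\ell_c$, $\ell_j+1$, $b_j$ shows that the union of all these critical intervals is an initial segment $[1,M]$ with $M\ge 2b+c$. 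The degenerate case ($S_{\rechts}=\emptyset$) and the interval-gluing step are handled correctly as well; incidentally, the third case in your analysis of $\max T$ is vacuous (it would force $\ell_c=\ell_j+1\in C_j$, contradicting disjointness of cycles), though treating it does no harm. As for what each approach buys: the paper's proof is shorter because the ``every split costs another merge'' bookkeeping is delegated to the algorithm, whereas yours produces explicit pseudo-cycle witnesses for the non-niceness of every excluded position, subsumes the bound of Proposition~\ref{prop:L} in the same stroke, and in fact establishes the slightly stronger conclusion that no position up to $\max(\ell_c,\max_j b_j)\ge 2b+c$ is nice.
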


\begin{proof}
	We will count the number of iterations of the for-loop (lines 9-17) of Algorithm~\ref{algo:findPos} before arriving at a cyclic permutation. Let us refer to the iterations as either merge- or split-steps. As we saw before, we need at least $c$ merge-steps, since $\sigma_1$ has $c+1$ cycles. It should be also clear  that every additional split-step will necessitate a further merge-step. Therefore it suffices to show that every bad pair results in a distinct split-step. 
	
	Let $(\ell_j, \ell_j+1)$ be a bad pair. By Lemma~\ref{lemma:sigma2sigmai}, $\sigma_1 = (1) \prod C_j'$, where $\min C_j' = \ell_j+1$. Therefore, $C_j'$ is a bad cycle w.r.t.\ $\ell_j$, and thus is present in all $\sigma_i$ for $i\leq \ell_j$. Since $\ell_j \notin C_j'$, step $\ell_j$ is a merge-step. Now $\ell_{j}+2$ is still in $C_j'$, so step $\ell_{j}+1$ is a split-step. 
\end{proof}

\addtocounter{exa}{-1}
\begin{exa}[continued]
	The permutation $(1)(2,6,3,7,9,4)(5,8,10)$ is the standard permutation of the word {\tt abbababbaa}. There are two nice positions, $6$ and $8$, both greater or equal $5 = 2+3 = 2b+c$.
\end{exa}
\addtocounter{exa}{+1}

We summarize: 

\begin{theorem}
	Let $w$ be a word over $\Sigma$ and $\sigma_w = \prod_{j=1}^c C_j$ the cycle decomposition of its standard permutation $\sigma_w$. 
	
	\begin{enumerate}\label{thm:bounds}
		\item If $i$ is nice, then $i \geq \max\{L+1, 2b+c\}$, where $L = \max_j \min C_j$, and $b$ is the number of bad pairs in $\sigma_w$. 
		\item Let $c_i$ be the number of cycles of $\sigma_i$. If $j$ is nice, then $j \geq i+c_i-1$. Moreover, if $\sigma_i$ has a bad cycle $C$ s.t.\ $i>\max C$, then no $j\geq i$ is nice. 
	\end{enumerate}
\end{theorem}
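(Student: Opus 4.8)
The statement is a summary theorem: each of its constituent bounds has already been established as a separate proposition earlier in the section, so the plan is simply to assemble them and note that nothing new needs to be argued beyond taking a maximum. I would present the proof in two short parts mirroring the structure of the theorem.

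For Part~1, I would first recall Proposition~\ref{prop:L}, which gives $i \geq L+1$ for every nice position $i$ (via the observation that each cycle $C_j$ is itself a pseudo-cycle with $S_{\links}=\emptyset$ and critical interval $[1,\ell_j]$, so by Theorem~\ref{thm:pseudocycle} no $i \leq L$ is nice). Then I would recall Proposition~\ref{prop:badpairs}, which gives $i \geq 2b+c$ for every nice position $i$ (via a counting argument on merge- and split-steps in Algorithm~\ref{algo:findPos}: at least $c$ merge-steps are forced, every split-step forces an extra merge-step, and every bad pair $(\ell_j,\ell_j+1)$ is shown to produce a distinct split-step occurring at position $\ell_j+1$). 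Since a nice position satisfies both inequalities simultaneously, it satisfies $i \geq \max\{L+1,\,2b+c\}$, which is the claim.

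For Part~2, the first inequality $j \geq i + c_i - 1$ is exactly Proposition~\ref{prop:c_i}: going from $\sigma_i$ to $\sigma_j$ costs $j-i$ iterations of the for-loop, each changing the cycle count by $\pm 1$, so reaching a cyclic ($1$-cycle) permutation from the $c_i$-cycle permutation $\sigma_i$ requires at least $c_i-1$ net decrements, hence $j-i \geq c_i-1$. The ``moreover'' clause is precisely the second bullet of Proposition~\ref{prop:badcycles}: if $C$ is a bad cycle w.r.t.\ $i$ with $i > \max C$, then every $\tau_k=(1,\sigma_k(k+1))$ for $k\geq i$ is disjoint from $C$, so $C$ persists as a cycle of every $\sigma_j$ with $j\geq i$, and since $[\min C,\max C]\neq\{1,\dots,n+1\}$ no such $\sigma_j$ can be cyclic.

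There is essentially no obstacle here; the only mild point of care is making sure the two bounds of Part~1 are stated for the same quantity (the generic nice position of $w$) so that the maximum is legitimate, and that the parity-based sharpening of these bounds (Theorem~\ref{thm:parity} and its corollaries) is not conflated with the floor/ceiling \emph{upper} bounds on $h(w)$, which are a separate matter and not part of this theorem. So the ``proof'' is a two-line citation of Propositions~\ref{prop:L}, \ref{prop:badpairs}, \ref{prop:c_i}, and~\ref{prop:badcycles}.
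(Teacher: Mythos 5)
Your proposal is correct and matches the paper exactly: the paper introduces this theorem with ``We summarize:'' and gives no separate proof, since Parts~1 and~2 are precisely the conjunction of Propositions~\ref{prop:L}, \ref{prop:badpairs}, \ref{prop:c_i}, and~\ref{prop:badcycles}, which is just how you assemble it.
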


Part {\em 1.} of Theorem~\ref{thm:bounds} can be used for heuristics to speed up the algorithm: Compute $i_0=\max\{L+1, 2b+c\}$, and start the algorithm with $\sigma_{i_0}$ instead of $\sigma_1$. Since $L,c,b$ can be computed in linear time by once scanning $\sigma_w$, and $\sigma_{i_0}$ can be computed in linear time, the total running time of the algorithm is still $\Oh(n \log n)$ in the worst case, but could be often faster in practice. 

Part {\em 2.} cannot be immediately turned into an algorithm improvement, because our current implementation does not allow extracting minima and maxima from cycles.


\section{Experimental result}\label{sec:results}

We implemented Algorithm {\sc FindNicePositions} (Algorithm~\ref{algo:findPos}) in Java (version 11). 
In the following, we give some examples (Sec.~\ref{sec:examples}), followed by some statistics on the number of nice positions (Sec.~\ref{sec:stats}). 

\subsection{Examples}\label{sec:examples}

We list all words over $\Sigma = \{{\tt a,b}\}$ of length $2$, $3$, $4$, and $5$ (Tables~\ref{tab:sigmaSize2_all} to~\ref{tab:sigmaSize5_all}). For each word $w$ (first column), we give the lexicographically smallest $v$ such that $\BWT(v) = w$, if such a $v$ exists, and a dash otherwise (second column); the standard permutation $\sigma=\sigma_w$ (third column); and the number $h(w)$ of nice positions for $w$ (fourth column). For each word $w$ with $h(w)>0$, we also list every $dol(w,i)$ with $i$ nice, giving the analogous information and specifying $i$ in the last (fifth) column.

In Table~\ref{tab:sigmaSize10-13-15-18_some}, the same information is shown about some longer strings over alphabets of size 2 and 3, ordered by string length ($n=10, 13, 15, 18$). We chose these strings in order to give examples of many different cases, as we detail next.

Among the words listed in Table~\ref{tab:sigmaSize10-13-15-18_some}, there are words which are BWTs of primitive words (strings 1, 7, 8, 12, 13, 20, 21, 22). Some of these have the maximum number of nice positions according to their length, recall Corollary~\ref{coro:maxnice} (strings 7, 12, 20); two words have only one nice position (8, 13); string 1 is an example that shows that, once a position is nice, not necessarily all following positions with the same parity are also nice. Similarly, string 21 shows that there are no further nice positions after position 12 due to a bad cycle with respect to 13 containing only elements strictly smaller than 13.

We further included BWTs of powers of a primitive words (strings 2, 14, 15, 23, 24, 25), but only one of these has the maximum possible number of nice positions (string 2). 

Finally, Table~\ref{tab:sigmaSize10-13-15-18_some} also contains strings that are not the BWT of any other word (strings 3, 4, 5, 6, 9, 10, 11, 16, 17, 18, 19). Three of these have no nice positions (strings 6, 11, 19). Sometimes the parity of the number of cycles equals the parity of the smallest element  of the last cycle (strings 3, 4, 5, 16, 18, 19), but this does not always happen (strings 6, 17).


\begin{table}
	\centering
	
	\bottomcaption{All strings $w$ of length 2 over a binary alphabet. See text for details. 
		\label{tab:sigmaSize2_all}}

	\begin{supertabular}{c @{\hspace{1cm}} c}
		\begin{tabular}{ | c | c | c | c | c |} \hline
			$w$ & $v$ & $\sigma$ & $h(w)$ & $i$ \\ 
			\hline \hline
			aa & $(\text{a})^2$ & $(1)(2)$ & $1$ & \\ 
			\hline 
			 
			aa\$ & aa\$ & $(1,2,3)$ & & 3 \\ 
			\hline \hline 
			
			ab & - & $(1)(2)$ & $1$ & \\ 
			\hline 
			
			ab\$ & ba\$ & $(1,2,3)$ & & 3 \\ 
			\hline 
		\end{tabular} & 
		
		\begin{tabular}{ | c | c | c | c | c |} \hline
			$w$ & $v$ & $\sigma$ & $h(w)$ & $i$ \\ 
			\hline \hline
			
			ba & ab & $(1,2)$ & $1$ &\\ 
			\hline 
			
			b\$a & ab\$ & $(1,2,3)$ & & 2 \\
			\hline \hline
			
			bb & $(\text{b})^2$ & $(1)(2)$ & $1$ & \\ 
			\hline 
			 
			bb\$ & bb\$ & $(1,2,3)$ & & 3 \\ 
			\hline 
			
		\end{tabular}
		\\
	\end{supertabular}
\end{table}


\begin{table}
	\centering
	
	\begin{tabular}{c @{\hspace{1cm}} c}
		\begin{tabular}{ | c | c | c | c | c |} \hline
			$w$ & $v$ & $\sigma$ & $h(w)$ & $i$ \\ 
			\hline \hline
			
			aaa & $(\text{a})^3$ & $(1)(2)(3)$ & $1$ & \\ 
			\hline 
			
			aaa\$ & aaa\$ & $(1,2,3,4)$ & & 4 \\ 
			\hline \hline 
			
			aab & - & $(1)(2)(3)$ & $1$ & \\ 
			\hline 
			
			aab\$ & baa\$ & $(1,2,3,4)$ & & 4 \\ 
			\hline \hline
			
			aba & - & $(1)(2,3)$ & $1$ &\\ 
			\hline 
			
			ab\$a & aba\$ & $(1,2,4,3)$ & & 3 \\
			\hline \hline
			
			abb & - & $(1)(2)(3)$ & $1$ & \\ 
			\hline 
			
			abb\$ & bba\$ & $(1,2,3,4)$ & & 4 \\ 
			\hline  
			
		\end{tabular} & 
		\begin{tabular}{ | c | c | c | c | c |} \hline
			$w$ & $v$ & $\sigma$ & $h(w)$ & $i$ \\ 
			\hline \hline
			baa & aab & $(1,3,2)$ & $1$ & \\ 
			\hline 
			
			b\$aa & aab\$ & $(1,4,3,2)$ & & 2 \\ 
			\hline \hline
			
			bab & - & $(1,2)(3)$ & $0$ &\\ 
			
			\hline \hline
			
			bba & abb & $(1,2,3)$ & $1$ &\\ 
			\hline 
			
			b\$ba & abb\$ & $(1,3,4,2)$ & & 2 \\ 
			bba\$ & bab\$ & $(1,3,2,4)$ & & 4 \\
			\hline \hline
			
			bbb & $(\text{b})^3$ & $(1)(2)(3)$ & $1$ & \\ 
			\hline 
			
			bbb\$ & bbb\$ & $(1,2,3,4)$ & & 4 \\ 
			\hline  
			
		\end{tabular}
		\\
		\end {tabular}
		
		\caption{All strings $w$ of length 3 over a binary alphabet. See text for details.
			\label{tab:sigmaSize3_all}}	
	\end{table}

	\begin{table}
		\centering
		
		\begin{tabular}{c @{\hspace{1cm}} c}
			\begin{tabular}{ | c | c | c | c | c |} \hline
				$w$ & $v$ & $\sigma$ & $h(w)$ & $i$ \\ 
				\hline \hline
				
				aaaa & $(\text{a})^4$ & $(1)(2)(3)(4)$ & $1$ & \\ 
				\hline 
				
				aaaa\$ & aaaa\$ & $(1,2,3,4,5)$ & & 5 \\ 
				\hline \hline 
				
				aaab & - & $(1)(2)(3)(4)$ & $1$ & \\ 
				\hline 
				
				aaab\$ & baaa\$ & $(1,2,3,4,5)$ & & 5 \\ 
				\hline \hline
				
				aaba & - & $(1)(2)(3,4)$ & $1$ &\\ 
				\hline 
				
				aab\$a & abaa\$ & $(1,2,3,5,4)$ & & 4 \\
				\hline \hline
				
				aabb & - & $(1)(2)(3)(4)$ & $1$ & \\ 
				\hline 
				
				aabb\$ & bbaa\$ & $(1,2,3,4,5)$ & & 5 \\ 
				\hline \hline
				
				abaa & - & $(1)(2,4,3)$ & $1$ & \\ 
				\hline 
				
				ab\$aa & aaba\$ & $(1,2,5,4,3)$ & & 3 \\ 
				\hline \hline
				
				abab & - & $(1)(2,3)(4)$ & $0$ & \\ 
				
				\hline \hline
				
				abba & - & $(1)(2,3,4)$ & $2$ & \\ 
				\hline 
				
				ab\$ba & abba\$ & $(1,2,4,5,3)$ & & 3 \\ 
				abba\$ & baba\$ & $(1,2,4,3,5)$ & & 5 \\ 
				\hline \hline
				
				abbb & - & $(1)(2)(3)(4)$ & $1$ & \\ 
				\hline 
				
				abbb\$ & bbba\$ & $(1,2,3,4,5)$ & & 5 \\ 
				\hline  
			\end{tabular} & 
			\begin{tabular}{ | c | c | c | c | c |} \hline
				$w$ & $v$ & $\sigma$ & $h(w)$ & $i$ \\ 
				\hline \hline
				baaa & aaab & $(1,4,3,2)$ & $1$ & \\ 
				\hline 
				
				b\$aaa & aaab\$ & $(1,5,4,3,2)$ & & 2 \\ 
				\hline \hline
				
				baab & - & $(1,3,2)(4)$ & $0$ &\\ 
				
				\hline \hline
				
				baba & aabb & $(1,3,4,2)$ & $1$ &\\ 
				\hline  
				
				b\$aba & aabb\$ & $(1,4,5,3,2)$ & & 2 \\ 
				\hline \hline
				
				babb & - & $(1,2)(3)(4)$ & $0$ & \\ 
				
				\hline \hline
				
				bbaa & $(\text{ab})^2$ & $(1,3)(2,4)$ & $2$ &\\ 
				\hline  
				
				bb\$aa & abab\$ & $(1,4,2,5,3)$ & & 3 \\ 
				bbaa\$ & baab\$ & $(1,4,3,2,5)$ & & 5 \\ 
				\hline \hline

				bbab & - & $(1,2,3)(4)$ & $1$ &\\ 
				\hline  
				
				bbab\$ & bbab\$ & $(1,3,2,4,5)$ & & 5 \\ 
				\hline \hline
				
				bbba & abbb & $(1,2,3,4)$ & $2$ &\\ 
				\hline  
				
				b\$bba & abbb\$ & $(1,3,4,5,2)$ & & 2 \\ 
				bbb\$a & babb\$ & $(1,3,5,2,4)$ & & 4 \\ 
				\hline \hline
				
				bbbb & $(\text{b})^4$ & $(1)(2)(3)(4)$ & $1$ & \\ 
				\hline  
				
				bbbb\$ & bbbb\$ & $(1,2,3,4,5)$ & & 5 \\ 
				\hline  
				
			\end{tabular}
			\\ 
		\end{tabular}
		\caption{All strings $w$ of length 4 over a binary alphabet. See text for details.
			\label{tab:sigmaSize4_all}}	
	\end{table}
	
	
	\begin{table}
		\centering
		
		\begin{tabular}{c @{\hspace{1cm}} c}
			\begin{tabular}{ | c | c | c | c | c |} \hline
				$w$ & $v$ & $\sigma$ & $h(w)$ & $i$ \\ 
				\hline \hline
				
				aaaaa & $(\text{a})^5$ & $(1)(2)(3)(4)(5)$ & $1$ & \\ 
				\hline 
				
				aaaaa\$ & aaaaa\$ & $(1,2,3,4,5,6)$ & & 6 \\ 
				\hline \hline 
				
				aaaab & - & $(1)(2)(3)(4)(5)$ & $1$ & \\ 
				\hline 
				
				aaaab\$ & baaaa\$ & $(1,2,3,4,5,6)$ & & 6 \\ 
				\hline \hline
				
				aaaba & - & $(1)(2)(3)(4,5)$ & $1$ &\\ 
				\hline 
				
				aaab\$a & abaaa\$ & $(1,2,3,4,6,5)$ & & 5 \\
				\hline \hline
				
				aaabb & - & $(1)(2)(3)(4)(5)$ & $1$ & \\ 
				\hline 
				
				aaabb\$ & bbaaa\$ & $(1,2,3,4,5,6)$ & & 6 \\ 
				\hline \hline
				
				aabaa & - & $(1)(2)(3,5,4)$ & $1$ & \\ 
				\hline 
				
				aab\$aa & aabaa\$ & $(1,2,3,6,5,4)$ & & 4 \\ 
				\hline \hline
				
				aabab & - & $(1)(2)(3,4)(5)$ & $0$ & \\ 
				
				\hline \hline
				
				aabba & - & $(1)(2)(3,4,5)$ & $2$ & \\ 
				\hline  
				
				aab\$ba & abbaa\$ & $(1,2,3,5,6,4)$ & & 4 \\ 
				aabba\$ & babaa\$ & $(1,2,3,5,4,6)$ & & 6 \\ 
				\hline \hline
				
				aabbb & - & $(1)(2)(3)(4)(5)$ & $1$ & \\ 
				\hline  
				
				aabbb\$ & bbbaa\$ & $(1,2,3,4,5,6)$ & & 6 \\ 
				\hline \hline
				
				abaaa & - & $(1)(2,5,4,3)$ & $1$ & \\ 
				\hline  
				
				ab\$aaa & aaaba\$ & $(1,2,6,5,4,3)$ & & 3 \\ 
				\hline \hline
				
				abaab & - & $(1)(2,4,3)(5)$ & $0$ &\\ 
				
				\hline \hline
				
				ababa & - & $(1)(2,4,5,3)$ & $1$ &\\ 
				\hline  
				
				ab\$aba & aabba\$ & $(1,2,5,6,4,3)$ & & 3 \\ 
				\hline \hline
				
				ababb & - & $(1)(2,3)(4)(5)$ & $0$ & \\ 
				
				\hline \hline
				
				abbaa & - & $(1)(2,4)(3,5)$ & $2$ &\\ 
				\hline 
				
				abb\$aa & ababa\$ & $(1,2,5,3,6,4)$ & & 4 \\ 
				abbaa\$ & baaba\$ & $(1,2,5,4,3,6)$ & & 6 \\ 
				\hline \hline

				abbab & - & $(1)(2,3,4)(5)$ & $1$ &\\ 
				\hline  
				
				abbab\$ & bbaba\$ & $(1,2,4,3,5,6)$ & & 6 \\ 
				\hline \hline
				
				abbba & - & $(1)(2,3,4,5)$ & $2$ &\\ 
				\hline 
				
				ab\$bba & abbba\$ & $(1,2,4,5,6,3)$ & & 3 \\ 
				abbb\$a & babba\$ & $(1,2,4,6,3,5)$ & & 5 \\ 
				\hline \hline
				
				abbbb & - & $(1)(2)(3)(4)(5)$ & $1$ & \\ 
				\hline 
				
				abbbb\$ & bbbba\$ & $(1,2,3,4,5,6)$ & & 6 \\ 
				\hline 
				
			\end{tabular} & 
			\begin{tabular}{ | c | c | c | c | c |} \hline
				$w$ & $v$ & $\sigma$ & $h(w)$ & $i$ \\ 
				\hline \hline
				baaaa & aaaab & $(1,5,4,3,2)$ & $1$ & \\ 
				\hline  
				
				b\$aaaa & aaaab\$ & $(1,6,5,4,3,2)$ & & 2 \\ 
				\hline \hline 
				
				baaab & - & $(1,4,3,2)(5)$ & $0$ & \\ 
				
				\hline \hline
				
				baaba & aaabb & $(1,4,5,3,2)$ & $1$ &\\ 
				\hline  
				
				b\$aaaba & aaabb\$ & $(1,5,6,4,3,2)$ & & 2 \\
				\hline \hline
				
				baabb & - & $(1,3,2)(4)(5)$ & $0$ & \\ 
				
				\hline \hline
				
				babaa & - & $(1,4,2)(3,5)$ & $0$ & \\ 
				
				\hline \hline
				
				babab & - & $(1,3,4,2)(5)$ & $0$ & \\ 
				
				\hline \hline
				babba & aabbb & $(1,3,4,5,3)$ & $1$ & \\ 
				\hline  
				
				b\$abba & aabbb\$ & $(1,4,5,6,3,2)$ & & 2 \\ 
				\hline \hline
				
				babbb & - & $(1,2)(3)(4)(5)$ & $0$ & \\ 
				
				\hline \hline
				bbaaa & aabab & $(1,4,2,5,3)$ & $3$ & \\ 
				\hline  
				
				b\$baaa & aabab\$ & $(1,5,3,6,4,2)$ & & 2 \\ 
				bba\$aa & abaab\$ & $(1,5,3,2,6,4)$ & & 4 \\ 
				bbaaa\$ & baaab\$ & $(1,5,4,3,2,6)$ & & 6 \\ 
				\hline \hline
				
				bbaab & - & $(1,3)(2,4)(5)$ & $1$ &\\ 
				\hline  
				
				bbaab\$ & bbaab\$ & $(1,4,3,2,5,6)$ & & 6 \\ 
				\hline \hline
				
				bbaba & - & $(1,3)(2,4,5)$ & $2$ &\\ 
				\hline  
				
				bb\$aba & abbab\$ & $(1,4,2,5,6,3)$ & & 3 \\ 
				bbab\$a & baabbb\$ & $(1,4,6,3,2,5)$ & & 5 \\ 
				\hline \hline
				
				bbabb & - & $(1,2,3)(4)(5)$ & $1$ & \\ 
				\hline  
				
				bbabb\$ & bbbab\$ & $(1,3,2,4,5,6)$ & & 6 \\ 
				\hline \hline
				
				bbbaa & ababb & $(1,3,5,2,4)$ & $2$ &\\ 
				\hline  
				
				b\$bbaa & ababb\$ & $(1,4,6,3,5,2)$ & & 2 \\ 
				bbbaa\$ & babab\$ & $(1,4,2,5,3,6)$ & & 6 \\ 
				\hline \hline
				
				bbbab & - & $(1,2,3,4)(5)$ & $0$ &\\ 
				
				\hline \hline
				
				bbbba & abbbb & $(1,2,3,4,5)$ & $3$ &\\ 
				\hline  
				
				b\$bbba & abbbb\$ & $(1,3,4,5,6,2)$ & & 2 \\ 
				bbb\$ba & babbb\$ & $(1,3,5,6,2,4)$ & & 4 \\ 
				bbbba\$ & bbabb\$ & $(1,3,5,2,4,6)$ & & 6 \\ 
				\hline \hline
				
				bbbbb & $(\text{b})^5$ & $(1)(2)(3)(4)(5)$ & $1$ & \\ 
				\hline  
				
				bbbbb\$ & bbbbb\$ & $(1,2,3,4,5,6)$ & & 6 \\ 
				\hline  
				
			\end{tabular}
			\\ \\
		\end{tabular} 
		\caption{All strings $w$ of length 5 over a binary alphabet. See text for details.
			\label{tab:sigmaSize5_all}}	
	\end{table}
	

\begin{landscape}

\centering

\begingroup
\setlength{\LTleft}{-20cm plus -1fill}
\setlength{\LTright}{\LTleft}
\begin{longtable}{ | r | c | c | c | c | c | c |} \hline
	& $w$ & $v$ & $\sigma$ & $c$ & nice positions & $h(w)$ \\ 
	\hline
  	\endhead

  	1 & bcacbaaacb & aabccacabb & $(1,5,6,2,8,4,9,10,7,3)$ & 1 & $2,4,8,10$ & 4 \\ \hline
  	2 & ccaaaaaabb & $(\text{aaabc})^2$ & $(1,9,7,5,3)(2,10,8,6,4)$ & 2 & $3,5,7,9,11$ & 5 \\ \hline
  	3 & cbcaaacaba & - & $(1, 8, 4)(2, 6, 3, 9, 7, 10, 5)$ & 2 & $3,7,11$ & 3 \\ \hline
  	4 & accbcbaaaa & - & $(1)(2,8,3,9,4,6,7)(5,10)$ & 3 & $6,8,10$ & 3 \\ \hline 
  5 & ccaaabcaac & - & $(1,7,9,5,3)(2,8,4)(6)(10)$ & 4 & $11$ & 1 \\ \hline
  	6 & bacbacacab & - & $(1,5,2)(3,8,10,7)(4,6,9)$ & 3 & - & 0 \\ \hline \hline

  	7 & bbaabbbbbbbba & aabbbbbbbbbab & $(1,4,2,5,6,7,8,9,10,11,12,13,3)$ & 1 & 2,4,6,8,10,12,14 & 7 \\ \hline
  	   8 & babbbbabbbbba & aabbabbbbbbbb & $(1,4,6,8,9,10,11,12,13,3,5,7,2)$ & 1 & 2 & 1 \\ \hline
  	  9 & abbabbbbbabba & - & $(1)(2,5,7,9,11,12,13,4)(3,6,8,10)$ & 3 & 4,8,10 & 3 \\ \hline
  	   10 & abbaaaaaaaaaa & - & $(1)(2,12,10,8,6,4)(3,13,11,9,7,5)$ & 3 & 4,6,8,10,12,14 & 6 \\ \hline
  	   11 & babbaaabaaaba & - & $(1,9,5,2)(3,10,6)(4,11,7)(8,12,13)$ & 4 & - & 0 \\ \hline \hline

  	12 & bbaaaabbbbbbbba & aaabbbbbbbbbaab & $(1,6,4,2,7,8,9,10,11,12,13,14,15,5,3)$ & 1 & 2,4,6,8,10,12,14,16 & 8 \\ \hline
  	   13 & babababababbbba & aaabaabbbbbbabb & $(1,7,10,5,9,11,12,13,14,15,6,3,8,4,2)$ & 1 & 2 & 1 \\ \hline
  	  14 & bbbaaaaaaaaaaaa & (aaaab)$^3$ & $(1,13,10,7,4)(2,14,11,8,5)(3,15,12,9,6)$ & 3 & 4,6,10,12,16 & 5 \\ \hline
  	  15  & bbbbbaaaaaaaaaa & (aab)$^5$ & $(1,11,6)(2,12,7)(3,13,8)(4,14,9)(5,15,10)$ & 5 & 6,8,10,14,16 & 5 \\ \hline
  	   16 & bbaababaaabbbab & - & $(1,8,4,2,9,5,10,6,3)(7,11,12,13,14)(15)$ & 3 & 16 & 1 \\ \hline
  	  17  & bbaababaaabbbaa & - & $(1,9,5,11,13,15,8,4,2,10,6,3)(7,12,14)$ & 2 & 9,11,15 & 3 \\ \hline
  	  18  & bbabaabaaabbaaa & - & $(1,10,6,3)(2,11,14,8,4,12,15,9,5)(7,13)$ & 3 & 10,12,14,16 & 4 \\ \hline
  	   19 & babbabbababaaab & - & $(1,8,3,9,13,6,11,14,7,12,5,2)(4,10)(15)$ & 3 & - & 0 \\ \hline \hline

  	20 & bbaababbbabbabaaaa & aaababbabbababbaab & $(1,10,4,2,11,16,7,13,5,12,17,8,14,18,9,15,6,3)$ & 1 & 2,4,6,8,10,12,14,16,18 & 9 \\ \hline
  	 21 & bbaaaaaaabbabbbbba & aaaaabbbbbbbaaaabb & $(1,10,12,8,6,4,2,11,13,14,15,16,17,18,9,7,5,3)$ & 1 & 2,4,6,8,10,12 & 6 \\ \hline
  	  22 & bbbaaabaaabaaababa & aaabaaabaaababbbab & $(1,12,7,15,17,18,11,16,10,6,3,14,9,5,2,13,8,4)$ & 1 & 2,6,18 & 3 \\ \hline
  	  23 & bbaaaabbbbaabbbbaa & (aaababbbb)$^2$ & $(1,9,13,15,17,7,11,5,3)(2,10,14,16,18,8,12,6,4)$ & 2 & 3,5,7,9,11 & 5 \\ \hline
  	  24 & bbbaaabbbbbbaaaaaa & (aababb)$^3$ & $(1,10,16,7,13,4)(2,11,17,8,14,5)(3,12,18,9,15,6)$ & 3 & 4,6,10,12 & 4 \\ \hline
  	  25 & bbbbbbbbbbbbaaaaaa & (abb)$^6$ & $(1,7,13)(2,8,14)(3,9,15)(4,10,16)(5,11,17)(6,12,18)$ & 6 & 7,9,11,13,17,19 & 6 \\ \hline

\caption{Some examples of words over alphabets of size 2 and 3; of length 10, 13, 15 18. (See text for more details.) \label{tab:sigmaSize10-13-15-18_some}}
\end{longtable}
\endgroup

\end{landscape}



\subsection{Statistics}\label{sec:stats}

In Tables~\ref{tab:tablePercentages2-short} and~\ref{tab:tablePercentages3-short}, we present statistics on the number of nice positions $h(w)$. Table~\ref{tab:tablePercentages2-short} contains the statistics for a binary alphabet and $n=19, 20$. We give the statistics for all $n=3, \ldots, 20$ in the Appendix (Table~\labelcref{tab:tablePercentages2}). Analogously, Table~\ref{tab:tablePercentages3-short} contains the same information for a ternary alphabet and $n=19, 20$, while, in the Appendix, statistics for all $n=3, \ldots, 20$ are shown (Table~\ref{tab:tablePercentages3}).

For fixed $n$, we give the absolute number of strings of length $n$ with $k$ nice positions (column 3), as well as the corresponding percentage (column 4). Percentages have been rounded to the next integer, where we write `$<0.5$' for percentages $x$ such that $0<x<0.5\%$. 
In columns 5 and 6, we give those strings with $k$ nice positions which are not BWT images, in absolute and percentage numbers; in columns 7 and 8, the same for BWT images. The last two columns contain a subdivision of column 7: the number of BWT images with $k$ nice positions which are BWTs of a primitive word (column 9) and of powers of primitive words (column 10).

\begin{center}
	

\begingroup
\setlength{\LTleft}{-20cm plus -1fill}
\setlength{\LTright}{\LTleft}

\bottomcaption{Statistics of words of length 19 and 20 over a binary alphabet. For the full table for $n=3,\ldots,20$, see the Appendix. Percentages rounded to nearest integer. See text for further details. 
	\label{tab:tablePercentages2-short}}
\tablehead{\hline ${\bf |\Sigma| = 2}$ && && && &&\multicolumn{2}{c|}{{\small BWTs of}} \\
	& $h(w)$ & {\small all} & \% & {\small noBWTs} & \% & {\small BWTs} & \% & {\small prim} & {\small pow} \\ \hline}

\begin{supertabular}{c || r | r | r | r | r | r | r || r | r |} \hline
	
$n=19$ & 0 & 335796 & 64 & 335796 & 68 & 0 & 0 & 0 & 0\\  
& 1 & 75616 & 14 & 60878 & 12 & 14738 & 53 & 14736 & 2\\  
& 2 & 22341 & 4 & 21347 & 4 & 994 & 4 & 994 & 0\\  
& 3 & 26820 & 5 & 24874 & 5 & 1946 & 7 & 1946 & 0\\  
& 4 & 25222 & 5 & 22776 & 5 & 2446 & 9 & 2446 & 0\\  
& 5 & 18916 & 4 & 16298 & 3 & 2618 & 9 & 2618 & 0\\  
& 6 & 11791 & 2 & 9520 & 2 & 2271 & 8 & 2271 & 0\\  
& 7 & 5428 & 1 & 3950 & 1 & 1478 & 5 & 1478 & 0\\  
& 8 & 1932 & $<0.5$ & 1131 & $<0.5$ & 801 & 3 & 801 & 0\\  
& 9 & 396 & $<0.5$ & 122 & $<0.5$ & 274 & 1 & 274 & 0\\  
& 10 & 30 & $<0.5$ & 0 & 0 & 30 & $<0.5$ & 30 & 0\\  
\hline
& {\small total}  & 524288 & 100 & 496692 & 100 & 27596 & 100 & 27594 & 2\\  
\hline \hline
$n=20$ & 0 & 679618 & 65 & 679618 & 68 & 0 & 0 & 0 & 0\\  
& 1 & 144564 & 14 & 116554 & 12 & 28010 & 53 & 28008 & 2\\  
& 2 & 44031 & 4 & 42048 & 4 & 1983 & 4 & 1983 & 0\\  
& 3 & 52254 & 5 & 48274 & 5 & 3980 & 8 & 3971 & 9\\  
& 4 & 48598 & 5 & 43639 & 4 & 4959 & 9 & 4950 & 9\\  
& 5 & 37512 & 4 & 32411 & 3 & 5101 & 10 & 5084 & 17\\  
& 6 & 24195 & 2 & 20011 & 2 & 4184 & 8 & 4159 & 25\\  
& 7 & 11780 & 1 & 9185 & 1 & 2595 & 5 & 2564 & 31\\  
& 8 & 4718 & $<0.5$ & 3471 & $<0.5$ & 1247 & 2 & 1234 & 13\\  
& 9 & 1198 & $<0.5$ & 809 & $<0.5$ & 389 & 1 & 385 & 4\\  
& 10 & 108 & $<0.5$ & 68 & $<0.5$ & 40 & $<0.5$ & 39 & 1\\  
\hline
& {\small total}  & 1048576 & 100 & 996088 & 100 & 52488 & 100 & 52377 & 111\\ 
\hline \hline

\end{supertabular}
\endgroup

	
\begingroup
\setlength{\LTleft}{-20cm plus -1fill}
\setlength{\LTright}{\LTleft}

\bottomcaption{Statistics of words of length 19 and 20 over a ternary alphabet. For the full table for $n=3,\ldots,20$, see the Appendix. Percentages rounded to nearest integer. See text for further details. 
	\label{tab:tablePercentages3-short}}
\tablehead{\hline ${\bf |\Sigma| = 3}$ && && && &&\multicolumn{2}{c|}{{\small BWTs of}} \\
	& $h(w)$ & {\small all} & \% & {\small noBWTs} & \% & {\small BWTs} & \% & {\small prim} & {\small pow} \\ \hline}

\begin{supertabular}{c || r | r | r | r | r | r | r || r | r |} \hline
	
$n=19$ & 0 & 705882210 & 61 & 705882210 & 64 & 0 & 0 & 0 & 0\\  
& 1 & 167901165 & 14 & 143474476 & 13 & 24426689 & 40 & 24426686 & 3\\  
& 2 & 93739960 & 8 & 86429556 & 8 & 7310404 & 12 & 7310404 & 0\\  
& 3 & 78834558 & 7 & 70340339 & 6 & 8494219 & 13 & 8494219 & 0\\  
& 4 & 54399352 & 5 & 46833919 & 4 & 7565433 & 12 & 7565433 & 0\\  
& 5 & 32418938 & 3 & 26772240 & 2 & 5646698 & 9 & 5646698 & 0\\  
& 6 & 17384924 & 2 & 13585965 & 1 & 3798959 & 6 & 3798959 & 0\\  
& 7 & 8057214 & 1 & 5762779 & 1 & 2294435 & 4 & 2294435 & 0\\  
& 8 & 2889010 & $<0.5$ & 1743235 & $<0.5$ & 1145775 & 2 & 1145775 & 0\\  
& 9 & 678872 & $<0.5$ & 265089 & $<0.5$ & 413783 & 1 & 413783 & 0\\  
& 10 & 75264 & $<0.5$ & 0 & 0 & 75264 & $<0.5$ & 75264 & 0\\  
\hline
& {\small total} & 1162261467 & 100 & 1101089808 & 100 & 61171659 & 100 & 61171656 & 3\\
\hline
\hline	
$n=20$ & 0 & 2142855232 & 61 & 2142855232 & 65 & 0 & 0 & 0 & 0\\	
&1 & 485799027 & 14 & 416406087 & 13 & 69392940 & 40 & 69392937 & 3	
\\
&2 & 272773366 & 8 & 251795904 & 8 & 20977462 & 12 & 20977186 & 276 \\
&3 & 230944816 & 7 & 206174661 & 6 & 24770155 & 14 & 24769338 & 817 \\
&4 & 161668972 & 5 & 139645685 & 4 & 22023287 & 13 & 22022351 & 936 \\
&5 & 98349528 & 3 & 82124392 & 3 & 16225136 & 9 & 16224109 & 1027 \\
&6 & 54140632 & 2 & 43539360 & 1 & 10601272 & 6 & 10600113 & 1159 \\
&7 & 26452876 & 1 & 20257612 & 1 & 6195264 & 4 & 6194176 & 1088 \\
&8 & 10452426 & $<0.5$ & 7480470 & $<0.5$ & 2971956 & 2 & 2971438 & 518 \\
&9 & 2927926 & $<0.5$ & 1915361 & $<0.5$ & 1012565 & 1 & 1012448 & 117 \\
&10 & 419600 & $<0.5$ & 247421 & $<0.5$ & 172179 & $<0.5$ & 172168 & 11 \\
\hline
& {\small total} & 3486784401 & 100 & 3312442185 & 100 & 174342216 & 100 & 174336264 & 5952\\
\hline \hline

\end{supertabular}
\endgroup

	
\end{center}

\normalsize


%
\section{Conclusion}\label{sec:conclusion}

In this paper, we studied a combinatorial question on the Burrows-Wheeler transform, namely in which positions (called {\em nice} positions) the sentinel character can be inserted in order to turn a given word $w$ into a BWT image. We developed a combinatorial characterization of nice positions and presented an efficient algorithm to compute all nice  positions of the word. We also showed that all nice positions have the same parity, and were able to give lower bounds on the values of nice positions, as well as upper bounds on the number of nice positions. These results are based  on properties of the standard permutation of the original word. 

We also included in the paper a number of examples for short strings over an alphabet of cardinality 2 resp.\ 3, as well as some statistics regarding the number of nice positions of a word.  

We pose the following open problems:
	\begin{enumerate}
		\item Find tight lower and upper bounds on $h(w)$, the number of nice positions of $w$. 
		\item Develop a $o(n \log n)$ time algorithm for computing all nice positions of a word of length $n$. One possibility to achieve this would be to find a data structure for maintaining the cycle structure of the standard permutation that allows constant-time updates. 
		Another direction towards a faster algorithm could be to design an output-sensitive algorithm, which takes advantage of the bounds developed in Section~\ref{sec:parity}. 
		\item Given $w$, compute $h(w)$ without explicitly computing  all nice positions, possibly by using the pseudo-cycle characterization of nice positions from Section~\ref{sec:characterization}. 
\end{enumerate}


\section*{Acknowledgements}

We thank the Scuola di Dottorato di Scienze Naturali e Ingegneristiche dell'Universit\`a degli Studi di Verona for funding SG. All authors thank the members of the Monday Meetings of the Algorithms Group at the Dept.\ of Computer Science, University of Verona, for interesting discussions.

\bibliographystyle{abbrv}
\bibliography{Whenadollar.bib}

\bigskip

\newpage

\section*{APPENDIX}

In this appendix, we give the algorithm for computing the standard permutation (Algorithm~\ref{algo:sigma}), two further examples for Algorithm~\ref{algo:findPos} (Examples~\ref{ex:sigmaPermutations2} and \ref{ex:sigmaPermutations3}), and the full splay tree implementation for Example~\ref{ex:sigmaPermutations1} (Fig.~\ref{fig:sigmaPermutations1}). Finally, we include the full versions of Tables~\ref{tab:tablePercentages2-short} and~\ref{tab:tablePercentages3-short}: Table~\ref{tab:tablePercentages2} contains statistics for $|\Sigma|=2$ and $n=3, \ldots, 20$, while Table~\ref{tab:tablePercentages3} contains statistics for the same lengths and $|\Sigma|=3$. 

\IncMargin{1em}
\begin{algorithm}
	\SetKw{DownTo}{down to}
	\DontPrintSemicolon
	{Given a word $w \in \Sigma^n$, compute its standard permutation $\sigma$.}\;
	\LinesNumbered
	\BlankLine
	
	\nl $n \gets |w|$\;
	\nl $count \gets$ array of length $|\Sigma|$ of zeros \tcp*[l]{\parbox[t]{1.5in}{\raggedright $count[j]$ is no.\ of occ's of $j$th character in $w$}}\nonl 
	\nl \For{$i \gets 1$ \KwTo $n$}{
		$j \gets$ lexicographic rank of $w_i$\;
		$count[j] \gets count[j]+1$\;
	}
	
	\For{$j=2$ \KwTo $|\Sigma|$}{
		$count[j] \gets count[j] + count[j-1]$\;
	}
	
	\For{$i=n$ \DownTo $1$}{
		$j \gets$ lexicographic rank of $w_i$\;
		$\sigma(i) \gets count[j]$\;
		$count[j] \gets count[j]-1$
	}
	
	\Return{$\sigma$}
	\caption{{\sc Standard Permutation}$(w)$\label{algo:sigma}}
	
\end{algorithm}
\DecMargin{1em}

\newpage

\begin{exa}\label{ex:sigmaPermutations2}

$$w = {\tt acbcccbcca} \qquad \sigma = \bigl( \begin{smallmatrix} 1 & 2 & 3 & 4 & 5 & 6 & 7 & 8 & 9 & 10 \\ 1 & 5 & 3 & 6 & 7 & 8 & 4 & 9 & 10 & 2 \end{smallmatrix} \bigr) = (1)(2, 5, 7, 4, 6, 8, 9, 10)(3)$$
\centering
\begin{minipage}[t]{0.8\textwidth}
	
	\vspace{0.5cm}
	
	$\sigma_1 = \bigl( \begin{smallmatrix} \fat{\red{1}} & 2 & 3 & 4 & 5 & 6 & 7 & 8 & 9 & 10 & 11 \\ \fat{\red{1}} & 2 & 6 & 4 & 7 & 8 & 9 & 5 & 10 & 11 & 3 \end{smallmatrix} \bigr) = (\fat{\red{1}})(\fat{\red{2}})(3, 6, 8, 5, 7, 9, 10, 11)(4)$
	\bigskip
	
	$\sigma_2 = \bigl( \begin{smallmatrix} 1 & 2 & 3 & 4 & 5 & 6 & 7 & 8 & 9 & 10 & 11 \\ \fat{\red{2}} & \fat{\red{1}} & 6 & 4 & 7 & 8 & 9 & 5 & 10 & 11 & 3 \end{smallmatrix} \bigr) = (1, \fat{\red{2}})(\fat{\red{3}}, 6, 8, 5, 7, 9, 10, 11)(4)$
	\bigskip
	
	$\sigma_3 = \bigl( \begin{smallmatrix} 1 & 2 & 3 & 4 & 5 & 6 & 7 & 8 & 9 & 10 & 11 \\ 2 & \fat{\red{6}} & \fat{\red{1}} & 4 & 7 & 8 & 9 & 5 & 10 & 11 & 3 \end{smallmatrix} \bigr) = (1, 2, 6, 8, 5, 7, 9, 10, 11, \fat{\red{3}})(\fat{\red{4}})$
	\bigskip
	
	\fbox{$\sigma_4 = \bigl( \begin{smallmatrix} 1 & 2 & 3 & 4 & 5 & 6 & 7 & 8 & 9 & 10 & 11 \\ 2 & 6 & \fat{\red{4}} & \fat{\red{1}} & 7 & 8 & 9 & 5 & 10 & 11 & 3 \end{smallmatrix} \bigr) = (1, 2, 6, 8, \fat{\red{5}}, 7, 9, 10, 11, 3 ,\fat{\red{4}})$}
	\bigskip
	
	$\sigma_5 = \bigl( \begin{smallmatrix} 1 & 2 & 3 & 4 & 5 & 6 & 7 & 8 & 9 & 10 & 11 \\ 2 & 6 & 4 & \fat{\red{7}} & \fat{\red{1}} & 8 & 9 & 5 & 10 & 11 & 3 \end{smallmatrix} \bigr) = (1, 2, \fat{\red{6}}, 8, \fat{\red{5}})(7, 9, 10, 11, 3 ,4)$
	\bigskip
	
	$\sigma_6 = \bigl( \begin{smallmatrix} 1 & 2 & 3 & 4 & 5 & 6 & 7 & 8 & 9 & 10 & 11 \\ 2 & 6 & 4 & 7 & \fat{\red{8}} & \fat{\red{1}} & 9 & 5 & 10 & 11 & 3 \end{smallmatrix} \bigr) = (1, 2, \fat{\red{6}})(8, 5)(\fat{\red{7}}, 9, 10, 11,3 ,4)$
	\bigskip
	
	$\sigma_7 = \bigl( \begin{smallmatrix} 1 & 2 & 3 & 4 & 5 & 6 & 7 & 8 & 9 & 10 & 11 \\ 2 & 6 & 4 & 7 & 8 & \fat{\red{9}} & \fat{\red{1}} & 5 & 10 & 11 & 3 \end{smallmatrix} \bigr) = (1, 2, 6, 9, 10, 11, 3 ,4, \fat{\red{7}})(\fat{\red{8}}, 5)$
	\bigskip
	
	\fbox{$\sigma_8 = \bigl( \begin{smallmatrix} 1 & 2 & 3 & 4 & 5 & 6 & 7 & 8 & 9 & 10 & 11 \\ 2 & 6 & 4 & 7 & 8 & 9 & \fat{\red{5}} & \fat{\red{1}} & 10 & 11 & 3 \end{smallmatrix} \bigr) = (1, 2, 6, \fat{\red{9}}, 10, 11, 3 ,4, 7, 5, \fat{\red{8}})$}
	\bigskip
	
	$\sigma_9 = \bigl( \begin{smallmatrix} 1 & 2 & 3 & 4 & 5 & 6 & 7 & 8 & 9 & 10 & 11 \\ 2 & 6 & 4 & 7 & 8 & 9 & 5 & \fat{\red{10}} & \fat{\red{1}} & 11 & 3 \end{smallmatrix} \bigr) = (1, 2, 6, \fat{\red{9}})(\fat{\red{10}}, 11, 3 ,4, 7, 5, 8)$
	\bigskip
	
	\fbox{$\sigma_{10} = \bigl( \begin{smallmatrix} 1 & 2 & 3 & 4 & 5 & 6 & 7 & 8 & 9 & 10 & 11 \\ 2 & 6 & 4 & 7 & 8 & 9 & 5 & 10 & \fat{\red{11}} & \fat{\red{1}} & 3 \end{smallmatrix} \bigr) = (1, 2, 6, 9, \fat{\red{11}}, 3, 4, 7, 5, 8, \fat{\red{10}})$}
	\bigskip
	
	$\sigma_{11} = \bigl( \begin{smallmatrix} 1 & 2 & 3 & 4 & 5 & 6 & 7 & 8 & 9 & 10 & 11 \\ 2 & 6 & 4 & 7 & 8 & 9 & 5 & 10 & 11 & \fat{\red{3}} & \fat{\red{1}} \end{smallmatrix} \bigr) = (1, 2, 6, 9, 11)( 3, 4, 7, 5, 8, 10)$
	
	\bigskip
\end{minipage}
\hspace{-2.5cm}\begin{minipage}[t]{0.2\textwidth}
	
	\vspace{1.15cm}
	
	{\em merge}
	\bigskip
	
	{\em merge}
	\bigskip
	
	{\em merge}
	\bigskip
	\vspace{0.3cm}
	
	{\em split}
	\bigskip
	
	{\em split}
	\bigskip
	
	{\em merge}
	\bigskip
	
	{\em merge}
	\bigskip
	\vspace{0.3cm}
	
	{\em split}
	\bigskip
	
	{\em merge}
	\bigskip
	\vspace{0.3cm}
	
	{\em split}
	
\end{minipage}
\hspace{-1.5cm}\begin{minipage}[t]{0.05\textwidth}
	
	\hspace{-1.2cm}{\bf no.\ cycles}
	\bigskip
	
	4
	\bigskip
	
	3
	\bigskip
	
	2
	\vspace{0.55cm}
	
	1
	\vspace{0.55cm}
	
	2
	\bigskip
	
	3
	\bigskip
	
	2
	\vspace{0.55cm}
	
	1
	\vspace{0.55cm}
	
	2
	\vspace{0.55cm}
	
	1
	\vspace{0.55cm}
	
	2
\end{minipage}

\newpage

\end{exa}


\begin{exa}\label{ex:sigmaPermutations3}
$$w = {\tt ccaaabcaac} \qquad \sigma = \bigl( \begin{smallmatrix} 1 & 2 & 3 & 4 & 5 & 6 & 7 & 8 & 9 & 10 \\ 7 & 8 & 1 & 2 & 3 & 6 & 9 & 4 & 5 & 10 \end{smallmatrix} \bigr) = (1,7,9,5,3)(2,8,4)(6)(10)$$

\centering
\begin{minipage}[t]{0.8\textwidth}
	
	\vspace{0.5cm}
	
	$\sigma_1 = \bigl( \begin{smallmatrix} \fat{\red{1}} & 2 & 3 & 4 & 5 & 6 & 7 & 8 & 9 & 10 & 11 \\ \fat{\red{1}} & 8 & 9 & 2 & 3 & 4 & 7 & 10 & 5 & 6 & 11 \end{smallmatrix} \bigr) = (\fat{\red{1}})(\fat{\red{2}},8,10,6,4)(3,9,5)(7)(11)$
	\bigskip
	
	$\sigma_2 = \bigl( \begin{smallmatrix} 1 & 2 & 3 & 4 & 5 & 6 & 7 & 8 & 9 & 10 & 11 \\ \fat{\red{8}} & \fat{\red{1}} & 9 & 2 & 3 & 4 & 7 & 10 & 5 & 6 & 11 \end{smallmatrix} \bigr) = (1,8,10,6,4,\fat{\red{2}})(\fat{\red{3}},9,5)(7)(11)$
	\bigskip
	
	$\sigma_3 = \bigl( \begin{smallmatrix} 1 & 2 & 3 & 4 & 5 & 6 & 7 & 8 & 9 & 10 & 11 \\ 8 & \fat{\red{9}} & \fat{\red{1}} & 2 & 3 & 4 & 7 & 10 & 5 & 6 & 11 \end{smallmatrix} \bigr) = (1,8,10,6,\fat{\red{4}},2,9,5,\fat{\red{3}})(7)(11)$
	\bigskip
	
	$\sigma_4 = \bigl( \begin{smallmatrix} 1 & 2 & 3 & 4 & 5 & 6 & 7 & 8 & 9 & 10 & 11 \\  8 & 9 & \fat{\red{2}} & \fat{\red{1}} & 3 & 4 & 7 & 10 & 5 & 6 & 11 \end{smallmatrix} \bigr) = (1,8,10,6,\fat{\red{4}})(2,9,\fat{\red{5}},3)(7)(11)$
	\bigskip
	
	$\sigma_5 = \bigl( \begin{smallmatrix} 1 & 2 & 3 & 4 & 5 & 6 & 7 & 8 & 9 & 10 & 11 \\ 8 & 9 & 2 & \fat{\red{3}} & \fat{\red{1}} & 4 & 7 & 10 & 5 & 6 & 11 \end{smallmatrix} \bigr) = (1,8,10,\fat{\red{6}},4,3,2,9,\fat{\red{5}})(7)(11)$
	\bigskip
	
	$\sigma_6 = \bigl( \begin{smallmatrix} 1 & 2 & 3 & 4 & 5 & 6 & 7 & 8 & 9 & 10 & 11 \\ 8 & 9 & 2 & 3 & \fat{\red{4}} & \fat{\red{1}} & 7 & 10 & 5 & 6 & 11 \end{smallmatrix} \bigr) = (1,8,10,\fat{\red{6}})(4,3,2,9,5)(\fat{\red{7}})(11)$
	\bigskip
	
	$\sigma_7 = \bigl( \begin{smallmatrix} 1 & 2 & 3 & 4 & 5 & 6 & 7 & 8 & 9 & 10 & 11 \\ 8 & 9 & 2 & 3 & 4 & \fat{\red{7}} & \fat{\red{1}} & 10 & 5 & 6 & 11 \end{smallmatrix} \bigr) = (1,\fat{\red{8}},10,6,\fat{\red{7}})(4,3,2,9,5)(11)$
	\bigskip
	
	$\sigma_8 = \bigl( \begin{smallmatrix} 1 & 2 & 3 & 4 & 5 & 6 & 7 & 8 & 9 & 10 & 11 \\ 8 & 9 & 2 & 3 & 4 & 7 & \fat{\red{10}} & \fat{\red{1}} & 5 & 6 & 11 \end{smallmatrix} \bigr) = (1,\fat{\red{8}})(10,6,7)(4,3,2,\fat{\red{9}},5)(11)$
	\bigskip
	
	$\sigma_9 = \bigl( \begin{smallmatrix} 1 & 2 & 3 & 4 & 5 & 6 & 7 & 8 & 9 & 10 & 11 \\  8 & 9 & 2 & 3 & 4 & 7 & 10 & \fat{\red{5}} & \fat{\red{1}} & 6 & 11 \end{smallmatrix} \bigr) = (1,8,5,4,3,2,\fat{\red{9}})(\fat{\red{10}},6,7)(11)$
	\bigskip
	
	$\sigma_{10} = \bigl( \begin{smallmatrix} 1 & 2 & 3 & 4 & 5 & 6 & 7 & 8 & 9 & 10 & 11 \\ 8 & 9 & 2 & 3 & 4 & 7 & 10 & 5 & \fat{\red{6}} & \fat{\red{1}} & 11 \end{smallmatrix} \bigr) = (1,8,5,4,3,2,9,6,7,\fat{\red{10}})(\fat{\red{11}})$
	\bigskip
	
	\fbox{$\sigma_{11} = \bigl( \begin{smallmatrix} 1 & 2 & 3 & 4 & 5 & 6 & 7 & 8 & 9 & 10 & 11 \\8 & 9 & 2 & 3 & 4 & 7 & 10 & 5 & 6 & \fat{\red{11}} & \fat{\red{1}} \end{smallmatrix} \bigr) = (1,8,5,4,3,2,9,6,7,10,11)$}
	
	\bigskip
\end{minipage}
\hspace{-2.5cm}\begin{minipage}[t]{0.2\textwidth}
	
	\vspace{1.15cm}
	
	{\em merge}
	\bigskip
	
	{\em merge}
	\bigskip
	
	{\em split}
	\bigskip
	
	{\em merge}
	\bigskip
	
	{\em split}
	\bigskip
	
	{\em merge}
	\bigskip
	
	{\em split}
	\bigskip
	
	{\em merge}
	\bigskip
	
	{\em merge}
	\bigskip
	
	{\em merge}
	
\end{minipage}
\hspace{-1.5cm}\begin{minipage}[t]{0.05\textwidth}
	
	\hspace{-1.2cm}{\bf no.\ cycles}
	\bigskip
	
	5
	\bigskip
	
	4
	\bigskip
	
	3
	\bigskip
	
	4
	\bigskip
	
	3
	\bigskip
	
	4
	\bigskip
	
	3
	\bigskip
	
	4
	\bigskip
	
	3
	\bigskip
	
	2
	\vspace{0.65cm}
	
	1
\end{minipage}

\end{exa}

\begin{figure}
	\captionsetup[subfigure]{font=footnotesize}
	\centering
	\caption{The splay tree implementation for Example~\ref{ex:sigmaPermutations1}.\label{fig:sigmaPermutations1}}
	\subfloat[$\sigma_1$]{%
		\scalebox{0.50}{%
			\begin{tikzpicture}[level distance=1.5cm,
			level 1/.style={sibling distance=1.5cm},
			level 2/.style={sibling distance=1cm}]
			
			\node[nodered] at (0,0) (root1){$1$};
			
			\node[nodered] at (1.3,0) {$2$};  
			
			\node[node, minimum size=.5cm] at (2.6,0){$7$}
			child{node[node, minimum size=.82cm]{$3$}}
			child{node[node, minimum size=.82cm]{$10$}};
			
			\node[node, minimum size=.82cm] at (5.2,0){$11$}
			child{node[node, minimum size=.82cm]{$8$}
				child{node[node, minimum size=.82cm]{$4$}}
				child[missing]{} }
			child{ node[node, minimum size=.82cm]{$6$} };
			
			\node[node, minimum size=.82cm] at (6.8,0) {$5$}
			child[missing]{}
			child{node[node, minimum size=.82cm]{$9$}}; 
			
			\end{tikzpicture}}%
	}
	\hspace{3em}
	\subfloat[$\sigma_2$]{%
		\scalebox{0.50}{%
			\begin{tikzpicture}[level distance=1.5cm,
			level 1/.style={sibling distance=1.5cm},
			level 2/.style={sibling distance=1cm}]
			
			\node[nodeone, minimum size=.82cm] at (0,0) (root1){$1$}
			child[missing]{}	
			child{node[nodered]{$2$}};
			
			\node[node] at (2.7,0){$7$}
			child{node[nodered]{$3$}}
			child{node[node, minimum size=.82cm]{$10$}};
			
			\node[node, minimum size=.82cm] at (5.8,0){$11$}
			child{node[node, minimum size=.82cm]{$8$}
				child{node[node, minimum size=.82cm]{$4$}}
				child[missing]{} }
			child{ node[node, minimum size=.82cm]{$6$} };
			
			\node[node, minimum size=.82cm] at (7.5,0) {$5$}
			child[missing]{}
			child{node[node, minimum size=.82cm]{$9$}};

			\end{tikzpicture}}%
	}
	
	\subfloat[$\sigma_3$]{%
		\scalebox{0.50}{%
			\begin{tikzpicture}[level distance=1.5cm,
			level 1/.style={sibling distance=1.5cm},
			level 2/.style={sibling distance=1.5cm},
			level 3/.style={sibling distance=1cm}]
			
			\node[node, minimum size=.82cm] at (0,0) (root1){$2$}
			child{node[nodeone, minimum size=.82cm]{$1$}}	
			child{node[node, minimum size=.82cm]{$10$}
				child{node[node, minimum size=.82cm]{$7$}}
				child{node[nodered]{3}}};
			
			\node[node, minimum size=.82cm] at (4.2,0){$11$}
			child{node[node, minimum size=.82cm]{$8$}
				child{node[nodered]{$4$}}
				child[missing]{} }
			child{ node[node, minimum size=.82cm]{$6$} };
			
			\node[node, minimum size=.82cm] at (5.7,0) {$5$}
			child[missing]{}
			child{node[node, minimum size=.82cm]{$9$}};

			\end{tikzpicture}}%
	}
	\hspace{3em}
	\subfloat[$\sigma_4$]{%
		\scalebox{0.50}{%
			\begin{tikzpicture}[level distance=1.5cm,
			level 1/.style={sibling distance=2.5cm},
			level 2/.style={sibling distance=1.8cm},
			level 3/.style={sibling distance=1.3cm}]
			
			\node[node, minimum size=.82cm] at (0,0) (root1){$3$}
			child{node[node, minimum size=.82cm]{10}
				child{node[node, minimum size=.82cm]{2}
					child{node[nodeone, minimum size=.82cm]{$1$}}
					child{node[node, minimum size=.82cm]{$7$}}}
				child[missing]{}}
			child{node[node, minimum size=.82cm]{$6$}
				child{node[node, minimum size=.82cm]{$11$}
					child{node[node, minimum size=.82cm]{$8$}}
					child[missing]{}}
				child{node[nodered]{$4$}}};
			
			\node[nodered] at (2,0) {$5$}
			child[missing]{}
			child{node[node, minimum size=.82cm]{$9$}};   
			
			\end{tikzpicture}}%
	}
	
	\subfloat[$\sigma_5$]{%
		\scalebox{0.50}{%
			\begin{tikzpicture}[level distance=1.5cm,
			level 1/.style={sibling distance=1.5cm},
			level 2/.style={sibling distance=1.5cm},
			level 3/.style={sibling distance=1.5cm}]
			
			\node[node, minimum size=.82cm] at (0,0) (root1){$4$}
			child{node[nodered]{6}
				child{node[node, minimum size=.82cm]{3}
					child{node[node, minimum size=.82cm]{10}
						child{node[node, minimum size=.82cm]{2}
							child{node[nodeone, minimum size=.82cm]{1}}
							child{node[node, minimum size=.82cm]{7}}}
						child[missing]{}}
					child{node[node, minimum size=.82cm]{11}
						child{node[node, minimum size=.82cm]{8}}
						child[missing]{}}}
				child[missing]}
			child{node[node, minimum size=.82cm]{9}
				child[missing]{}
				child{node[nodered]{5}}};
			\end{tikzpicture}}%
	}
	\hspace{3em}
	\subfloat[$\sigma_6$]{%
		\scalebox{0.50}{%
			\begin{tikzpicture}[level distance=1.5cm,
			level 1/.style={sibling distance=1.5cm},
			level 2/.style={sibling distance=1.5cm},
			level 3/.style={sibling distance=1.5cm}]
			
			\node[nodered] at (0,0) (root1){$6$}
			child{node[node, minimum size=.82cm]{3}
				child{node[node, minimum size=.82cm]{10}
					child{node[node, minimum size=.82cm]{2}
						child{node[nodeone, minimum size=.82cm]{1}}
						child{node[nodered]{7}}}
					child[missing]{}}
				child{node[node, minimum size=.82cm]{11}
					child{node[node, minimum size=.82cm]{8}}
					child[missing]{}}}
			child[missing]{};
			
			\node[node, minimum size=.82cm] at (2,0) {$4$}
			child[missing]{}
			child{node[node, minimum size=.82cm]{$9$}
				child[missing]{}
				child{node[node, minimum size=.82cm]{$5$}}};;
			
			\end{tikzpicture}}%
	}
	
	\subfloat[$\sigma_7$]{%
		\scalebox{0.50}{%
			\begin{tikzpicture}[level distance=1.5cm,
			level 1/.style={sibling distance=1.5cm},
			level 2/.style={sibling distance=1.5cm},
			level 3/.style={sibling distance=1.5cm}]

			\node[nodered] at (0,0) {$7$}
			child{node[node, minimum size=.82cm]{$2$}
				child{node[nodeone, minimum size=.82cm]{$1$}}
				child[missing]{}}
			child[missing]{};
			
			\node[node, minimum size=.82cm] at (2.5,0) {$3$}
			child{node[node, minimum size=.82cm]{10}}
			child{node[node, minimum size=.82cm]{6}
				child{node[node, minimum size=.82cm]{11}
					child{node[nodered]{8}}
					child[missing]{}}
				child[missing]{}};
			
			\node[node, minimum size=.82cm] at (4,0) {$4$}
			child[missing]{}
			child{node[node, minimum size=.82cm]{$9$}
				child[missing]{}
				child{node[node, minimum size=.82cm]{$5$}}};
			
			\end{tikzpicture}}%
	}
	\hspace{3em}
	\subfloat[$\sigma_8$]{%
		\scalebox{0.50}{%
			\begin{tikzpicture}[level distance=1.5cm,
			level 1/.style={sibling distance=1.5cm},
			level 2/.style={sibling distance=1.5cm},
			level 3/.style={sibling distance=1.5cm}]

			\node[node, minimum size=.82cm] at (0,0) {$7$}
			child{node[node, minimum size=.82cm]{$2$}
				child{node[nodeone, minimum size=.82cm]{$1$}}
				child[missing]{}}
			child{node[node, minimum size=.82cm]{$6$}
				child{node[node, minimum size=.82cm]{$11$}}
				child{node[nodered]{$8$}
					child{node[node, minimum size=.82cm]{$3$}
						child{node[node, minimum size=.82cm]{$10$}}
						child[missing]{}}
					child[missing]{}}};
			
			\node[node, minimum size=.82cm] at (4,0) {$4$}
			child[missing]{}
			child{node[nodered]{$9$}
				child[missing]{}
				child{node[node, minimum size=.82cm]{$5$}}};
			
			\end{tikzpicture}}%
	}
	
	\subfloat[$\sigma_9$]{%
		\scalebox{0.50}{%
			\begin{tikzpicture}[level distance=1.5cm,
			level 1/.style={sibling distance=2.5cm},
			level 2/.style={sibling distance=2.5cm},
			level 3/.style={sibling distance=1.4cm}]

			\node[node, minimum size=.82cm] at (0,0) {$8$}
			child{node[node, minimum size=.82cm]{$6$}
				child{node[node, minimum size=.82cm]{$7$}
					child{node[node, minimum size=.82cm]{$2$}
						child{node[nodeone, minimum size=.82cm]{$1$}}
						child[missing]{}}
					child{node[node, minimum size=.82cm]{$11$}}}
				child{node[node, minimum size=.82cm]{$3$}
					child{node[nodered]{$10$}}
					child[missing]{}}}
			child{node[node, minimum size=.82cm]{$5$}
				child[missing]{}
				child{node[nodered]{$9$}
					child{node[node, minimum size=.82cm]{$4$}}
					child[missing]}};

			\end{tikzpicture}}%
	}
	\hspace{3em}
	\subfloat[$\sigma_{10}$]{%
		\scalebox{0.50}{%
			\begin{tikzpicture}[level distance=1.5cm,
			level 1/.style={sibling distance=2cm},
			level 2/.style={sibling distance=1.8cm},
			level 3/.style={sibling distance=1.5cm}]

			\node[nodered] at (0,0) {$10$}
			child{node[node, minimum size=.82cm]{$6$}
				child{node[node, minimum size=.82cm]{$7$}
					child{node[node, minimum size=.82cm]{$2$}
						child{node[nodeone, minimum size=.82cm]{$1$}}
						child[missing]{}}
					child{node[nodered]{$11$}}}
				child[missing]{}}
			child[missing]{};
			
			\node[node, minimum size=.82cm] at (3,0) {$8$}
			child{node[node, minimum size=.82cm]{$3$}}
			child{node[node, minimum size=.82cm]{$5$}
				child[missing]{}
				child{node[node, minimum size=.82cm]{$9$}
					child{node[node, minimum size=.82cm]{$4$}}
					child[missing]{}}};
			
			\end{tikzpicture}}
	}
	
	\subfloat[$\sigma_{11}$]{%
		\scalebox{0.50}{%
			\begin{tikzpicture}[level distance=1.5cm,
			level 1/.style={sibling distance=2cm},
			level 2/.style={sibling distance=1.8cm},
			level 3/.style={sibling distance=1.5cm}]

			\node[node, minimum size=.82cm] at (0,0) {$11$}
			child{node[node, minimum size=.82cm]{$7$}
				child{node[node, minimum size=.82cm]{$2$}
					child{node[nodeone, minimum size=.82cm]{$1$}}
					child[missing]{}}
				child[missing]{}}
			child[missing]{};
			
			\node[node, minimum size=.82cm] at (2,0) {$10$}
			child{node[node, minimum size=.82cm]{$6$}}
			child[missing]{};
			
			\node[node, minimum size=.82cm] at (4,0) {$8$}
			child{node[node, minimum size=.82cm]{$3$}}
			child{node[node, minimum size=.82cm]{$5$}
				child[missing]{}
				child{node[node, minimum size=.82cm]{$9$}
					child{node[node, minimum size=.82cm]{$4$}}
					child[missing]{}}};
			
			\end{tikzpicture}}
	}
\end{figure}


\newpage

\begin{center}
	

\begingroup
\setlength{\LTleft}{-20cm plus -1fill}
\setlength{\LTright}{\LTleft}

\bottomcaption{Statistics of words of length from 3 to 20 over a binary alphabet. Percentages rounded to nearest integer. See text for further details. 
	\label{tab:tablePercentages2}}
\tablehead{\hline ${\bf |\Sigma| = 2}$ && && && &&\multicolumn{2}{c|}{{\small BWTs of}} \\
	& $h(w)$ & {\small all} & \% & {\small noBWTs} & \% & {\small BWTs} & \% & {\small prim} & {\small pow} \\ \hline}

\begin{supertabular}{c || r | r | r | r | r | r | r || r | r |} \hline

$n=3$ & 0 & 1 & 12 & 1 & 25 & 0 & 0 & 0 & 0\\  
 & 1 & 6 & 75 & 3 & 75 & 3 & 75 & 1 & 2\\  
& 2 & 1 & 12 & 0 & 0 & 1 & 25 & 1 & 0\\  
  \hline
& {\small total}  & 8 & 100 & 4 & 100 & 4 & 100 & 2 & 2\\  
\hline \hline
$n=4$ & 0 & 3 & 19 & 3 & 30 & 0 & 0 & 0 & 0\\  
& 1 & 10 & 62 & 6 & 60 & 4 & 67 & 2 & 2\\  
& 2 & 3 & 19 & 1 & 10 & 2 & 33 & 1 & 1\\  
  \hline
& {\small total}  & 16 & 100 & 10 & 100 & 6 & 100 & 3 & 3\\  
\hline \hline
$n=5$ & 0 & 9 & 28 & 9 & 38 & 0 & 0 & 0 & 0\\  
& 1 & 16 & 50 & 11 & 46 & 5 & 62 & 3 & 2\\  
& 2 & 5 & 16 & 4 & 17 & 1 & 12 & 1 & 0\\  
& 3 & 2 & 6 & 0 & 0 & 2 & 25 & 2 & 0\\  
  \hline
& {\small total}  & 32 & 100 & 24 & 100 & 8 & 100 & 6 & 2\\  
\hline \hline
$n=6$ & 0 & 21 & 33 & 21 & 42 & 0 & 0 & 0 & 0\\  
& 1 & 28 & 44 & 20 & 40 & 8 & 57 & 6 & 2\\  
& 2 & 9 & 14 & 6 & 12 & 3 & 21 & 1 & 2\\  
& 3 & 6 & 9 & 3 & 6 & 3 & 21 & 2 & 1\\  
  \hline
& {\small total}  & 64 & 100 & 50 & 100 & 14 & 100 & 9 & 5\\  
\hline \hline
$n=7$ & 0 & 51 & 40 & 51 & 47 & 0 & 0 & 0 & 0\\  
& 1 & 46 & 36 & 35 & 32 & 11 & 55 & 9 & 2\\  
& 2 & 14 & 11 & 13 & 12 & 1 & 5 & 1 & 0\\  
& 3 & 14 & 11 & 9 & 8 & 5 & 25 & 5 & 0\\  
& 4 & 3 & 2 & 0 & 0 & 3 & 15 & 3 & 0\\  
  \hline
& {\small total}  & 128 & 100 & 108 & 100 & 20 & 100 & 18 & 2\\  
\hline \hline
$n=8$ & 0 & 111 & 43 & 111 & 50 & 0 & 0 & 0 & 0\\  
& 1 & 84 & 33 & 64 & 29 & 20 & 56 & 18 & 2\\  
& 2 & 20 & 8 & 19 & 9 & 1 & 3 & 1 & 0\\  
& 3 & 32 & 12 & 21 & 10 & 11 & 31 & 8 & 3\\  
& 4 & 9 & 4 & 5 & 2 & 4 & 11 & 3 & 1\\  
  \hline
& {\small total}  & 256 & 100 & 220 & 100 & 36 & 100 & 30 & 6\\  
\hline \hline
$n=9$ & 0 & 243 & 47 & 243 & 54 & 0 & 0 & 0 & 0\\  
& 1 & 150 & 29 & 118 & 26 & 32 & 53 & 30 & 2\\  
& 2 & 31 & 6 & 29 & 6 & 2 & 3 & 2 & 0\\  
& 3 & 56 & 11 & 48 & 11 & 8 & 13 & 7 & 1\\  
& 4 & 28 & 5 & 14 & 3 & 14 & 23 & 13 & 1\\  
& 5 & 4 & 1 & 0 & 0 & 4 & 7 & 4 & 0\\  
  \hline
& {\small total}  & 512 & 100 & 452 & 100 & 60 & 100 & 56 & 4\\  
\hline \hline
$n=10$ & 0 & 515 & 50 & 515 & 56 & 0 & 0 & 0 & 0\\  
& 1 & 274 & 27 & 216 & 24 & 58 & 54 & 56 & 2\\  
& 2 & 53 & 5 & 48 & 5 & 5 & 5 & 5 & 0\\  
& 3 & 98 & 10 & 81 & 9 & 17 & 16 & 14 & 3\\  
& 4 & 70 & 7 & 48 & 5 & 22 & 20 & 19 & 3\\  
& 5 & 14 & 1 & 8 & 1 & 6 & 6 & 5 & 1\\  
  \hline
& {\small total}  & 1024 & 100 & 916 & 100 & 108 & 100 & 99 & 9\\  
\hline \hline
$n=11$ & 0 & 1088 & 53 & 1088 & 58 & 0 & 0 & 0 & 0\\  
& 1 & 494 & 24 & 393 & 21 & 101 & 54 & 99 & 2\\  
& 2 & 104 & 5 & 96 & 5 & 8 & 4 & 8 & 0\\  
& 3 & 164 & 8 & 147 & 8 & 17 & 9 & 17 & 0\\  
& 4 & 142 & 7 & 114 & 6 & 28 & 15 & 28 & 0\\  
& 5 & 50 & 2 & 22 & 1 & 28 & 15 & 28 & 0\\  
& 6 & 6 & $<0.5$ & 0 & 0 & 6 & 3 & 6 & 0\\  
  \hline
& {\small total}  & 2048 & 100 & 1860 & 100 & 188 & 100 & 186 & 2\\  
\hline \hline
$n=12$ & 0 & 2258 & 55 & 2258 & 60 & 0 & 0 & 0 & 0\\  
& 1 & 914 & 22 & 724 & 19 & 190 & 54 & 188 & 2\\  
& 2 & 200 & 5 & 183 & 5 & 17 & 5 & 17 & 0\\  
& 3 & 288 & 7 & 252 & 7 & 36 & 10 & 34 & 2\\  
& 4 & 284 & 7 & 224 & 6 & 60 & 17 & 51 & 9\\  
& 5 & 130 & 3 & 90 & 2 & 40 & 11 & 37 & 3\\  
& 6 & 22 & 1 & 13 & $<0.5$ & 9 & 3 & 8 & 1\\  
  \hline
& {\small total}  & 4096 & 100 & 3744 & 100 & 352 & 100 & 335 & 17\\  
\hline \hline
$n=13$ & 0 & 4679 & 57 & 4679 & 62 & 0 & 0 & 0 & 0\\  
& 1 & 1680 & 21 & 1339 & 18 & 341 & 54 & 339 & 2\\  
& 2 & 388 & 5 & 365 & 5 & 23 & 4 & 23 & 0\\  
& 3 & 536 & 7 & 478 & 6 & 58 & 9 & 58 & 0\\  
& 4 & 522 & 6 & 455 & 6 & 67 & 11 & 67 & 0\\  
& 5 & 292 & 4 & 209 & 3 & 83 & 13 & 83 & 0\\  
& 6 & 85 & 1 & 35 & $<0.5$ & 50 & 8 & 50 & 0\\  
& 7 & 10 & $<0.5$ & 0 & 0 & 10 & 2 & 10 & 0\\  
  \hline
& {\small total}  & 8192 & 100 & 7560 & 100 & 632 & 100 & 630 & 2\\  
\hline \hline
$n=14$ & 0 & 9601 & 59 & 9601 & 63 & 0 & 0 & 0 & 0\\  
& 1 & 3140 & 19 & 2498 & 16 & 642 & 54 & 640 & 2\\  
& 2 & 748 & 5 & 696 & 5 & 52 & 4 & 52 & 0\\  
& 3 & 1024 & 6 & 907 & 6 & 117 & 10 & 114 & 3\\  
& 4 & 982 & 6 & 843 & 6 & 139 & 12 & 134 & 5\\  
& 5 & 620 & 4 & 475 & 3 & 145 & 12 & 139 & 6\\  
& 6 & 235 & 1 & 161 & 1 & 74 & 6 & 70 & 4\\  
& 7 & 34 & $<0.5$ & 21 & $<0.5$ & 13 & 1 & 12 & 1\\  
  \hline
& {\small total}  & 16384 & 100 & 15202 & 100 & 1182 & 100 & 1161 & 21\\  
\hline \hline
$n=15$ & 0 & 19664 & 60 & 19664 & 64 & 0 & 0 & 0 & 0\\  
& 1 & 5874 & 18 & 4695 & 15 & 1179 & 54 & 1177 & 2\\  
& 2 & 1464 & 4 & 1381 & 5 & 83 & 4 & 83 & 0\\  
& 3 & 1940 & 6 & 1775 & 6 & 165 & 8 & 165 & 0\\  
& 4 & 1880 & 6 & 1637 & 5 & 243 & 11 & 242 & 1\\  
& 5 & 1218 & 4 & 991 & 3 & 227 & 10 & 222 & 5\\  
& 6 & 574 & 2 & 380 & 1 & 194 & 9 & 192 & 2\\  
& 7 & 140 & $<0.5$ & 53 & $<0.5$ & 87 & 4 & 87 & 0\\  
& 8 & 14 & $<0.5$ & 0 & 0 & 14 & 1 & 14 & 0\\  
  \hline
& {\small total}  & 32768 & 100 & 30576 & 100 & 2192 & 100 & 2182 & 10\\  
\hline \hline
$n=16$ & 0 & 40094 & 61 & 40094 & 65 & 0 & 0 & 0 & 0\\  
& 1 & 11062 & 17 & 8843 & 14 & 2219 & 54 & 2217 & 2\\  
& 2 & 2882 & 4 & 2709 & 4 & 173 & 4 & 173 & 0\\  
& 3 & 3702 & 6 & 3346 & 5 & 356 & 9 & 353 & 3\\  
& 4 & 3622 & 6 & 3157 & 5 & 465 & 11 & 459 & 6\\  
& 5 & 2426 & 4 & 1988 & 3 & 438 & 11 & 427 & 11\\  
& 6 & 1298 & 2 & 980 & 2 & 318 & 8 & 309 & 9\\  
& 7 & 402 & 1 & 273 & $<0.5$ & 129 & 3 & 125 & 4\\  
& 8 & 48 & $<0.5$ & 30 & $<0.5$ & 18 & $<0.5$ & 17 & 1\\  
  \hline
& {\small total}  & 65536 & 100 & 61420 & 100 & 4116 & 100 & 4080 & 36\\  
\hline \hline
$n=17$ & 0 & 81602 & 62 & 81602 & 66 & 0 & 0 & 0 & 0\\  
& 1 & 20898 & 16 & 16758 & 14 & 4140 & 54 & 4138 & 2\\  
& 2 & 5711 & 4 & 5423 & 4 & 288 & 4 & 288 & 0\\  
& 3 & 7146 & 5 & 6589 & 5 & 557 & 7 & 557 & 0\\  
& 4 & 6863 & 5 & 6139 & 5 & 724 & 9 & 724 & 0\\  
& 5 & 4820 & 4 & 4020 & 3 & 800 & 10 & 800 & 0\\  
& 6 & 2736 & 2 & 2112 & 2 & 624 & 8 & 624 & 0\\  
& 7 & 1042 & 1 & 639 & 1 & 403 & 5 & 403 & 0\\  
& 8 & 234 & $<0.5$ & 78 & $<0.5$ & 156 & 2 & 156 & 0\\  
& 9 & 20 & $<0.5$ & 0 & 0 & 20 & $<0.5$ & 20 & 0\\  
  \hline
& {\small total}  & 131072 & 100 & 123360 & 100 & 7712 & 100 & 7710 & 2\\  
\hline \hline
$n=18$ & 0 & 165632 & 63 & 165632 & 67 & 0 & 0 & 0 & 0\\  
& 1 & 39704 & 15 & 31871 & 13 & 7833 & 54 & 7831 & 2\\  
& 2 & 11281 & 4 & 10696 & 4 & 585 & 4 & 585 & 0\\  
& 3 & 13798 & 5 & 12641 & 5 & 1157 & 8 & 1153 & 4\\  
& 4 & 13167 & 5 & 11672 & 5 & 1495 & 10 & 1485 & 10\\  
& 5 & 9620 & 4 & 8113 & 3 & 1507 & 10 & 1492 & 15\\  
& 6 & 5722 & 2 & 4579 & 2 & 1143 & 8 & 1119 & 24\\  
& 7 & 2450 & 1 & 1818 & 1 & 632 & 4 & 622 & 10\\  
& 8 & 696 & $<0.5$ & 474 & $<0.5$ & 222 & 2 & 218 & 4\\  
& 9 & 74 & $<0.5$ & 46 & $<0.5$ & 28 & $<0.5$ & 27 & 1\\  
  \hline
& {\small total}  & 262144 & 100 & 247542 & 100 & 14602 & 100 & 14532 & 70\\  
\hline \hline
$n=19$ & 0 & 335796 & 64 & 335796 & 68 & 0 & 0 & 0 & 0\\  
& 1 & 75616 & 14 & 60878 & 12 & 14738 & 53 & 14736 & 2\\  
& 2 & 22341 & 4 & 21347 & 4 & 994 & 4 & 994 & 0\\  
& 3 & 26820 & 5 & 24874 & 5 & 1946 & 7 & 1946 & 0\\  
& 4 & 25222 & 5 & 22776 & 5 & 2446 & 9 & 2446 & 0\\  
& 5 & 18916 & 4 & 16298 & 3 & 2618 & 9 & 2618 & 0\\  
& 6 & 11791 & 2 & 9520 & 2 & 2271 & 8 & 2271 & 0\\  
& 7 & 5428 & 1 & 3950 & 1 & 1478 & 5 & 1478 & 0\\  
& 8 & 1932 & $<0.5$ & 1131 & $<0.5$ & 801 & 3 & 801 & 0\\  
& 9 & 396 & $<0.5$ & 122 & $<0.5$ & 274 & 1 & 274 & 0\\  
& 10 & 30 & $<0.5$ & 0 & 0 & 30 & $<0.5$ & 30 & 0\\  
  \hline
& {\small total}  & 524288 & 100 & 496692 & 100 & 27596 & 100 & 27594 & 2\\  
\hline \hline
$n=20$ & 0 & 679618 & 65 & 679618 & 68 & 0 & 0 & 0 & 0\\  
& 1 & 144564 & 14 & 116554 & 12 & 28010 & 53 & 28008 & 2\\  
& 2 & 44031 & 4 & 42048 & 4 & 1983 & 4 & 1983 & 0\\  
& 3 & 52254 & 5 & 48274 & 5 & 3980 & 8 & 3971 & 9\\  
& 4 & 48598 & 5 & 43639 & 4 & 4959 & 9 & 4950 & 9\\  
& 5 & 37512 & 4 & 32411 & 3 & 5101 & 10 & 5084 & 17\\  
& 6 & 24195 & 2 & 20011 & 2 & 4184 & 8 & 4159 & 25\\  
& 7 & 11780 & 1 & 9185 & 1 & 2595 & 5 & 2564 & 31\\  
& 8 & 4718 & $<0.5$ & 3471 & $<0.5$ & 1247 & 2 & 1234 & 13\\  
& 9 & 1198 & $<0.5$ & 809 & $<0.5$ & 389 & 1 & 385 & 4\\  
& 10 & 108 & $<0.5$ & 68 & $<0.5$ & 40 & $<0.5$ & 39 & 1\\  
  \hline
& {\small total}  & 1048576 & 100 & 996088 & 100 & 52488 & 100 & 52377 & 111\\ 
\hline \hline

\end{supertabular}
\endgroup

	
\vspace{0.5cm}
\begingroup
\setlength{\LTleft}{-20cm plus -1fill}
\setlength{\LTright}{\LTleft}

\bottomcaption{Statistics of words of length 3 to 20 over an alphabet of size 3. Percentages rounded to nearest integer. See text for further details. 
	\label{tab:tablePercentages3}}
\tablehead{\hline ${\bf |\Sigma| = 3}$ && && && &&\multicolumn{2}{c|}{{\small BWTs of}} \\
	& $h(w)$ & {\small all} & \% & {\small noBWTs} & \% & {\small BWTs} & \% & {\small prim} & {\small pow} \\ \hline}

\begin{supertabular}{c || r | r | r | r | r | r | r || r | r |} \hline
$n=3$ & 0 & 4 & 15 & 4 & 25 & 0 & 0 & 0 & 0\\  
& 1 & 19 & 70 & 12 & 75 & 7 & 64 & 4 & 3\\  
& 2 & 4 & 15 & 0 & 0 & 4 & 36 & 4 & 0\\  
 \hline
& {\small total} & 27 & 100 & 16 & 100 & 11 & 100 & 8 & 3\\  
\hline \hline
$n=4$ & 0 & 18 & 22 & 18 & 32 & 0 & 0 & 0 & 0\\  
& 1 & 45 & 56 & 31 & 54 & 14 & 58 & 11 & 3\\  
& 2 & 18 & 22 & 8 & 14 & 10 & 42 & 7 & 3\\  
 \hline
& {\small total} & 81 & 100 & 57 & 100 & 24 & 100 & 18 & 6\\  
\hline \hline
$n=5$ & 0 & 74 & 30 & 74 & 39 & 0 & 0 & 0 & 0\\  
& 1 & 109 & 45 & 83 & 43 & 26 & 51 & 23 & 3\\  
& 2 & 46 & 19 & 35 & 18 & 11 & 22 & 11 & 0\\  
& 3 & 14 & 6 & 0 & 0 & 14 & 27 & 14 & 0\\  
 \hline
& {\small total} & 243 & 100 & 192 & 100 & 51 & 100 & 48 & 3\\  
\hline \hline
$n=6$ & 0 & 258 & 35 & 258 & 43 & 0 & 0 & 0 & 0\\  
& 1 & 277 & 38 & 212 & 35 & 65 & 50 & 62 & 3\\  
& 2 & 130 & 18 & 94 & 16 & 36 & 28 & 29 & 7\\  
& 3 & 64 & 9 & 35 & 6 & 29 & 22 & 25 & 4\\  
 \hline
& {\small total} & 729 & 100 & 599 & 100 & 130 & 100 & 116 & 14\\  
\hline \hline
$n=7$ & 0 & 884 & 40 & 884 & 47 & 0 & 0 & 0 & 0\\  
& 1 & 709 & 32 & 564 & 30 & 145 & 46 & 142 & 3\\  
& 2 & 348 & 16 & 290 & 15 & 58 & 18 & 58 & 0\\  
& 3 & 202 & 9 & 134 & 7 & 68 & 22 & 68 & 0\\  
& 4 & 44 & 2 & 0 & 0 & 44 & 14 & 44 & 0\\  
 \hline
& {\small total} & 2187 & 100 & 1872 & 100 & 315 & 100 & 312 & 3\\  
\hline \hline
$n=8$ & 0 & 2870 & 44 & 2870 & 50 & 0 & 0 & 0 & 0\\  
& 1 & 1897 & 29 & 1509 & 26 & 388 & 47 & 385 & 3\\  
& 2 & 932 & 14 & 766 & 13 & 166 & 20 & 164 & 2\\  
& 3 & 648 & 10 & 458 & 8 & 190 & 23 & 176 & 14\\  
& 4 & 214 & 3 & 124 & 2 & 90 & 11 & 85 & 5\\  
 \hline
& {\small total} & 6561 & 100 & 5727 & 100 & 834 & 100 & 810 & 24\\  
\hline \hline
$n=9$ & 0 & 9208 & 47 & 9208 & 53 & 0 & 0 & 0 & 0\\  
& 1 & 5135 & 26 & 4164 & 24 & 971 & 44 & 968 & 3\\  
& 2 & 2558 & 13 & 2193 & 13 & 365 & 17 & 365 & 0\\  
& 3 & 1834 & 9 & 1452 & 8 & 382 & 17 & 378 & 4\\  
& 4 & 810 & 4 & 471 & 3 & 339 & 15 & 335 & 4\\  
& 5 & 138 & 1 & 0 & 0 & 138 & 6 & 138 & 0\\  
 \hline
& {\small total} & 19683 & 100 & 17488 & 100 & 2195 & 100 & 2184 & 11\\  
\hline \hline
$n=10$ & 0 & 29024 & 49 & 29024 & 55 & 0 & 0 & 0 & 0\\  
& 1 & 14059 & 24 & 11438 & 22 & 2621 & 44 & 2618 & 3\\  
& 2 & 7148 & 12 & 6085 & 11 & 1063 & 18 & 1059 & 4\\  
& 3 & 5290 & 9 & 4184 & 8 & 1106 & 19 & 1088 & 18\\  
& 4 & 2816 & 5 & 1965 & 4 & 851 & 14 & 828 & 23\\  
& 5 & 712 & 1 & 419 & 1 & 293 & 5 & 287 & 6\\  
 \hline
& {\small total} & 59049 & 100 & 53115 & 100 & 5934 & 100 & 5880 & 54\\  
\hline \hline
$n=11$ & 0 & 90788 & 51 & 90788 & 56 & 0 & 0 & 0 & 0\\  
& 1 & 38917 & 22 & 32007 & 20 & 6910 & 43 & 6907 & 3\\  
& 2 & 20210 & 11 & 17722 & 11 & 2488 & 15 & 2488 & 0\\  
& 3 & 15124 & 9 & 12496 & 8 & 2628 & 16 & 2628 & 0\\  
& 4 & 8554 & 5 & 6415 & 4 & 2139 & 13 & 2139 & 0\\  
& 5 & 3102 & 2 & 1612 & 1 & 1490 & 9 & 1490 & 0\\  
& 6 & 452 & 0 & 0 & 0 & 452 & 3 & 452 & 0\\  
 \hline
& {\small total} & 177147 & 100 & 161040 & 100 & 16107 & 100 & 16104 & 3\\  
\hline \hline
$n=12$ & 0 & 281576 & 53 & 281576 & 58 & 0 & 0 & 0 & 0\\  
& 1 & 108657 & 20 & 89704 & 18 & 18953 & 43 & 18950 & 3\\  
& 2 & 57142 & 11 & 50035 & 10 & 7107 & 16 & 7098 & 9\\  
& 3 & 44168 & 8 & 36405 & 7 & 7763 & 18 & 7736 & 27\\  
& 4 & 25912 & 5 & 19992 & 4 & 5920 & 13 & 5855 & 65\\  
& 5 & 11568 & 2 & 7933 & 2 & 3635 & 8 & 3598 & 37\\  
& 6 & 2418 & 0 & 1428 & 0 & 990 & 2 & 983 & 7\\  
 \hline
& {\small total} & 531441 & 100 & 487073 & 100 & 44368 & 100 & 44220 & 148\\  
\hline \hline
$n=13$ & 0 & 869386 & 55 & 869386 & 59 & 0 & 0 & 0 & 0\\  
& 1 & 305877 & 19 & 254558 & 17 & 51319 & 42 & 51316 & 3\\  
& 2 & 162798 & 10 & 145275 & 10 & 17523 & 14 & 17523 & 0\\  
& 3 & 128166 & 8 & 108909 & 7 & 19257 & 16 & 19257 & 0\\  
& 4 & 77028 & 5 & 61255 & 4 & 15773 & 13 & 15773 & 0\\  
& 5 & 37740 & 2 & 26732 & 2 & 11008 & 9 & 11008 & 0\\  
& 6 & 11756 & 1 & 5565 & 0 & 6191 & 5 & 6191 & 0\\  
& 7 & 1572 & 0 & 0 & 0 & 1572 & 1 & 1572 & 0\\  
 \hline
& {\small total} & 1594323 & 100 & 1471680 & 100 & 122643 & 100 & 122640 & 3\\  
\hline \hline
$n=14$ & 0 & 2671854 & 56 & 2671854 & 60 & 0 & 0 & 0 & 0\\  
& 1 & 865371 & 18 & 722987 & 16 & 142384 & 42 & 142381 & 3\\  
& 2 & 465112 & 10 & 415044 & 9 & 50068 & 15 & 50047 & 21\\  
& 3 & 373298 & 8 & 317142 & 7 & 56156 & 16 & 56095 & 61\\  
& 4 & 232528 & 5 & 187196 & 4 & 45332 & 13 & 45255 & 77\\  
& 5 & 119972 & 3 & 90685 & 2 & 29287 & 9 & 29196 & 91\\  
& 6 & 46330 & 1 & 31229 & 1 & 15101 & 4 & 15044 & 57\\  
& 7 & 8504 & 0 & 5030 & 0 & 3474 & 1 & 3466 & 8\\  
 \hline
& {\small total} & 4782969 & 100 & 4441167 & 100 & 341802 & 100 & 341484 & 318\\  
\hline \hline
$n=15$ & 0 & 8187156 & 57 & 8187156 & 61 & 0 & 0 & 0 & 0\\  
& 1 & 2463913 & 17 & 2071284 & 15 & 392629 & 41 & 392626 & 3\\  
& 2 & 1336872 & 9 & 1209020 & 9 & 127852 & 13 & 127852 & 0\\  
& 3 & 1084672 & 8 & 941090 & 7 & 143582 & 15 & 143581 & 1\\  
& 4 & 692916 & 5 & 571012 & 4 & 121904 & 13 & 121892 & 12\\  
& 5 & 370878 & 3 & 284657 & 2 & 86221 & 9 & 86193 & 28\\  
& 6 & 161896 & 1 & 108262 & 1 & 53634 & 6 & 53619 & 15\\  
& 7 & 45028 & 0 & 19791 & 0 & 25237 & 3 & 25237 & 0\\  
& 8 & 5576 & 0 & 0 & 0 & 5576 & 1 & 5576 & 0\\  
 \hline
& {\small total} & 14348907 & 100 & 13392272 & 100 & 956635 & 100 & 956576 & 59\\  
\hline \hline
$n=16$ & 0 & 25017316 & 58 & 25017316 & 62 & 0 & 0 & 0 & 0\\  
& 1 & 7038391 & 16 & 5938764 & 15 & 1099627 & 41 & 1099624 & 3\\  
& 2 & 3848862 & 9 & 3483455 & 9 & 365407 & 14 & 365361 & 46\\  
& 3 & 3163798 & 7 & 2745285 & 7 & 418513 & 16 & 418378 & 135\\  
& 4 & 2073906 & 5 & 1721549 & 4 & 352357 & 13 & 352192 & 165\\  
& 5 & 1148454 & 3 & 908307 & 2 & 240147 & 9 & 239934 & 213\\  
& 6 & 541080 & 1 & 400379 & 1 & 140701 & 5 & 140515 & 186\\  
& 7 & 184474 & 0 & 122823 & 0 & 61651 & 2 & 61574 & 77\\  
& 8 & 30440 & 0 & 17999 & 0 & 12441 & 0 & 12432 & 9\\  
 \hline
& {\small total} & 43046721 & 100 & 40355877 & 100 & 2690844 & 100 & 2690010 & 834\\  
\hline \hline
$n=17$ & 0 & 76291362 & 59 & 76291362 & 63 & 0 & 0.00 & 0 & 0\\  
& 1 & 20204417 & 16 & 17133621 & 14 & 3070796 & 40 & 3070793 & 3\\  
& 2 & 11133250 & 9 & 10176801 & 8 & 956449 & 13 & 956449 & 0\\  
& 3 & 9223752 & 7 & 8128132 & 7 & 1095620 & 14 & 1095620 & 0\\  
& 4 & 6158862 & 5 & 5205889 & 4 & 952973 & 13 & 952973 & 0\\  
& 5 & 3503564 & 3 & 2809002 & 2 & 694562 & 9 & 694562 & 0\\  
& 6 & 1744222 & 1 & 1291991 & 1 & 452231 & 6 & 452231 & 0\\  
& 7 & 686708 & 1 & 435324 & $<0.5$ & 251384 & 3 & 251384 & 0\\  
& 8 & 173800 & $<0.5$ & 71558 & $<0.5$ & 102242 & 1 & 102242 & 0\\  
& 9 & 20226 & $<0.5$ & 0 & 0 & 20226 & $<0.5$ & 20226 & 0\\  
 \hline
& {\small total} & 129140163 & 100 & 121543680 & 100 & 7596483 & 100 & 7596480 & 3\\  
\hline \hline
$n=18$ & 0 & 232225836 & 60 & 232225836 & 63 & 0 & 0.00 & 0 & 0\\  
& 1 & 58136821 & 15 & 49468989 & 14 & 8667832 & 40 & 8667829 & 3\\  
& 2 & 32239908 & 8 & 29501333 & 8 & 2738575 & 13 & 2738462 & 113\\  
& 3 & 26967788 & 7 & 23775236 & 7 & 3192552 & 15 & 3192227 & 325\\  
& 4 & 18345718 & 5 & 15576819 & 4 & 2768899 & 13 & 2768490 & 409\\  
& 5 & 10706282 & 3 & 8733762 & 2 & 1972520 & 9 & 1972050 & 470\\  
& 6 & 5557276 & 1 & 4323149 & 1 & 1234127 & 6 & 1233560 & 567\\  
& 7 & 2396038 & 1 & 1741558 & $<0.5$ & 654480 & 3 & 654160 & 320\\  
& 8 & 733114 & $<0.5$ & 483308 & $<0.5$ & 249806 & 1 & 249709 & 97\\  
& 9 & 111708 & $<0.5$ & 65957 & $<0.5$ & 45751 & $<0.5$ & 45741 & 10\\  
 \hline
& {\small total} & 387420489 & 100 & 365895947 & 100 & 21524542 & 100 & 21522228 & 2314\\  
\hline \hline
$n=19$ & 0 & 705882210 & 61 & 705882210 & 64 & 0 & 0 & 0 & 0\\  
& 1 & 167901165 & 14 & 143474476 & 13 & 24426689 & 40 & 24426686 & 3\\  
& 2 & 93739960 & 8 & 86429556 & 8 & 7310404 & 12 & 7310404 & 0\\  
& 3 & 78834558 & 7 & 70340339 & 6 & 8494219 & 13 & 8494219 & 0\\  
& 4 & 54399352 & 5 & 46833919 & 4 & 7565433 & 12 & 7565433 & 0\\  
& 5 & 32418938 & 3 & 26772240 & 2 & 5646698 & 9 & 5646698 & 0\\  
& 6 & 17384924 & 2 & 13585965 & 1 & 3798959 & 6 & 3798959 & 0\\  
& 7 & 8057214 & 1 & 5762779 & 1 & 2294435 & 4 & 2294435 & 0\\  
& 8 & 2889010 & $<0.5$ & 1743235 & $<0.5$ & 1145775 & 2 & 1145775 & 0\\  
& 9 & 678872 & $<0.5$ & 265089 & $<0.5$ & 413783 & 1 & 413783 & 0\\  
& 10 & 75264 & $<0.5$ & 0 & 0 & 75264 & $<0.5$ & 75264 & 0\\  
 \hline
& {\small total} & 1162261467 & 100 & 1101089808 & 100 & 61171659 & 100 & 61171656 & 3\\
	\hline
	\hline	
$n=20$ & 0 & 2142855232 & 61 & 2142855232 & 65 & 0 & 0 & 0 & 0\\	
&1 & 485799027 & 14 & 416406087 & 13 & 69392940 & 40 & 69392937 & 3	
\\
&2 & 272773366 & 8 & 251795904 & 8 & 20977462 & 12 & 20977186 & 276 \\
&3 & 230944816 & 7 & 206174661 & 6 & 24770155 & 14 & 24769338 & 817 \\
&4 & 161668972 & 5 & 139645685 & 4 & 22023287 & 13 & 22022351 & 936 \\
&5 & 98349528 & 3 & 82124392 & 3 & 16225136 & 9 & 16224109 & 1027 \\
&6 & 54140632 & 2 & 43539360 & 1 & 10601272 & 6 & 10600113 & 1159 \\
&7 & 26452876 & 1 & 20257612 & 1 & 6195264 & 4 & 6194176 & 1088 \\
&8 & 10452426 & $<0.5$ & 7480470 & $<0.5$ & 2971956 & 2 & 2971438 & 518 \\
&9 & 2927926 & $<0.5$ & 1915361 & $<0.5$ & 1012565 & 1 & 1012448 & 117 \\
&10 & 419600 & $<0.5$ & 247421 & $<0.5$ & 172179 & $<0.5$ & 172168 & 11 \\
 \hline
& {\small total} & 3486784401 & 100 & 3312442185 & 100 & 174342216 & 100 & 174336264 & 5952\\
	\hline
\hline
\end{supertabular}
\endgroup
	
\end{center}

\end{document}